\renewcommand{\algorithmicreturn}{\textbf{Return:}}
\def\bfa{{\mathbf a}}
\def\bfw{{\mathbf w}}
\def\bfx{{\mathbf x}}
\def\bfy{{\mathbf y}}
\def\bfz{{\mathbf z}}
\title{Sparse Recovery with Graph Constraints}
\author{\IEEEauthorblockN{
Meng Wang, %\IEEEauthorrefmark{2},
Weiyu Xu, %\IEEEauthorrefmark{1},
Enrique Mallada,
Ao Tang
}
%\IEEEauthorblockA{\IEEEauthorblockA{\IEEEauthorrefmark{1}
%California Institute of Technology, Pasadena, CA 91125, USA\\
%} \IEEEauthorrefmark{2}
%Cornell University, Ithaca, NY 14853, USA\\
%} \IEEEauthorblockA{\IEEEauthorrefmark{3}
%  ACCESS Linnaeus Centre, Electrical Engineering,
%  KTH, Stockholm, SE-100 44, Sweden}
\thanks
{M. Wang is with Rensselaer Polytechnic Institute.  W. Xu is with the University of Iowa, Iowa City, IA. E. Mallada and A. Tang are with Cornell University, Ithaca, NY.

Partial and preliminary results have appeared in
\cite{WXMT12}.
}
}
\newtheorem{theorem}{Theorem}
\newtheorem{lemma}{Lemma}
\newtheorem{cor}{Corollary}
\newtheorem{defi}{Definition}
\newtheorem{prop}{Proposition}
\begin{document}

\maketitle \thispagestyle{empty} \pagestyle{empty}

%%%%%%%%%%%%%%%%%%%%%%%%%%%%%%%%%%%%%%%%%%%%%%%%%%%%%%%%%%%%%%%%%%%%%%%%%%%%%%%%
\begin{abstract}
%Sparse recovery can recover high-dimensional sparse vectors from low-dimensional non-adaptive measurements. %Unlike existing results in sparse recovery,
Sparse recovery can recover sparse signals from a set of underdetermined linear measurements.
Motivated by the need to monitor large-scale networks from a limited number of measurements, this paper addresses the problem of recovering sparse signals in the presence of network topological  constraints. %which are characterized by a graph.
 Unlike conventional sparse recovery  where a measurement can contain any subset of the unknown variables, we use a graph to characterize the topological constraints and allow an additive measurement over nodes (unknown variables) only if they induce a connected subgraph.  %Instead of using random measurements corresponding to random walks on the graph as \cite{XMT11}, we provide explicit measurements construction for any given graph.
We provide explicit measurement constructions for several special graphs, and the number of measurements by our construction is less than that needed by existing random constructions. Moreover, our construction for a line network is provably optimal in the sense that it requires the minimum number of measurements. %and later extend the results to general graphs.
A  measurement construction algorithm for general graphs is also proposed and evaluated. For any given graph $G$ with $n$ nodes, we derive   bounds of the minimum number of measurements needed to recover any $k$-sparse vector over $G$ ($M^G_{k,n}$). Using the Erd\H{o}s-R\'enyi random graph %$G(n,p)$
as an example, we characterize the dependence of $M^G_{k,n}$ on the graph structure. % which is captured by edge probability $p$. %Surprisingly, the number of measurements constructed by our methods on a general graph is not far away from that in a complete graph, even though the number of edges in a general graph is much less than that in a complete graph.
Our study suggests that $M^G_{k,n}$ may serve as a graph connectivity metric.
\end{abstract}

\begin{IEEEkeywords}
sparse recovery, compressed sensing, topological graph constraints, measurement construction.
\end{IEEEkeywords}
%%%%%%%%%%%%%%%%%%%%%%%%%%%%%%%%%%%%%%%%%%%%%%%%%%%%%%%%%%%%%%%%%%%%%%%%%%%%%%%%
\section{Introduction} \label{sec:intro}

In the monitoring of engineering networks,   one often needs to extract network state parameters from indirect observations. In network tomography  \cite{BDPT02,BCGRS01,CHNY02,CBSK07,Duffield06,GR12,GS09,NT06,ZCB06},   since measuring each component (e.g., router) in the communication network directly can be operationally costly, if feasible at all, the goal is to infer system internal characteristics   such as link bandwidth utilizations and  link queueing delays from indirect aggregate measurements. %In sensor networks, compared with directly transmitting all sensor values to the fusion center, the total transmission cost can be significantly reduced if the sensor values are aggregated together along the forwarding paths \cite{HBRN08}.

In many cases, it is desirable to reduce the number of measurements without sacrificing the monitoring performance.  %the total number of aggregate measurements is much smaller than the number of components in a network. But we still hope to extract the status of each individual component with some prior knowledge of the unknown signal to recover.
For example, %instead of monitor all source-destination path delays,
network kriging \cite{CKC06} uses the fact  that different paths experience the same delay on the same link, and shows that by measuring delays on $n$ linearly independent paths, one can recover delays on all $n$ links in the network, and thus identify the delays on possibly exponential number of paths.
 %provided that we have prior knowledge of the status (the unknown signal to be recovered). For example,
%In practice, the total number of aggregate measurements we can take is small compared with the size of the network. % many as the objects in the networks, especially when the size of the network is large.
%However, we can still extract the most dominating elements of a high-dimensional signal from low-dimensional non-adaptive measurements. % with efficient recovering schemes.
Surprisingly, the number of path delay measurements needed to   recover $n$ link delays  can be further reduced by exploiting the fact that  only a small number of bottleneck links %in the communication networks
experience large delays, while the delay is approximately zero elsewhere. % reduced to $m$  with $m \ll n$ . This seems infeasible  since $n$ measurements are required to determine a general $n$-dimensional signal. However,
\textit{Sparse Recovery} theory promises that if the signal of interest is sparse, i.e., its most entries are zero, $m$ measurements are sufficient to correctly recover the signal, even though $m$ is much smaller than the signal dimension.
%For a sparse signal such as the vector of link delays, i.e. most entries are zero,  it can be  exactly recovered from $m$ measurements even though $m$ is much smaller than signal dimension. %For example, transmission delays in the communication networks can be represented by an approximately sparse signal, since only a small number of bottleneck links %in the communication networks
%experience large delays, while the delay is approximately zero elsewhere.
%\textit{Sparse Recovery} addresses the problem of recovering sparse  signals from a set of underdetermined linear measurements, and
Since many network parameters are sparse, e.g., link delays, these network tomography problems can be formulated as a sparse recovery problem with the goal of minimizing the number of indirect observations.

Sparse recovery has two different  but closely related problem formulations.
%There are two different but closely related problem formulations for recovering sparse high-dimensional signal from low-dimensional measurements.
One is \textit{Compressed Sensing} \cite{CaT05,CaT06,DoT05,Don06,BGIKS08}, where the signal is represented by a high-dimensional real vector, %denoting e.g., transmission delays, packet loss rates, etc.
and an aggregate measurement is the arithmetical sum of the corresponding real entries. The other is \textit{Group Testing} \cite{Dorfman43, DH00},
%In compressed sensing, %Since most objects are in normal condition, the corresponding values are all zero, and the goal is to identify a small number of non-zero values from grouped measurements.
%This topic with this formulation is known as Compressed Sensing or Sparse Recovery, which has been extensively studied recently. %Compressed sensing focuses on recovering a high-dimensional sparse signal vector from low-dimensional measurements, and has been extensively studied recently. \cite{CaT05,CaT06,DoT05,Don06,BGIKS08}
%In group testing,
where the high-dimensional signal is binary and a measurement is a logical disjunction (\textbf{OR}) on the corresponding binary values.
%In the second discrete formulation, the high-dimensional vector is logical. For example, in all-optical networks, the existence of switch node failure is denoted by `1', and '0' otherwise.  A measurement is a logical disjunction (\textbf{OR}) on the corresponding logical values, and thus is also a logical value. The goal is to identify the location of the small number of `1's among most `0's. This topic is known as group testing, which has been studied since 1943.

One key question in sparse recovery is to design a small number of non-adaptive measurements (either real or logical) such that all the vectors (either real or logical) up to certain sparsity (the support size of a vector) can be correctly recovered. Most existing results, however, rely critically on the assumption that any subset of the values can be aggregated together \cite{CaT05,Don06}, which is not realistic  in network monitoring problems where  %Measurement design in networks require
 only objects that form a path or a cycle on the graph \cite{ARK09,GR12}, or induce a connected subgraph can be aggregated together in the same measurement.  Only a few recent works consider graph topological constraints, either in group testing \cite{CKMS12} setup, especially motivated by link failure localization in all-optimal networks \cite{BTH11,CKMS12,HPWYC07,TWHR11,WHTJ10}, or in compressed sensing setup, with application in estimation of network parameters \cite{CPR07,HBRN08, XMT11}. % and group testing \cite{BTH11,CKMS12,HPWYC07,TWHR11,WHTJ10}.

%In
%One major difficulty in designing aggregate measurement for networks comes from the topological constraints.
%This paper addresses the sparse recovery problem with graph constraints. % mostly in the compresses sensing setup and sometimes provide some results in group testing for comparison.
We design measurements for  recovering sparse signals in the presence of graph topological constraints, and characterize the minimum number of measurements required to recover sparse signals when the possible measurements should satisfy graph constraints.
Though   motivated by network applications,  graph constraints abstractly models scenarios when certain elements cannot be measured together in a complex system. These constraints can result from various reasons, not necessarily lack of connectivity. Therefore, our results can be potentially useful to other applications besides network tomography.

Here are the main contributions of this paper.

\noindent{\bf (1)} We provide explicit measurement constructions for various graphs. Our construction for line networks is optimal in the sense that it requires the minimum number of measurements. For other special graphs, the number of  measurements by our construction is less than the existing estimates (e.g. \cite{CKMS12,XMT11}) of the measurement requirement. (Section \ref{sec:special})

\noindent{\bf (2)} %Starting with measurement constructions on special graphs,
For general graphs, we propose a measurement design guideline based on \textit{$r$-partition}, and further propose a simple measurement design algorithm. (Section \ref{sec:general}) %for general graphs and further show some of its properties. (Section \ref{sec:bound})

%\noindent{\bf (3)} A simple measurement design algorithm is proposed for general graphs, and %(Section \ref{sec:algo})
%we evaluate its performance both theoretically and numerically. (Section \ref{sec:algo} and \ref{sec:simu})
%Our problem formulation is very similar to that in \cite{XMT11}. \cite{XMT11} uses measurements based on random walks and shows that any $k$-sparse link vector (with no more than $k$ non-zero elements) can be recovered with overwhelming probability using $O(k \log(n))$ measurements. This paper considers node vectors instead of link vectors. Moreover, instead of random arguments, this paper provides constructive ways to design measurements satisfying graph constraints. We first construct measurements for several special networks and then extend the ideas to general networks and provide an algorithm to design the measurements for any give graph. Interesting, though the number of links in a general graph can be much smaller than that in the complete graph with the same number of nodes, the number of measurements designed by our algorithm to recover $k$-sparse vectors in fact is close to that needed in a complete graph.

\noindent{\bf (3)} Using  Erd\H{o}s-R\'enyi random graphs as an example, we characterize the dependence of the number of measurements for sparse recovery on the graph structure. % represented by the link probability $p$.  %in Erd\H{o}s-R\'enyi random graphs.
(Section \ref{sec:erdos})

Moreover,  we also propose measurement construction methods under additional practical  constraints such that the length of a measurement is bounded, or each measurement should pass one of a fixed set of nodes.   The issue of measurement error is also addressed. (Sections  \ref{sec:extension},\ref{sec:extension})

%\noindent{\bf (5)} Motivated by practical needs, we further propose measurement construction methods under additional graph constraints including measurement length constraints, and the requirement that each measurement should pass one of a fixed set of nodes. (Section \ref{sec:extension}) We also address the issue of sparse recovery when some critical measurements may contain errors.  (Section \ref{sec:huberror})

%We now start with Section \ref{sec:model} to introduce the model and problem formulation.

%The rest of the paper is organized as follows. We introduce the model and problem formulation in Section \ref{sec:model}. In Section \ref{sec:special}, we construct measurements on several special graphs. We then extend the obtained construction ideas to general graphs in Section \ref{sec:general}. We provide a design guideline to reduce the number of measurements, and also propose an algorithm with performance guarantee to construct the measurements on any given graph. In Section \ref{sec:simu}, we use numerical results to illustrate the effectiveness of our algorithm and the impact of graph topologies. We also study the sparse recovery performance when the measurements are erroneous and noisy. Section \ref{sec:con} concludes the paper. 
\section{Model and Problem Formulation}\label{sec:model}

We use a graph $G=(V,E)$ to represent the topological constraints, where $V$ denotes the set of nodes with cardinality $|V|=n$, and $E$ denotes the set of edges. Each node $i$ is associated with a real number $x_i$, and we say vector $\bfx=(x_i, i=1,...,n)$ is associated with $G$. $\bfx$ is the unknown signal to recover. %Given $\bfx$, let $T=\{i~|~ x_i \neq 0\}$ denote the support of $\bfx$, and let $\|\bfx\|_0=|T|$\footnote{The $\ell_p$-norm ($p\geq 1$) of $\bfx$ is $\|\bfx\|_p=(\sum_i |x_i|^p)^{1/p}$,  $\|\bfx\|_{\infty}=\max_i |x_i|$, and $\|\bfx\|_{0}=|\{i:x_i \neq 0\}|$.} denote the number of non-zero entries of $\bfx$.
We say $\bfx$ is a $k$-sparse vector if $\|\bfx\|_0=k$\footnote{The $\ell_p$-norm ($p\geq 1$) of $\bfx$ is $\|\bfx\|_p=(\sum_i |x_i|^p)^{1/p}$,  $\|\bfx\|_{\infty}=\max_i |x_i|$, and $\|\bfx\|_{0}=|\{i:x_i \neq 0\}|$.} i.e.,    the number of non-zero entries of $\bfx$ is $k$.

Note that in the monitoring of the link delays of a communication network represented by $\mathcal{N}_G$,  the graph model we consider is  the line graph \cite{HN60} (also known as interchange graph or edge graph) $L(\mathcal{N}_G)$ of $\mathcal{N}_G$. According to the definition of a line graph, every node in $G=L(\mathcal{N}_G)$ corresponds to a link in network $\mathcal{N}_G$, and the node value corresponds to the link delay. Two nodes in  $G$ are connected with an edge if and only if the corresponding links in network $\mathcal{N}_G$ are connected to the same router. See  Fig. \ref{fig:trans} (a) (b) as an example of a network and its line graph considered here.  %When we consider sensor network application, the nodes in $G$ represent sensor values, and the edges represent feasible direct communication between pairs of sensors. Thus, our graph model can be viewed as  the line graph of a communication network\footnote{In \cite{CKMS12,XMT11}, the authors have considered designing measurements that correspond to walks in the network directly, while our analysis focuses on the line graph of a given network.}, as well as an abstraction of a sensor network.
%Given $\bfx$, let $T=\{i~|~ x_i \neq 0\}$ denote the support of $\bfx$, and let $\|\bfx\|_0=|T|$\footnote{The $\ell_p$-norm ($p\geq 1$) of $\bfx$ is $\|\bfx\|_p=(\sum_i |x_i|^p)^{1/p}$,  $\|\bfx\|_{\infty}=\max_i |x_i|$, and $\|\bfx\|_{0}=|\{i:x_i \neq 0\}|$.} denote the number of non-zero entries of $\bfx$. We say $\bfx$ is a $k$-sparse vector if $\|\bfx\|_0=k$.
Since large delays only occur at a small number of bottleneck links, the link delays in a network can be represented by a sparse vector $\bfx$ associated with $G$.  %In sensor networks, we assume that there is a known estimate of sensor values, and the number of sensor measurements that are different from the estimate is small. E.g., when as sensor network is deployed to measure temperature at various locations, most values will be close to the temperature estimate obtained from either previous observations or other methods, while a small number of values are significantly different from the estimate. Then in a sensor network, we can use a sparse vector to represent the deviations of sensor values to a given estimate.

Let $S \subseteq V$ denote a subset of nodes in $G$. Let $E_S$ %=\{(u, v) \in
%E~|~ u \in S, v\in S\}$
denote the subset of edges with both ends in
$S$, then $G_S=(S, E_S)$ is the induced subgraph of $G$. We have the following two assumptions on graph topological constraints:\\
\noindent {\bf (A1)}: A set $S$ of nodes can be measured together in one measurement if and only if  $G_S$ is connected.

\noindent{\bf (A2)}:
 The measurement is an additive sum of values at the corresponding nodes.

  %For example, nodes 1, 2, 3, 5, and 6 in Fig. \ref{fig:example} can be measured together by one measurement, and the measurement $y$  satisfies $y=x_1+x_2+x_3+x_5+x_6$.  Take a practical example, measurements in the sensor networks where the nodes represent sensors and the edges represent feasible communication between sensors satisfy  the above assumptions. Since the set of nodes $S$ to be measured induces a connected subgraph, then we can find a sub spanning tree $T$ only containing all the nodes in $S$. The root of $T$, say node $u$, monitors the values of nodes in $S$. Every leaf node in $T$ passes its value to its parent on $T$. Every intermediate node in $T$ sums up the values obtained from its children, adds its own value and passes it to its parent. Root $u$ sums up the obtained values and adds its own value, and the measurement at root $u$ is indeed the sum of values corresponding all the nodes in $S$.

%\begin{figure}[h]
%\begin{center}
%  \includegraphics[scale=0.22]{./Figures/example1}
%  %\includegraphics[scale=0.25]{./Figures/dIdv}
%  \caption{Network Example}\label{fig:example}
%  \end{center}
%\end{figure}
%Note that %in Compressed Sensing, the measurement matrix $A$ can be any real matrix, however,

 (A2) follows from the additive property of many network characteristics\footnote{Compressed sensing can also be applied to cases where (A2) does not hold, e.g., the measurements can be nonlinear as in \cite{WWGZMM11,Blumensath10}.}, e.g. delays and packet loss rates \cite{GR12}. (A1) captures the topological constraints.
 In link delay monitoring problem where $G$ corresponds to the line graph of a communication network, (A1) is equivalent to  that the set of communication links  that correspond to nodes in $S$ should be connected in the communication network $\mathcal{N}_G$. If (A1) is satisfied, one can find a cycle that traverses each link in this set exactly twice (one for each direction).  One router in this cycle %can be chosen as an ``agent'' and
 sends a packet along this cycle and measures
 the  total transmission delay experienced by this packet.  This total delay is twice the sum of average delays on this set of links, and an average delay of a link is the average of its delays in both directions.    For example, Fig. \ref{fig:trans} shows   the correspondence between  assumptions (A1) (A2) in the line graph model and the monitoring in the original network.   % In sensor networks, a measurement obeying (A1) and (A2) is obtained  as follows. For a set $S$ of nodes that induce a connected subgraph, one sensor node $u$ in $S$ is chosen as an ``agent'' to monitor the sum of node values in $S$. Every node in $S$ obtains values from its children, if any, on the spanning tree rooted at $u$, sums them up with its own value and sends the sum to its parent. Then node $u$ will obtain the sum of node values in $S$.

Let vector $\bfy \in \mathcal{R}^m$   denote $m$ measurements satisfying (A1) and (A2).  $A$ is  an $m \times n$ measurement matrix %with its $i$th row corresponding to the $i$th measurement, i.e.,
with $A_{ij}=1$ ($i=1,...,m$, $j=1,...,n$) if and only if node $j$ is included in the $i$th measurement and $A_{ij}=0$ otherwise. We can write it in the compact form that  $\bfy=A\bfx$.
With the requirements  (A1) and (A2), $A$ must be a $0$-$1$ matrix, and for each row of $A$, the set of nodes that correspond to `1' must form a connected induced subgraph of $G$. For the graph in Fig. \ref{fig:example}, we can measure the sum of nodes in $S_1$ and $S_2$ by two separate measurements, and the %corresponding
measurement matrix is
\begin{equation}\nonumber
A=\left[ \begin{array}{cccccccc}
1 &  1 &1 &0 & 1 & 1 &0 & 0\\
0 & 0& 1 & 1 & 0 & 0& 1 & 1
\end{array}
 \right].
\end{equation}

  \begin{figure*}[ht]
\begin{center}
\begin{minipage}{4.6in}
%\begin{figure}
\begin{center}
  \includegraphics[scale=0.45]{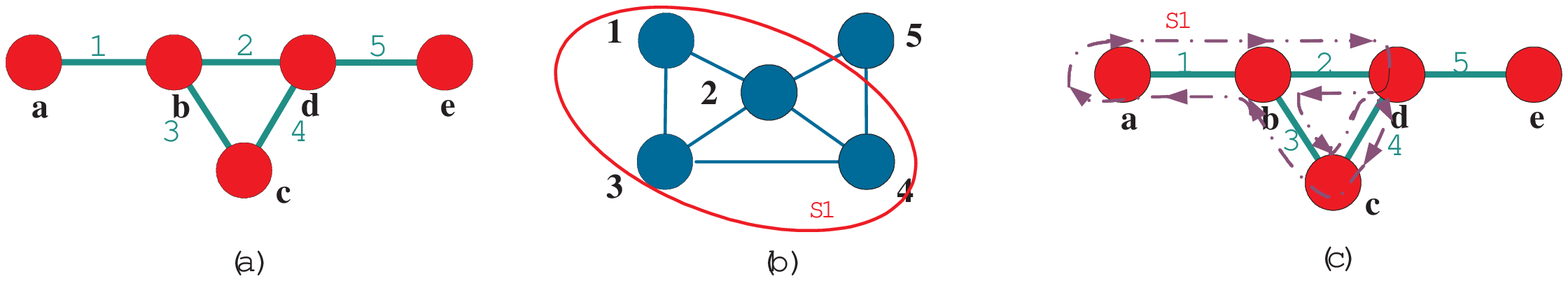}
  \caption{(a) Network $\mathcal{N}_G$ with five links, (b) Its line graph $L(\mathcal{N}_G)$ that we consider in this paper.  Since the links 1, 2, 3, and 4 are connected in  $\mathcal{N}_G$,  the induced subgraph of nodes 1, 2, 3, and 4 in  $L(\mathcal{N}_G)$  is connected. (c) Since the induced subgraph of nodes 1, 2, 3, and 4 is connected, one can find a cycle passing each of the corresponding links in network $\mathcal{N}_G$ exactly twice. }\label{fig:trans}
  \end{center}
\end{minipage}
\begin{minipage}{1.8in}
\begin{center}
  \includegraphics[scale=0.28]{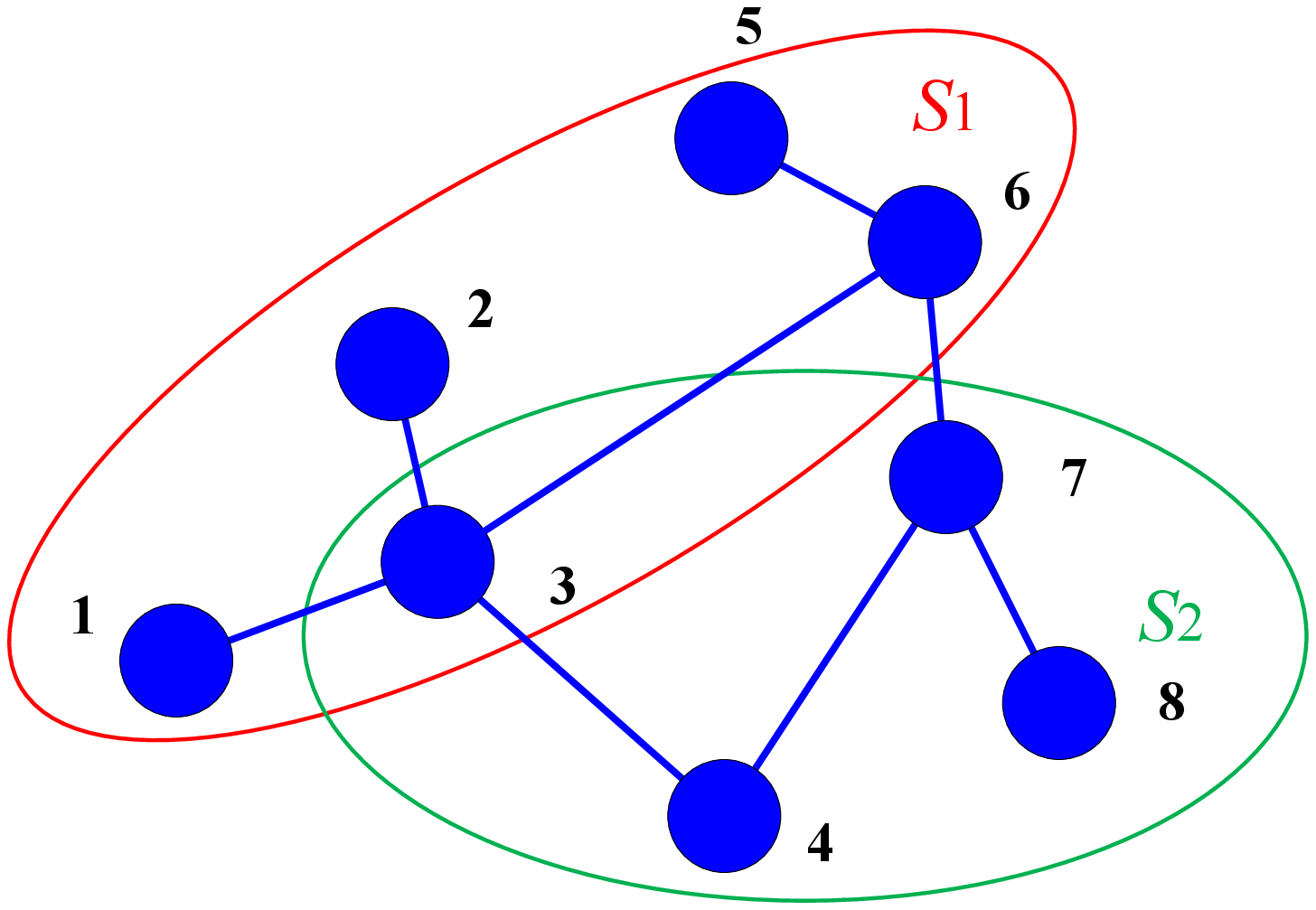}
  \caption{Graph example}\label{fig:example}
%\end{center}
%\end{figure}
\end{center}
\end{minipage}
%\hfill
\end{center}
\end{figure*}

 We say a measurement matrix $A$ can \emph{\textbf{identify all $k$-sparse vectors }}if and only if $A\bfx_1 \neq A\bfx_2$ for every two different vectors $\bfx_1$ and $\bfx_2$ that are \emph{\textbf{at most $k$-sparse}}. This definition indicates that every $k$-sparse vector $\bfx$  is the unique solution to the following $\ell_0$-minimization problem
 \begin{equation}\label{eqn:ell0}
 \min_{\bfz} \|\bfz\|_0  \quad \textrm{s.t. } A\bfz=A\bfx.
  \end{equation}
 Note (\ref{eqn:ell0}) is a combinatorial problem in general.
%   can be recovered from $A\bfx$ via $\ell_0$-minimization, which returns the sparsest vector among all vectors that can produce the same observation $A\bfx$.  %Sparse recovery theory promises that one can identify $n$-dimensional sparse vectors from $m$ measurements even if  $m \ll n$. %  provided that the vectors are sparse enough.
% The advantage of sparse recovery is that with the non-adaptive measurement matrix $A$, it can identify $n$-dimensional vectors from $m$ ($m \ll n$) measurements as long as the vectors are sparse.
  %If we take two measurements of nodes $\{1, 2,3, 5,6\}$ and nodes $\{3,4,7,8\}$

Then, given topological constraints represented by $G$, we want to design non-adaptive measurements satisfying (A1) and (A2) such that one can identify all $k$-sparse vector $\bfx$, and the total number of measurements is minimized.
Given a graph $G$ with $n$ nodes, let $M^G_{k,n}$ denote the minimum number of  measurements satisfying (A1) and (A2)  to identify all $k$-sparse vectors associated with $G$. The questions we would like to address in the paper are:
\begin{itemize}
\item Given  $G$, what is the corresponding $M_{k,n}^G$? What is the dependence of $M_{k,n}^G$ on $G$?
\item How can we explicitly design measurements such that the total number of measurements is close to $M_{k,n}^G$?
\end{itemize}

Though motivated by network applications, we use   graph $G$  to characterize the topological constraints and study a general problem of recovering sparse signals   from measurements satisfying graph constraints. For the majority of this paper, we assume a measurement is feasible as long as (A1) and (A2) are satisfied, and we attempt to minimize the total number of   measurements for identifying sparse signals.  Some additional constraints on the measurements such as bounded measurement length will be discussed %the such that a measurement only measures a small number of nodes
in Section \ref{sec:extension}.

%One important question to address in compressed sensing is that how many measurements are needed in order to identify every $k$-sparse vector $\bfx$ for given $k$? Or equivalently, how to design the measurement matrix $A$ such that for every two different vectors $\bfx_1$ and $\bfx_2$ that are at most $k$-sparse, at least one measurement is different, and $m$ is as small as possible? There is an explosion of research on this topic, however, very few papers consider graph constraints. Here we want to design the measurement matrix $A$ satisfying graph constraints such that the number of measurements needed to identify all $k$-sparse vectors is as small as possible.

%Given a graph $G$ with $n$ nodes, let $M^G_{k,n}$ denote the minimum number of non-adaptive measurements needed  to identify all $k$-sparse vectors associated with $G$. Since
If $G$ is a complete graph, then any subset of nodes  forms a connected subgraph, %then the graph constraints can be neglected and
and every $0$-$1$ matrix is a feasible measurement matrix. Then the problem reduces to the conventional compressed sensing where one wants to identify sparse signals from linear measurements.  %From compressed sensing theory \cite{CaT05}, the measurement matrix $A$ can identify all $k$-sparse vectors if and only if $A\bfx_1 \neq A\bfx_2$ for every two different $k$-sparse vectors $\bfx_1$ and $\bfx_2$, and equivalently, every non-zero vector $\bfz$ satisfying $A\bfz=0$ should have at least $2k+1$ non-zero entries.
Existing results \cite{CaT06,BGIKS08,XH07} show that with overwhelming probability a random $0$-$1$ $A$ matrix with $O( k\log (n/k))$ rows\footnote{We use the notations  $g(n)\in O(h(n))$, $g(n) \in \Omega(h(n))$, or $g(n)=\Theta(h(n))$ if as $n$ goes to infinity, $g(n) \leq c h(n)$, $g(n) \geq c h(n)$ or $c_1 h(n) \leq g(n) \leq c_2 h(n)$ eventually holds for some positive constants $c$, $c_1$ and $c_2$ respectively.}
can identify all $k$-sparse vectors $\bfx$ associated with a complete graph, and $\bfx$ is the unique solution to the $\ell_1$-minimization problem
\begin{equation}
\label{eqn:ell1}
\min_{\bfz} \|\bfz\|_1 \quad \textrm{s.t. } A\bfz =A \bfx.
\end{equation}
 (\ref{eqn:ell1}) can be recast as a linear program, and thus it is computationally more efficient to solve (\ref{eqn:ell1}) than  (\ref{eqn:ell0}). Thus, we have
% and we can recover the sparse vector by $\ell_1$-minimization, which returns the vector with the least $\ell_1$-norm among those that can produce the obtained measurements. Let $M^C_{k,n}$ denote the minimum number of non-adaptive measurements needed in a complete graph (i.e., without graph constraints) with $n$ nodes, then %we have
\begin{equation}\label{eqn:MC}
M^C_{k,n} = O( k\log (n/k)).
 \end{equation}
Note that $ O( k\log (n/k)) \ll n$ for   $k \ll n$, thus,   the number of measurements can be significantly reduced for sparse signals.
 %We  use (\ref{eqn:MC}) for the analysis of construction methods.
 Explicit constructions of measurement matrices for complete graphs also exist, e.g., \cite{AHSC09,BGIKS08,CM06,DeVore07,XH07}.  %provide explicit constructions of the measurement matrices to identify $k$-sparse vectors in complete graphs,
%but the number of measurements by construction in general is much greater than $O( k\log (n/k))$.
 We  use $f(k,n)$ to denote the number of measurements to recover $k$-sparse vectors associated with a complete graph of $n$ nodes by a particular measurement construction method. $f(k,n)$ varies for different construction methods, and clearly $f(k,n) \geq M^C_{k, n}$. Table \ref{tbl:notation} summarizes the key notations. % are summarized in

 \begin{table}
\caption{summary of key notations} \label{tbl:notation}
\begin{tabular}{|c|p{7.1cm}|} %
 \hline
Notation & Meaning\\
\hline
%\normalsize
$G_S$ & Subgraph of $G$ induced by $S$\\
\hline
%\normalsize
$M^G_{k,n}$ & Minimum number of measurements needed to identify  $k$-sparse vectors associated with $G$ of $n$ nodes.\\
\hline
%\normalsize
$M^C_{k,n}$ & Minimum number of  measurements needed to identify  $k$-sparse vectors associated with a complete graph of $n$ nodes.\\
\hline
%\normalsize
$f(k,n)$ & Number of measurements constructed to identify $k$-sparse vectors associated with a complete graph of $n$ nodes\\
%\hline
\hline
\end{tabular}
\end{table}

For a general graph $G$ that is not complete, existing results do not hold any more. Can we still achieve a significant reduction in the number of measurements? This is the focus of this paper. We remark here that in group testing with graph constraints, the requirements for the measurement matrix $A$ are the same, while group testing differs from compressed sensing only in that (1) $\bfx$ is a logical vector, and (2) the operations used in each group testing measurement are the logical ``AND'' and ``OR''. % each measurement is a logical summation in group testing.
     Here we consider  compressed sensing if not otherwise specified, and the main results are stated in theorems. We sometimes discuss group testing for comparison, and the results are stated in propositions. %sometimes provide results in group testing for comparison.
Note that for recovering $1$-sparse vectors, %the measurements of compressed sensing and group testing are essentially the same.
the numbers of measurements required by compressed sensing and group testing are the same.

\section{Sparse Recovery over Special Graphs}\label{sec:special}

In this section, we consider four kinds of special graphs: one-dimensional line/ring, ring with each node connecting to its four closest neighbors, two-dimensional grid and a tree.
 The measurement construction method for a line/ring is different from those for the other graphs, %and we discuss them separately. And
 and our construction is optimal (or near optimal) for a line (or ring). For other special graphs, we construct measurements based on the ``hub'' idea and will later extend it  to general graphs in Section \ref{sec:general}.

\subsection{Line and Ring}\label{sec:line}

First consider a line/ring as shown in Fig. \ref{fig:line}. Note that a line/ring is the  line graph of a line/ring network. When later comparing the results here with those in Section \ref{sec:ring4}, one can see that the number of measurements required for sparse recovery can be significantly different in two graphs that only differ from each other with a small number of edges.

In a line/ring, %there is not much freedom in the measurement design since
only consecutive nodes can be measured together from  (A1).  Recovering 1-sparse vectors associated with a line (or ring)  with $n$ nodes is considered in \cite{HPWYC07,TWHR11}, which shows that $\lceil\frac{n+1}{2}\rceil$ (or $\lceil\frac{n}{2}\rceil$) measurements
are both necessary and sufficient  in this case. Here, we consider recovering $k$-sparse vectors for $k\geq2$. %associated with the line/ring. %to recover  $1$-sparse vectors associated with a line (or ring) network with $n$ nodes. Therefore, $\Theta(n)$ measurements are required to recover even one non-zero element associated with a line/ring.
\begin{figure}
\begin{center}
  % Requires \usepackage{graphicx}\
    \includegraphics[scale=0.25]{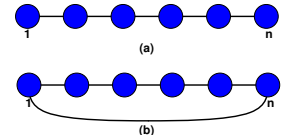}
    \caption{(a) line (b) ring }\label{fig:line}
\end{center}
\end{figure}
%The following result \cite{HPWYC07,TWHR11} is known for the line/ring.
%\begin{prop}\label{thm:line1}
% To recover a $1$-sparse vector associated with a line (ring) network with $n$ nodes, $\lceil\frac{n+1}{2}\rceil$ ($\lceil\frac{n}{2}\rceil$) measurements
%are both necessary and sufficient respectively.
%\end{prop}

%We next consider recovering $k$-sparse vectors ($k\geq2$) associated with the line/ring. We construct $n+1-\lfloor\frac{n+1}{k+1}\rfloor$ measurements as follows. Let $t=\lfloor \frac{n+1}{k+1}\rfloor$.
Our construction %of measurements for recovering $k$-sparse vectors
works as follows. Given $k$ and $n$, let $t=\lfloor \frac{n+1}{k+1}\rfloor$. We construct $n+1-\lfloor\frac{n+1}{k+1}\rfloor$ measurements with the $i$th measurement passing all the nodes from $i$ to $i+t-1$. Let $A^{(n+1-t)\times n}$ be the   measurement matrix, then its $i$th row has `1's from  entry $i$ to entry $i+t-1$ and `0's elsewhere. For example, when $k=3$ and $n=11$, we have $t=3$, and %the measurement matrix
\begin{equation}\label{eqn:Aex}
 {\small A= \left[ \begin{array}{ccccccccccc}  1 & 1 & 1 & 0& 0& 0 & 0& 0& 0 & 0& 0\\0 &1 & 1 & 1 & 0& 0& 0 & 0& 0& 0 & 0\\ 0 &0&1 & 1 & 1 & 0& 0& 0 & 0& 0& 0\\ 0 & 0 &0&1 & 1 & 1 & 0& 0& 0 & 0& 0 \\0 & 0 &0 & 0&1 & 1 & 1 & 0& 0& 0 & 0\\0 &0 & 0 &0 & 0&1 & 1 & 1 & 0& 0& 0 \\0& 0 &0 & 0 &0 & 0&1 & 1 & 1 & 0& 0\\0 &0& 0 &0 & 0 &0 & 0&1 & 1 & 1 & 0\\0 &0 &0& 0 &0 & 0 &0 & 0&1 & 1 & 1 \end{array}  \right].  } \end{equation}

 %We shall show that such measurement construction can recover $k$-sparse vectors  associated with the line/ring. Moreover,
 Let $M^L_{k,n}$ and $M^R_{k,n}$ denote the minimum number of measurements required to recover $k$-sparse vectors  in a line/ring respectively.
% Therefore we have, %, and  is also the minimum number of measurements needed to recover $k$-sparse vectors associated with a line network, and is less than the minimum number of measurements needed to recover $k$-sparse vectors associated with a line network% $\min (i+t-1, n)$. %for $1\leq i\leq kt$, and the last measurement goes through nodes $kt+1$ to $n$.
%Let $M^L_{k,n}$ and $M^R_{k,n}$ denote the minimum number of measurements required in a line/ring respectively.
 We have

\begin{theorem}\label{thm:linekcs}
%To recover up to $k \geq 2$ errors in a line network with $n$
%network nodes,
Our constructed $n+1-\lfloor\frac{n+1}{k+1}\rfloor$  measurements  can identify all $k$-sparse vectors associated with a line/ring with $n$ nodes. Moreover, the sparse signals can be recovered from $\ell_1$-minimization (\ref{eqn:ell1}). Furthermore, it holds that %This number is the minimum number of measurements needed to recover $k$-sparse vectors  in a line network, and also differs from  the minimum number of measurements needed to recover $k$-sparse vectors  in a ring network by at most 1, i.e., % (\ref{eqn:tight}) holds.%Therefore $M^L_{k,n} \leq k\lceil \frac{n}{k+1}\rceil+1$ and $M^R_{k,n} \leq k\lceil \frac{n}{k+1}\rceil+1$.
\begin{equation}\label{eqn:tight}
M^L_{k,n}=n+1-\lfloor\frac{n+1}{k+1}\rfloor \leq M^R_{k,n}+1.
 \end{equation}
\end{theorem}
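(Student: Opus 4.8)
The statement splits into three parts: (a) the $n+1-\lfloor\frac{n+1}{k+1}\rfloor$ windows identify every $k$-sparse vector (and do so via $\ell_1$), (b) this count is optimal, i.e. $M^L_{k,n}=n+1-\lfloor\frac{n+1}{k+1}\rfloor$, and (c) $M^L_{k,n}\le M^R_{k,n}+1$. The unifying device I would use throughout is the partial-sum (prefix) reformulation: associate to $\bfz$ the prefix vector $Q_0=0$, $Q_j=\sum_{i\le j}z_i$, so that an interval measurement $[a,b]$ reads $Q_b-Q_{a-1}$. Thus every interval is an edge $\{a-1,b\}$ of a graph $H$ on the $n+1$ prefix-nodes $\{0,\dots,n\}$, and $\bfz\in\ker A$ iff $Q$ is constant on each connected component of $H$; the support of $\bfz$ equals the number of gaps $j$ with $Q_{j-1}\ne Q_j$.

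For sufficiency (part (a), $\ell_0$), apply this to the constructed $A$, whose edges are exactly $\{c,c+t\}$ with $t=\lfloor\frac{n+1}{k+1}\rfloor$. Subtracting consecutive window equations gives $z_{i+t}=z_i$, so $\ker A$ consists of the $t$-periodic vectors whose one period sums to $0$, and the components of $H$ are the $t$ residue classes mod $t$. A nonzero such $\bfz$ has at least two nonzero residues, each occurring $q$ or $q+1$ times where $n=qt+s$, so its minimal support is $2q$ or $2q+1$. A short check of the arithmetic shows the definition of $t$ forces this to exceed $2k$ (the extreme case $n+1=(k+1)t$ gives exactly $2k+1$), so no nonzero $2k$-sparse kernel vector exists and all $k$-sparse vectors are identified. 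For the $\ell_1$ claim I would verify the null-space property: writing a kernel vector through its period $w$, each magnitude $|w_r|$ is repeated $c_r\in\{q,q+1\}$ times with $q\ge k$ (which follows from $n\ge k$). For any $|T|\le k$ one has $\|\bfz_T\|_1\le k\,v_{\max}$ with $v_{\max}=\max_r|w_r|$, while $\|\bfz_{T^c}\|_1\ge q\sum_r|w_r|-\|\bfz_T\|_1\ge 2k\,v_{\max}-\|\bfz_T\|_1$, using $\sum_r|w_r|\ge 2v_{\max}$ (period sums to zero); combining gives $\|\bfz_T\|_1<\|\bfz_{T^c}\|_1$, the strictness coming from the same margin that makes the support exceed $2k$.

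The heart of the theorem is the matching lower bound (part (b)). For an arbitrary interval scheme with $m$ rows, $H$ has at most $m$ edges and hence at least $n+1-m$ components. Pass to the quotient multigraph $\Gamma$ whose vertices are the components and whose edges are the transitions --- the gaps $j$ with $j-1$ and $j$ in different components. A cut of $\Gamma$ of size $s$ gives a union of components $U$ (put the side of prefix-node $0$ on the zero side) whose boundary, and hence the support of the associated kernel vector, equals $s$; so identifiability forces the minimum cut of $\Gamma$ to be at least $2k+1$, whence every degree of $\Gamma$ is at least $2k+1$. The key observation --- and the step I expect to be the main obstacle, since a plain min-degree/averaging bound is off by one --- is that the list of components read along the path $0,1,\dots,n$ is an Eulerian trail of $\Gamma$, so $\Gamma$ has at most two odd-degree vertices. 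Every even degree is then at least $2k+2$, giving $2B=\sum_v\deg_\Gamma(v)\ge (2k+2)C-2$, i.e. the number of transitions satisfies $B\ge (k+1)C-1$. If $m\le n-t$ then $C\ge t+1$, and since $B\le n$ we obtain $n+1\ge (k+1)(t+1)$, contradicting $t=\lfloor\frac{n+1}{k+1}\rfloor$ (which gives $(k+1)(t+1)>n+1$). Hence $m\ge n+1-t$, and with part (a) this yields $M^L_{k,n}=n+1-\lfloor\frac{n+1}{k+1}\rfloor$.

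Finally, for the ring comparison (part (c)) I would convert an optimal ring scheme with $M^R_{k,n}$ arcs into a line scheme. On the ring the connected sets are arcs, either ordinary intervals or arcs wrapping through the edge $\{n,1\}$, and the complement of a wrapping arc is an ordinary interval. Keep every non-wrapping arc, replace each wrapping arc by its complementary interval, and append the single measurement $[1,n]$. All of these are line-feasible and number at most $M^R_{k,n}+1$. Because the total sum $Q_n$ is now available, each wrapping-arc reading equals $Q_n$ minus the reading of its complement, so the new line scheme determines every ring measurement; hence it inherits identifiability of all $k$-sparse vectors, giving $M^L_{k,n}\le M^R_{k,n}+1$.
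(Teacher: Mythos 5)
Your proposal is correct, and while your part (a) essentially coincides with the paper's argument, your parts (b) and (c) take genuinely different routes. For (a), the paper proves the same null-space property from the same $t$-periodicity of kernel vectors and the bound $\|\bfw_T\|_1\le k\max_j|w_j|\le\frac{1}{2}\sum_{j=1}^{kt}|w_j|<\|\bfw\|_1/2$, so there is no real difference there. For the lower bound (b), the paper also passes to the difference/prefix picture---it forms $\bm\beta^i=\bm\alpha^i-\bm\alpha^{i-1}$ and the graph $G_{eq}$ on $n+1$ nodes, which is exactly your graph $H$---but then argues linear-algebraically rather than via cuts: every $2k$ columns of $A$ being independent makes every $k$ columns of the difference matrix $P$ independent, while the columns indexed by a connected component of $G_{eq}$ sum to zero; hence each component has at least $k+1$ nodes, so there are at most $\lfloor\frac{n+1}{k+1}\rfloor$ components and at least $n+1-\lfloor\frac{n+1}{k+1}\rfloor$ edges. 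This reaches the conclusion in one step and never needs your Eulerian-parity correction. Your route (min cut of the quotient multigraph $\ge 2k+1$ because cut sizes are supports of kernel vectors, then at most two odd degrees along the trail $0,1,\dots,n$, hence $B\ge(k+1)C-1$) is sound but is more work for the same count; its virtue is that it makes the combinatorial meaning of identifiability explicit, and in fact a hybrid of the two gives the shortest proof: a component with $s\le k$ prefix-nodes has quotient-degree at most $2s\le 2k$, contradicting your min-cut bound, which recovers the paper's component-size lemma with no linear algebra and no parity argument. For (c), the paper does not reduce the ring to the line at all; it reruns the component argument on the cyclic differences to get $M^R_{k,n}\ge n-\lfloor\frac{n}{k+1}\rfloor$ and compares formulas. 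Your reduction---keep non-wrapping arcs, replace each wrapping arc by its complementary interval, append the total-sum measurement---is genuinely different and arguably cleaner: it proves the transfer $M^L_{k,n}\le M^R_{k,n}+1$ directly for all $n,k$ without computing any ring lower bound, at the price of not yielding the explicit estimate of $M^R_{k,n}$ that the paper's method provides as a by-product.
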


In our early work \cite{WXMT12} (Theorem 1), we have proved that our constructed $n+1-\lfloor\frac{n+1}{k+1}\rfloor$ measurements can identify $k$-sparse signals, which indicate that one can recover the signal via solving $\ell_0$-minimization (\ref{eqn:ell0}). (\ref{eqn:ell0})  is in general  computationally inefficient to solve.  Here we further show that with these measurements, one can recover the signal via solving a computationally efficient  $\ell_1$-minimization (\ref{eqn:ell1}). %$\ell_1$-minimization can be recast as a linear program, and thus solved efficiently.

Furthermore, (\ref{eqn:tight}) indicates that our construction is optimal for a line   in the sense that the number of measurements is equal to the minimum needed to recover $k$-sparse vectors. %no other method can identify associated $k$-sparse vectors with a smaller number of measurements. Similarly,
%It is  near optimal
For a ring, this number  is no more than the minimum   plus one. This improves over our previous result, which does not have optimality guarantee.

\begin{proof}
(of Theorem \ref{thm:linekcs})
\textit{We first prove that one can recover $k$-sparse signals from our constructed $n+1-\lfloor\frac{n+1}{k+1}\rfloor$ measurements via $\ell_1$-minimization (\ref{eqn:ell1}).}

Let $A$ be the measurement matrix.  When $t=1$, $A$ is the identity matrix, and the statement holds trivially. So we only consider the case $t \geq 2$. %Note that in this case $n \geq kt+t-1$.
It is well known in compressed sensing (see e.g., \cite{EB02}) that a $k$-sparse vector $\bfx$ can be recovered from $\ell_1$-minimization, i.e., it is the unique solution to (\ref{eqn:ell1}), if and only if for every vector $\bfw \neq \bm 0$ such that $A\bfw=\bm 0$, and for every set $T \subseteq \{1,...,n\}$ with $|T| \leq k$, it holds that
\begin{equation}\label{eqn:nullspace}
\|\bfw_{T}\|_1 < \|\bfw\|_1 /2,
\end{equation}
where $\bfw_T$ is a subvector of $\bfw$ with entry indices in $T$. Thus, we only need to prove that (\ref{eqn:nullspace}) holds for our constructed $A$.

From the construction of $A$, one can  check that for every $\bfw \neq \bm 0$ such that $A\bfw=\bm 0$, and for every $j \in \{1,...,n\}$, %it holds that
\begin{equation}\label{eqn:equal}
w_j=w_{j-\lfloor\frac{j}{t}\rfloor t}
\end{equation}
holds. For example,
\begin{equation}\nonumber
w_1=w_{t+1}=w_{2t+1}=\cdots =w_{(k-1)t+1}.
\end{equation}
%and
%\begin{equation}\nonumber
%w_t=w_{2t}=\cdots=w_{kt},
%\end{equation}

Let $w^* := \arg \max_{j=1}^{t} |w_j|$. From (\ref{eqn:equal}), it also holds that
\begin{equation}\label{eqn:wstar}
w^* =\arg \max_{j=1}^n |w_j|.
\end{equation}
From  the first row of $A$, we have
\begin{equation}\label{eqn:one}
\sum_{i=1}^{t} w_i =0,
\end{equation}
From the definition of $w^*$, (\ref{eqn:one}) implies
\begin{equation}\label{eqn:half}
w^* \leq \frac{1}{2}\sum_{j=1}^t |w_j|.
\end{equation}

 Since $n \geq kt+t-1$ from the definition of $t$,  we have
 \begin{equation}\label{eqn:positive}
\sum_{j=kt+1}^{n} |w_j| \geq \sum_{j=kt+1}^{kt+t-1} |w_j|=\sum_{j=1}^{t-1} |w_j| >0,
 \end{equation}
 where the equality follows from (\ref{eqn:equal}). The last inequality holds since $w_j \neq 0$ for at least one $j$ in $1,...,t-1$. Suppose  $w_j=0$ for all $j=1,..., t-1$, then $w_t=0$ from (\ref{eqn:one}), which then leads to $\bfw=\bm 0$ through (\ref{eqn:equal}), contradicting the fact that $\bfw\neq \bm 0$.

Now consider any $T$ with $|T|\leq k$, combining (\ref{eqn:equal}), (\ref{eqn:wstar}), (\ref{eqn:half}),  and (\ref{eqn:positive}), we have
\begin{equation}\label{eqn:terms}\nonumber
%\|\bfw_T\|_1 & \leq & k w^* \label{eqn:wstardf}\\
%& \leq & \frac{k}{2}\sum_{j=1}^t |w_j| \label{eqn:wstardf2}\\
%& =& \frac{1}{2}\sum_{j=1}^{kt} |w_j| \label{eqn:ext}\\
%<\|\bfw\|_1/2,\label{eqn:terms}
\|\bfw_T\|_1  \leq  k w^*  \leq  \frac{k}{2}\sum_{j=1}^t |w_j|
 = \frac{1}{2}\sum_{j=1}^{kt} |w_j|
<\|\bfw\|_1/2.
\end{equation}
%where (\ref{eqn:wstardf}) follows from (\ref{eqn:wstar}), (\ref{eqn:wstardf2}) follows from (\ref{eqn:half}), (\ref{eqn:ext}) follows from (\ref{eqn:equal}).
Thus, $\bfx$ can be correctly recovered via $\ell_1$-minimization (\ref{eqn:ell1}).

\textit{We next prove that the number of measurements needed to identify $k$-sparse vectors associated with a line (or ring)  is at least $ n+1-\lfloor \frac{n+1}{k+1}\rfloor$ (or  $n-\lfloor \frac{n}{k+1}\rfloor$.)}

%We start with the line network with $n$ nodes.
 Let $A^{m \times n}$ denote a measurement matrix with which one can recover $k$-sparse vectors associated with a line of $n$ nodes. Then every $2k$ columns of $A$ must be linearly independent. %Let $m$ denote the number of rows in $A$, i.e., $m$ is the number of measurements.
 We will prove that $m \geq n+1-\lfloor \frac{n+1}{k+1}\rfloor$.

Let $\bm \alpha^i$ denote the $i$th column of $A$. Define $\bm \beta^1=\bm \alpha^1$, $\bm \beta ^i= \bm \alpha^{i}-\bm \alpha^{i-1}$ for all $2 \leq i \leq n$, and $\bm \beta^{n+1}=-\bm \alpha^n$. Define matrix $P^{m\times (n+1)}=(\bm \beta ^i, 1\leq i \leq n+1)$. Since $A$ is a measurement matrix for a line  network, %$P$ only has $`0'$, $`1'$ and $`-1'$ entries. Moreover,
each row of $P$ contains  one $`1'$ entry and   one $`-1'$ entry, and all the other entries must be $`0'$s.

Given $P$, we construct a graph $G_{eq}$ with $n+1$ nodes as follows. For every row $i$ of $P$, %if $P_{ij}=1$ and $P_{ik}=-1$ for some integers $j$ and $k$ between $1$ and $n$, then
there is an edge $(j, k)$ in $G_{eq}$, where $P_{ij}=1$ and $P_{ik}=-1$. Then $G_{eq}$ contains $m$ edges, and $P$ can be viewed as the transpose of an  oriented incidence matrix of $G_{eq}$. Let $S$ denote the set of indices of nodes in a component of $G_{eq}$, then one can check that %the sum of the columns of $P$ corresponding to all the indices in $S$ is a zero vector.
 \begin{equation}\label{eqn:comgeq} \sum_{ i\in S} \bm \beta^i= \bm 0.
 \end{equation}
  Since every $2k$ columns of $A$ are linearly independent, every $k$ columns of $P$ are linearly independent, which then implies that the sum of any $k$ columns of $P$ is not a zero vector.
 %Since the sum of any $k$ columns of $P$ is a non-zero vector, then
 With (\ref{eqn:comgeq}), we know that any component of $G_{eq}$ should have at least $k+1$ nodes. %Thus, $G_E$ has at most $\lfloor \frac{n+1}{k+1}\rfloor$ components.
Since a component with $r$ nodes contains at least $r-1$ edges, and $G_{eq}$ has at most $\lfloor \frac{n+1}{k+1}\rfloor$ components, then $G_{eq}$ contains at least $n+1-\lfloor \frac{n+1}{k+1}\rfloor$ edges. The claim follows.

We next consider the ring.  Let $\tilde{A}$ denote the measurement matrix with which one can recover $k$-sparse vectors on a ring  with $n$ nodes. Let $\bm \tilde{\alpha}^i$ denote the $i$th column of $\tilde{A}$. Define $\bm \tilde{\beta} ^1= \bm \tilde{\alpha}^{1}-\bm \tilde{\alpha}^n$, and $\bm \tilde{\beta} ^i= \bm \tilde{\alpha}^{i}-\bm \tilde{\alpha}^{i-1}$ for all $2 \leq i \leq n$. Define matrix $\tilde{P}^{m\times n}=(\bm \tilde{ \beta} ^i, 1\leq i \leq n)$. %Similarly one can argue that the sum of any $k$ columns of $\tilde{P}$ is a non-zero vector.
Similarly, we construct a graph $\tilde{G}_{eq}$ with $n$ nodes based on $\tilde{P}$, and each component of $\tilde{G}_{eq}$ should have at least $k+1$ nodes. Thus, $\tilde{G}_{eq}$ contains at most $\lfloor \frac{n}{k+1}\rfloor$ components and  at least $n-\lfloor \frac{n}{k+1}\rfloor$ edges. Then
\begin{equation}\nonumber
M^R_{k,n} \geq n-\lfloor \frac{n}{k+1}\rfloor \geq n-\lfloor \frac{n+1}{k+1}\rfloor,
\end{equation}
and the inequality of (\ref{eqn:tight}) holds.
\end{proof}

We can save about $\lfloor \frac{n+1}{k+1}\rfloor-1$ measurements but
still be able to recover $k$-sparse vectors in a line/ring
via compressed sensing. But for group testing on a line/ring,  $n$ measurements are necessary to recover more than one non-zero element. The key is that every node should be the \textit{endpoint} at least twice, where the endpoints are the nodes at the beginning and the end of a measurement. %The endpoints of a measurement can be a same node.
If node $u$ is an endpoint for at most once, then it is always measured together with one of its neighbors, say $v$, if ever measured. %or always measured together with at least one of its neighbors, say node $v$.
Then when $v$ is `1', we cannot determine the value of $u$, either '1' or '0'. %If $u$ is an endpoint for only once, then the particular measurement also measures one of its neighbors, and we still cannot determine the value of $i$ if the neighbor is `1'.  or both $i$ and  of a measurement thus there is no saving in the number of measurements with group testing when $k>1$. %
Therefore, to recover more than one non-zero element, we need at least $2n$ endpoints, and thus $n$ measurements. %Moreover, we have the following proposition. We skip the proof here and  please refer to \cite{} for details.

\subsection{Ring with nodes connecting to four closest neighbors}\label{sec:ring4}
% We know from Section \ref{sec:line} that
%$n-\lfloor \frac{n}{k+1}\rfloor$ measurements are necessary to recover $k$ non-zero element associated with a ring
%network. Now
Consider a graph with each node directly connecting to its
four closest neighbors as in Fig. \ref{fig:ring4} (a), denoted by $\mathcal{G}^4$. $\mathcal{G}^4$ is important to the study of small-world networks \cite{WS98}.
$\mathcal{G}^4$ has $n$ more edges than the ring, but we will show
that %by only adding $n$ edges to the ring network
the number of
measurements required by compressed sensing to recover $k$-sparse vectors associated with $\mathcal{G}^4$ significantly reduces from $\Theta(n)$ to $O(k \log(n/k))$.
The main idea in the measurement construction is referred to as `` the use of a hub''.

\begin{figure}[h]
\begin{center}
\includegraphics[scale=0.25]{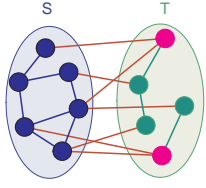}
  \caption{Hub $S$ for $T$}\label{fig:hub}
  \end{center}
\end{figure}

\begin{defi}
Given $G=(V,E)$, $S \subseteq V$ is a \textbf{\textit{hub}} for  $T\subseteq V$ if $G_S$ is connected, and $\forall u \in T$, $\exists s \in S$ s.t. $(u,s) \in E$.
\end{defi}

We first take one measurement of the sum of nodes in $S$, denoted by $s$. For any subset $W$ of $T$, e.g., the pink nodes in  Fig. \ref{fig:hub}, $S\cup W$ induces a connected subgraph from the hub definition and thus can be measured by one measurement. To measure the sum of nodes in $W$, we first measure nodes in $S \cup W$ and then subtract $s$ from the sum. Therefore we can apply the measurement constructions for complete graphs on $T$ with this simple modification, and that requires only one additional measurement for the hub $S$. Thus,

\begin{theorem}\label{thm:hub}
With hub $S$, $M^C_{k,|T|}+1$ measurements are enough to recover $k$-sparse vectors associated with $T$.
%$M^C_{k, \lfloor n/2 \rfloor }+1$ measurements are enough to recover $\bfx_{\textrm{e}}\in \mathcal{R}^{\lfloor n/2 \rfloor}$, where the additional one measurement measures $s_{\textrm{o}}$.
\end{theorem}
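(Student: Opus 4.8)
The plan is to lift an optimal \emph{complete-graph} measurement scheme on the node set $T$ into a feasible scheme on $G$ at the cost of a single extra measurement, and then argue that the lifted measurements carry exactly the same information as the original ones. First I would invoke the definition of $M^C_{k,|T|}$: by minimality there is a $0$-$1$ measurement matrix $B$ with $M^C_{k,|T|}$ rows that identifies all $k$-sparse vectors on the $|T|$ nodes of $T$, where each row of $B$ is topologically unconstrained and simply selects some subset $W_i \subseteq T$ to be summed. The obstruction is that a bare subset $W_i$ need not induce a connected subgraph of $G$, so $\sum_{v \in W_i} x_v$ is not directly measurable under (A1).

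The core step is to repair each infeasible measurement using the hub. For every row $i$ of $B$ I would measure the node set $S \cup W_i$ rather than $W_i$. By the hub definition $G_S$ is connected and every $u \in W_i \subseteq T$ is adjacent to some $s \in S$, so $G_{S \cup W_i}$ is connected and the measurement is feasible under (A1) and (A2). In addition I would take the single measurement of $S$ itself, feasible since $G_S$ is connected, yielding the value $s = \sum_{v \in S} x_v$. This is $M^C_{k,|T|} + 1$ measurements in total, matching the claimed count.

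Next I would recover the complete-graph measurement vector by subtraction. Assuming $S \cap T = \emptyset$, the $i$th lifted measurement returns $s + \sum_{v \in W_i} x_v$, and subtracting the separately measured value $s$ yields exactly $\sum_{v \in W_i} x_v$, i.e. the $i$th entry of $B\bfx_T$. Hence from the $M^C_{k,|T|}+1$ feasible measurements one computes $B\bfx_T$ exactly. Since $B$ identifies all $k$-sparse vectors and $\bfx_T$ is $k$-sparse, $\bfx_T$ is the unique $k$-sparse preimage of $B\bfx_T$ under $B$, so it is recovered, which establishes the theorem.

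The step I expect to require the most care is the bookkeeping around $S$ and $T$: one must check that the hub contribution $s$ is a single fixed constant appearing identically in every lifted measurement so that it cancels cleanly, and that $S$ and $T$ interact as assumed. In the disjoint case the cancellation is immediate; if $S$ and $T$ are permitted to overlap, the same construction instead recovers $\sum_{v \in W_i \setminus S} x_v$, so one would apply $B$ to $T \setminus S$ and treat the entries on $S$ separately. Everything else is routine: feasibility of each lifted set is an immediate consequence of the hub definition, and the final identifiability is inherited directly from the defining property of $B$.
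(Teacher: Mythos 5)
Your proposal is correct and follows essentially the same argument as the paper: measure the hub $S$ once to obtain $s$, lift each complete-graph measurement $W_i \subseteq T$ to the feasible connected set $S \cup W_i$, and subtract $s$ to recover $B\bfx_T$, inheriting identifiability from the complete-graph construction. Your extra care about possible overlap of $S$ and $T$ goes slightly beyond the paper, which implicitly treats them as disjoint, but the core reasoning is identical.
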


The significance of Theorem \ref{thm:hub} is that $G_T$  is not necessarily a complete subgraph, i.e., a clique, and it can even be disconnected. As long as there exists a hub $S$, %the number of measurements needed to recover sparse vectors associated with $T$ is close to that needed on a complete graph with the same number of nodes,
the measurement construction for a complete graph with the same number of nodes can be applied to $T$ with   simple modification. % discussed above, and we only need one additional measurement for the hub.
Our later results rely heavily on Theorem \ref{thm:hub}. % we can still  measure the sum of any subset of nodes in $T$ with the help of the hub $S$.
%Theorem \ref{thm:hub} does not require any
%Throughout the paper, given a graph $G=(V,E)$, we say  $S$ forms a \textbf{\textit{hub}} for $U$ if $G_S$ is connected, and for every $u$ in $U$, there exists $s$ in $S$ such that $(u,s) \in E$.

\begin{figure}
\centering
\begin{tabular}{c c}
%\includegraphics[width=0.5\linewidth]{./Figures/enum3}
%\includegraphics[width=0.4\linewidth]{./Figures/ring4}
%&
%\includegraphics[width=0.2\linewidth]{./Figures/ring4_2}
%&
%\includegraphics[width=0.5\linewidth]{./Figures/enum2}
\includegraphics[width=0.4\linewidth, height=0.3\linewidth]{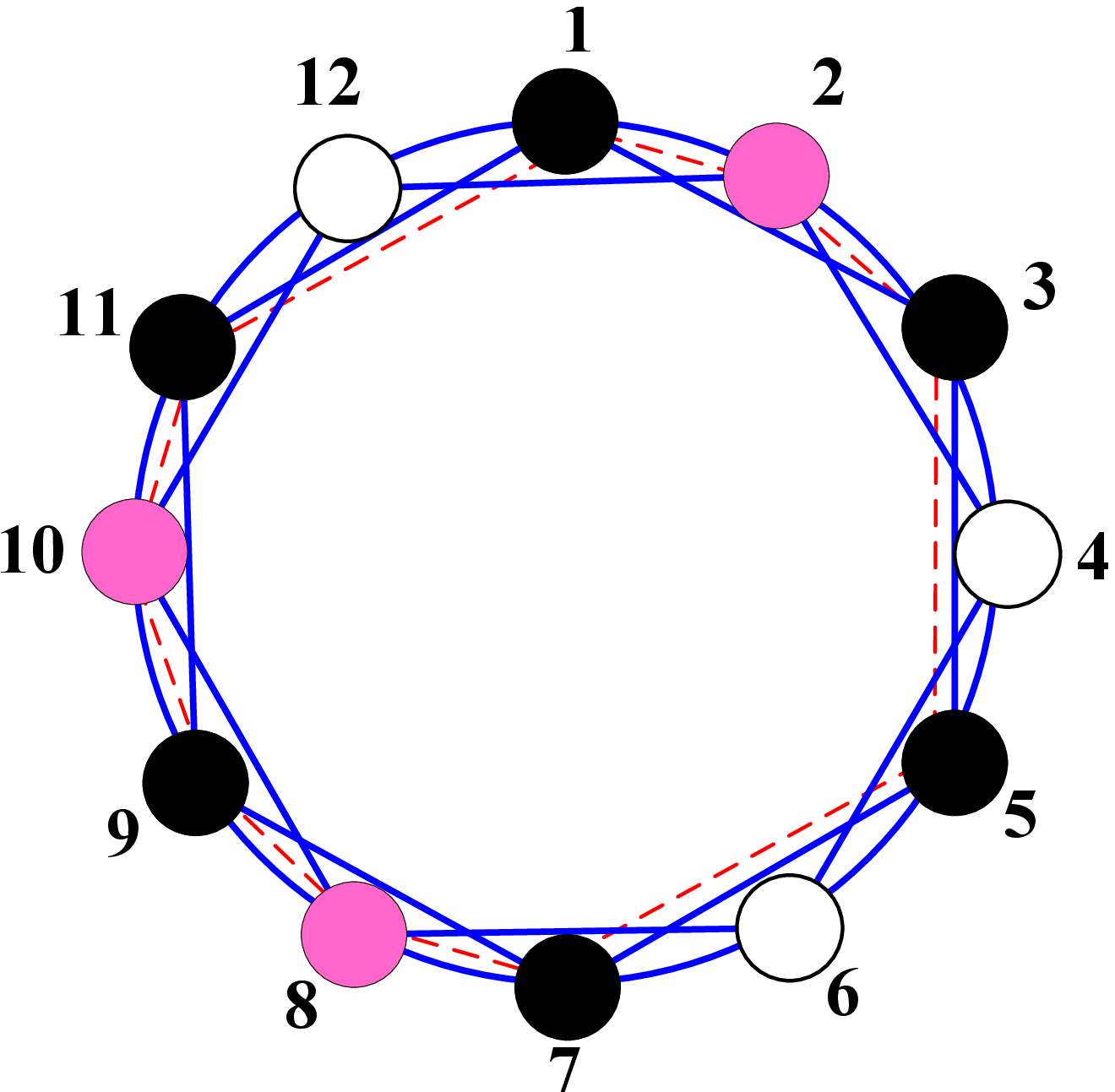}
&
\includegraphics[width=0.4\linewidth, height=0.3\linewidth]{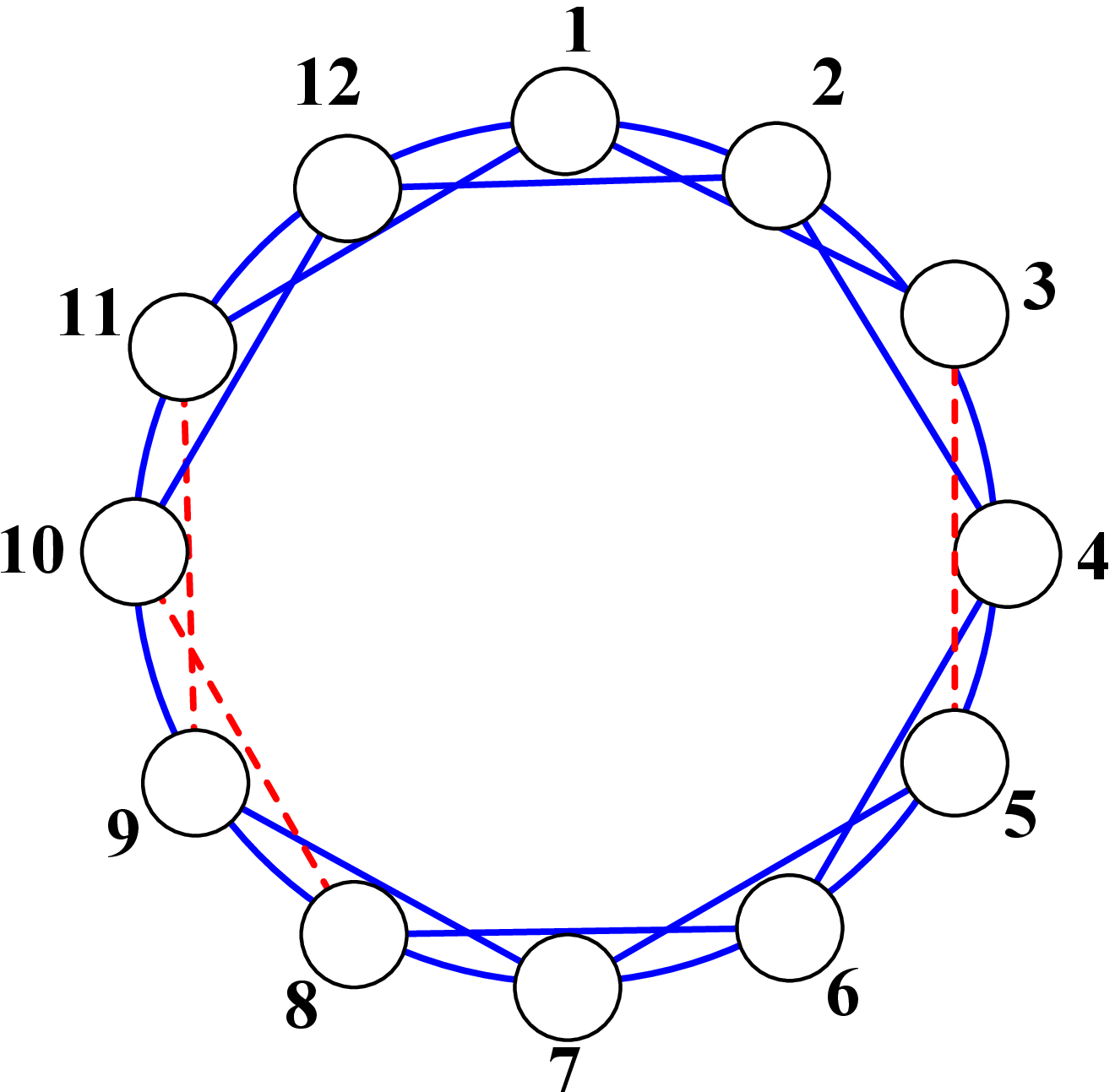}
\\
{\scriptsize (a) %Topology of $\mathcal{G}^4$} & {\scriptsize (b)
Measure nodes 2,8 and 10 via hub $T_{\textrm{o}}$}& {\scriptsize (b) Delete $h$ long links} \\
\end{tabular}
      \caption{Sparse recovery on graph $\mathcal{G}^4$}
      \label{fig:ring4}
   \end{figure}
In $\mathcal{G}^4$, if nodes are numbered consecutively around the ring, then the set of all the odd nodes, denoted by $T_{\textrm{o}}$, form a hub for the set of all the even nodes, denoted by $T_{\textrm{e}}$. Given a $k$-sparse vector $\bfx$, let $\bfx_{\textrm{o}}$ and $\bfx_{\textrm{e}}$ denote the subvectors of $\bfx$ with odd and even indices. Then $\bfx_{\textrm{o}}$ and $\bfx_{\textrm{e}}$ are both at most $k$-sparse. %The sum of entries in $\bfx_{\textrm{o}}$, denoted by $s_{\textrm{o}}$, can be obtained by one measurement, and similarly for the sum $s_{\textrm{e}}$ of the entries of $\bfx_{\textrm{e}}$. For any subset $W$ of $T_{\textrm{e}}$, $T_{\textrm{o}} \cup W$ induces a connected subgraph and thus can be measured by one measurement. We can obtain the sum of values corresponding to  nodes in $W$ by measuring nodes in $T_{\textrm{o}} \cup W$ and then subtracting $s_{\textrm{o}}$ from the sum.  For example in Fig. \ref{fig:ring4} (b) and (c), in order to measure the sum of the pink nodes 2, 8 and 10, we  measure the sum of pink nodes and all the black odd nodes, and then subtract $s_{\textrm{o}}$ from the obtained summation. Though the subgraph induced by $T_{\textrm{e}}$ is not complete, we can still freely measure nodes in $T_{\textrm{e}}$ with the help of the hub $T_{\textrm{o}}$. Therefore
From Theorem \ref{thm:hub}, $M^C_{k, \lfloor n/2 \rfloor }+1$ measurements are enough to recover $\bfx_{\textrm{e}}\in \mathcal{R}^{\lfloor n/2 \rfloor}$. % where the additional one measurement measures $s_{\textrm{o}}$. %the sum of all odd nodes.
Similarly, we can use $T_{\textrm{e}}$ as a hub to recover the subvector $\bfx_{\textrm{o}} \in \mathcal{R}^{\lceil n/2 \rceil}$ with $M^C_{k, \lceil n/2 \rceil }+1$ measurements, and thus $\bfx$ is recovered. %From above, we have
\begin{cor}\label{thm:ring4}
 All $k$-sparse vectors associated with $\mathcal{G}^4$ can be recovered with $M^C_{k, \lfloor n/2 \rfloor}+M^C_{k, \lceil n/2 \rceil}+2$ measurements, which is $O( 2k\log (n/(2k)))$. % from (\ref{eqn:MC}). %the measurement matrix $A$ defined is in (\ref{eqn:A1}) to (\ref{eqn:A4}). %Moreover, the number of measurements needed to identify $k$-sparse vectors on $\mathcal{G}^4$ satisfies
%\begin{equation}\label{eqn:g4}
%M^{\mathcal{G}^4}_{k,n} \leq M^C_{k, \lfloor n/2 \rfloor}+M^C_{k, \lceil n/2 \rceil}+2.
%\end{equation}
 \end{cor}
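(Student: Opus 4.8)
The plan is to prove Corollary \ref{thm:ring4} by combining the hub construction of Theorem \ref{thm:hub} with the even/odd decomposition of the node set of $\mathcal{G}^4$, and then to substitute the known bound $M^C_{k,n}=O(k\log(n/k))$ from (\ref{eqn:MC}) to obtain the asymptotic count. The key structural observation, which I would verify first, is that in $\mathcal{G}^4$ with nodes numbered consecutively around the ring, each odd node is adjacent to both of its even neighbors (distance-1 connections) and indeed every even node has an odd node at distance one; hence the odd set $T_{\textrm{o}}$ forms a hub for the even set $T_{\textrm{e}}$, and symmetrically $T_{\textrm{e}}$ is a hub for $T_{\textrm{o}}$. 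I would also check that $G_{T_{\textrm{o}}}$ is connected: consecutive odd nodes are at distance two on the ring, which in $\mathcal{G}^4$ is an edge, so the odd nodes themselves form a ring and $G_{T_{\textrm{o}}}$ is connected (same for $G_{T_{\textrm{e}}}$). This verifies both clauses of the hub definition.

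Granting this, the proof is essentially an application of Theorem \ref{thm:hub} twice. Write any $k$-sparse $\bfx$ as the interleaving of its odd-indexed subvector $\bfx_{\textrm{o}}\in\mathcal{R}^{\lceil n/2\rceil}$ and its even-indexed subvector $\bfx_{\textrm{e}}\in\mathcal{R}^{\lfloor n/2\rfloor}$; since the supports are disjoint, each subvector is at most $k$-sparse. Using $T_{\textrm{o}}$ as the hub for $T_{\textrm{e}}$, Theorem \ref{thm:hub} yields $M^C_{k,\lfloor n/2\rfloor}+1$ measurements that recover $\bfx_{\textrm{e}}$; symmetrically, using $T_{\textrm{e}}$ as the hub for $T_{\textrm{o}}$ gives $M^C_{k,\lceil n/2\rceil}+1$ measurements recovering $\bfx_{\textrm{o}}$. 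Taken together these $M^C_{k,\lfloor n/2\rfloor}+M^C_{k,\lceil n/2\rceil}+2$ measurements determine $\bfx$ completely, which is the first assertion. For the asymptotic claim I would invoke (\ref{eqn:MC}): each term is $O(k\log((n/2)/k))=O(k\log(n/(2k)))$, and summing two such terms (plus the constant $2$, which is absorbed) gives $O(2k\log(n/(2k)))$.

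I do not expect any single step to be a genuine obstacle, since the heavy lifting is done by Theorem \ref{thm:hub}; the corollary is a direct instantiation. The one point deserving care is the hub verification, i.e., confirming that in $\mathcal{G}^4$ the odd nodes are mutually connected (so that $G_{T_{\textrm{o}}}$ is connected) \emph{and} cover all even nodes in the adjacency sense. This hinges on the precise meaning of ``four closest neighbors'': a node $i$ is joined to $i\pm 1$ and $i\pm 2$ (mod $n$), so an odd node reaches the even nodes $i\pm 1$ and the odd nodes $i\pm 2$, which is exactly what both hub conditions require. A secondary subtlety is parity of $n$: when $n$ is odd the odd and even labels do not split cleanly around the cyclic ring (the ``wrap-around'' edge joins node $n$ to node $1$, both of which may share parity), so I would either assume $n$ even or note that the ceiling/floor in $\lceil n/2\rceil,\lfloor n/2\rfloor$ already accommodates the unequal split and that the boundary edges only add connectivity, never remove it, so the hub property is preserved in either case.
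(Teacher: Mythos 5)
Your proposal is correct and follows essentially the same route as the paper: the odd/even decomposition of $\mathcal{G}^4$, two applications of Theorem \ref{thm:hub} (odd nodes as hub for even nodes and vice versa), and the asymptotic bound via (\ref{eqn:MC}). The only difference is that you spell out the verification of the hub conditions (connectivity of $G_{T_{\textrm{o}}}$ via the distance-two edges and coverage of even nodes) and the odd-$n$ parity check, which the paper leaves implicit.
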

From a ring  to $\mathcal{G}^4$, although the number of edges only increases by $n$, the number of measurements required to recover $k$-sparse vectors significantly reduces from $\Theta(n)$ to $O( 2k\log (n/(2k)))$. This value %$k$-sparse vectors associated with $\mathcal{G}^4$ can be recovered with $O( 2k\log (n/(2k)))+2$ measurements, which
is in the same order as $M^C_{k,n}$, while %the number of measurements required in a complete graph with size $n$. Note that
the number of edges in $\mathcal{G}^4$ is only $2n$, compared with  $n(n-1)/2$ edges in a complete graph. % while the number of edges in a complete graph is $n(n-1)/2$. Besides,

Besides the explicit measurement construction based on the hub idea, %described before Theorem \ref{thm:ring4},  %we will show that
we can also recover $k$-sparse vectors associated with $\mathcal{G}^4$  with $O( \log n)$ random measurements. %h a simple random construction of the measurement matrix .
We need to point out that %unlike other measurement constructions in this paper,
these random measurements do not depend on the measurement constructions for a complete graph.

%Consider a $n$-step Markov chain $\{ X_k, 1 \leq k \leq n\}$ with
%initial distribution $P(X_1=1)=1$, $P(X_1=0)=0$ and transition
%matrix $P$ in (\ref{eqn:p}). Then $P(X_{k+1}=0~|~X_{k}=0)=0$,
%$P(X_{k+1}=1~|~X_{k}=0)=1$, $P(X_{k+1}=0~|~X_{k}=1)=0.5$, and
%$P(X_{k+1}=1~|~X_{k}=1)=0.5$.
%\begin{equation}\label{eqn:p}
%P = \begin{array}{*{20}c}
%   0  \\
%   1  \\
%\end{array}\left[ {\begin{array}{*{20}c}
%   0 & 1  \\
%   {0.5} & {0.5}  \\
%\end{array}} \right]
%\end{equation}
Consider an $n$-step Markov chain $\{ X_k, 1 \leq k \leq n\}$ with $X_1=1$. For any $k\leq n-1$, if $X_k=0$, then $X_{k+1}=1$; if $X_k=1$, then $X_{k+1}$ can be 0 or 1 with equal probability.
%Since $P(X_{k+1}=0~|~X_{k}=0)=0$, then
Clearly any realization of this Markov chain does not contain two or more consecutive
zeros, and thus is a feasible row of the measurement matrix. %For every realization $\{x_k\}$ ($1 \leq k \leq n$) of the
%Markov chain,  define $U:=\{ 1 \leq k \leq n: x_k=1\}$. Nodes in $U$ induce a connected subgraph of $\mathcal{G}^4$, and thus can be measured together in one measurement. Therefore, every realization of the Markov chain corresponds to a feasible row of the measurement matrix.
We have the following result, please see the conference version \cite{WXMT12} for its proof.  % $O(\log n)$ such
%measurements are sufficient to recover all $k$-sparse vectors with high
%probability.
%then $x_k=1$
%for every $k \in U$, and for two consecutive elements (in increasing
%order) $i$ and $j$ of $U$, either $j=i+1$ or $j=i+2$ holds. Thus,
%the corresponding node $i$ and node $j$ are connected by edge $(i,
%j)$. Then we can measure the sum of the nodes with $U$ as the set of
%indices directly by one measurement.
%
%Now that each realization of the Markov chain corresponds to a
%feasible measurement, we will show that $O(\log n)$ such
%measurements are sufficient to recover $k$-sparse vectors with high
%probability.
\begin{theorem}\label{thm:random}
%If each measurement corresponds to an independent realization of the
%above defined Markov chain, then
With high probability all $k$-sparse vectors associated with $\mathcal{G}^4$ can be recovered with $O(g(k) \log n)$
measurements obtained from the above Markov chain, where $g(k)$ is a function of
$k$.
\end{theorem}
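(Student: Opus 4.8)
The plan is to prove the stronger combinatorial statement that, with high probability, \emph{every} set of $2k$ columns of the random measurement matrix $A$ is linearly independent; by the identifiability criterion discussed after (\ref{eqn:ell0}), this guarantees that all $k$-sparse vectors associated with $\mathcal{G}^4$ are recoverable (e.g.\ via (\ref{eqn:ell0})). As already noted, each realization of the chain has no two consecutive zeros, so its support induces a connected subgraph of $\mathcal{G}^4$ (consecutive support nodes differ by at most $2$); hence every row is a feasible measurement, and distinct rows are i.i.d. First I would fix a set $T\subseteq\{1,\dots,n\}$ with $|T|=2k$, write $A_T$ for the corresponding column submatrix, bound $\Pr[\mathrm{rank}(A_T)<2k]$, and finish by a union bound over the $\binom{n}{2k}\le n^{2k}$ choices of $T$.

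The core is a per-row rank-increment estimate that exploits independence \emph{across rows} rather than across the correlated columns. Restricting a row to the coordinates in $T$ gives a pattern in $\{0,1\}^{2k}$, and such a pattern is \emph{achievable} iff it forces no two consecutive zeros, i.e.\ adjacent elements of $T$ are not both assigned $0$. I would first show the achievable patterns span $\mathbb{R}^{2k}$: the all-ones vector $\bfone$ is achievable, and for each $i$ the vector $\bfone-\bfe_i$ (a single zero) is achievable, so $\bfe_i=\bfone-(\bfone-\bfe_i)$ lies in their real span for every $i$. Second, using the Markov property I would lower-bound the marginal probability of any fixed achievable pattern on $T$ by a constant $p(k)>0$ depending only on $k$: the marginal factorizes into multi-step transition probabilities, each bounded below by an absolute constant for achievable transitions (the only one-step transition of probability $0$ is $0\to0$, which occurs precisely for the unachievable patterns), giving $p(k)\ge c^{2k}$ for some fixed $c\in(0,1)$, uniformly in $n$ and in the placement of $T$.

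Combining these, I condition on the first $\ell$ rows and suppose $\mathrm{rank}(A_T)<2k$ so far; then the current row span restricted to $T$ is a proper subspace $U\subsetneq\mathbb{R}^{2k}$, and since the achievable patterns span $\mathbb{R}^{2k}$, at least one of them lies outside $U$. As rows are independent and that pattern appears with probability at least $p(k)$, a fresh row raises the rank with probability at least $p(k)$. Thus the number of rows needed to reach rank $2k$ is stochastically dominated by a sum of $2k$ geometric variables with success probability $p(k)$, and a Chernoff bound on $\mathrm{Binomial}(m,p(k))$ gives $\Pr[\mathrm{rank}(A_T)<2k]\le e^{-m\,p(k)/8}$ once $m\,p(k)\ge 4k$. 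Choosing $m=\Theta\!\big(k\,p(k)^{-1}\log n\big)=O(g(k)\log n)$ with $g(k)=k\,c^{-2k}$ makes this at most $n^{-2k-1}$, whence $\Pr[\exists\,T:\mathrm{rank}(A_T)<2k]\le n^{2k}\cdot n^{-2k-1}=n^{-1}\to0$.

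The main obstacle is the uniform lower bound $p(k)$ in the second step: the columns of $A$ are strongly correlated through the chain's memory, so entries cannot be treated as independent, and the bound must hold simultaneously for all configurations of $T$ — including those where several coordinates of $T$ are adjacent on the ring, so that some single-coordinate patterns are infeasible. The span argument via $\bfone$ and $\bfone-\bfe_i$ is precisely what makes the analysis robust to such adjacencies, and routing all randomness through independent rows is what circumvents the column correlations; once $p(k)$ is established, the concentration and union-bound bookkeeping is routine.
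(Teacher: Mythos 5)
There is no in-paper proof to compare against: the paper states Theorem~\ref{thm:random} and explicitly defers its proof to the conference version \cite{WXMT12}. Judged on its own terms, your argument is sound and essentially complete, and its architecture is the natural one for this statement: reduce recovery to ``every $2k$ columns of $A$ are linearly independent,'' which is exactly the identifiability criterion the paper itself uses (e.g., in the proof of Theorem~\ref{thm:linekcs}); factor the probability of a pattern on $T$ into multi-step transition probabilities of the two-state chain; and finish with a rank-increment argument, a Chernoff bound, and a union bound over the $\binom{n}{2k}$ supports. The quantitative core checks out: the chain's transition matrix has all $m$-step transition probabilities at least $1/4$ for $m\geq 2$, and the only vanishing one-step transition is $0\to 0$, so every achievable pattern on $T$ has probability at least $4^{-2k}$ uniformly in $n$ and in the placement of $T$, which gives $g(k)=O\bigl(k\,4^{2k}\bigr)$ and the claimed $O(g(k)\log n)$ scaling.

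One detail needs repair, though it does not sink the proof. The chain is pinned at $X_1=1$, so if $1\in T$ then the single-zero pattern $\mathbf{1}-\mathbf{e}_{i_0}$, where $i_0$ is the coordinate of $T$ corresponding to index $1$, has probability zero; your claim that ``for each $i$ the vector $\mathbf{1}-\mathbf{e}_i$ is achievable'' is therefore false in that case. Your closing parenthetical also misidentifies the failure mode: adjacency of elements of $T$ only rules out patterns with \emph{two} zeros at consecutive indices, never single-zero patterns; the only infeasible single-zero pattern is the one at index $1$. The fix is one line: the achievable patterns $\mathbf{1}$ and $\mathbf{1}-\mathbf{e}_i$ for $i\neq i_0$ still span $\mathbb{R}^{2k}$, since their pairwise differences give $\mathbf{e}_i$ for all $i\neq i_0$ and then $\mathbf{e}_{i_0}=\mathbf{1}-\sum_{i\neq i_0}\mathbf{e}_i$; hence some achievable pattern still lies outside any proper subspace $U$, and the rank-increment step goes through verbatim. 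You should simply define achievability of a pattern on $T$ to require both that no two adjacent elements of $T$ are assigned $0$ and that index $1$, if present in $T$, is assigned $1$.
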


%\begin{proof}
%See Appendix.
%\end{proof}
%

%At least $(n/2)$ measurements are necessary to recover a $1$-sparse vector in
%a ring network, while for $\mathcal{G}^4$, with $O(2k \log (n/k))+2$
%measurements we recover all $k$-sparse vectors. Thus,
Adding $n$ edges in the
form $(i, i+2 (\textrm{mod } n))$ to the ring   greatly
reduces the number of measurements needed from $\Theta(n)$ to $O(\log n)$. %One
%may naturally wonder whether it
%Is it necessary to add all these $n$
%edges to achieve such a reduction in the number of required
%measurements? Would we still achieve such a reduction by adding a smaller number of edges
%positive fraction of these $n$ edges?
Then how many edges in the form $(i, i+2(\textrm{mod } n))$ shall we add to the ring  such that the minimum number of measurements required to recover $k$-sparse vectors is exactly $\Theta (\log n)$? The answer is $n- \Theta(\log n)$.
%Our next result implies that we need to add such $n- \Theta(\log n)$ edges so that
%Our next result implies that
%adding $(1-\rho) n$ ($\rho \in (0,1)$) edges cannot reduce the
%number of measurements significantly, and $\Theta(n)$ measurements are still necessary to recover $1$-sparse vectors. We use
To see this, let $\mathcal{G}^4_h$  denote the graph obtained by deleting $h$ edges in the form $(i, i+2 (\textrm{mod } n))$ from
$\mathcal{G}^4$. For example in Fig. \ref{fig:ring4} (b), we delete edges $(3,5)$, $(8,10)$ and $(9,11)$ in red dashed lines from $\mathcal{G}^4$. Given $h$, %there are many choices of the edges to remove,
our following results do not depend on the specific choice of edges to remove.  We have%and hold for all graphs obtained by deleting $h$ edges in the form $(i, i+2 (\textrm{mod } n))$ from
%$\mathcal{G}^4$.

\begin{theorem}\label{thm:ring4h}
 The minimum number of measurements required to recover $k$-sparse vectors associated with $\mathcal{G}^4_h$ is lower bounded by $\lceil h/2 \rceil$, and upper bounded by $2M^C_{k,\lceil \frac{n}{2}\rceil}+h+2$.
 \end{theorem}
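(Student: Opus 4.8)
The plan is to prove the two bounds separately: the upper bound via the hub construction of Theorem~\ref{thm:hub}, and the lower bound by adapting the incidence-matrix argument already used for the line/ring in Theorem~\ref{thm:linekcs}. For the upper bound I would exploit the parity decomposition of $\mathcal{G}^4$. Recall that the odd nodes $T_{\textrm{o}}$ are joined into a cycle by the length-two (``long'') edges $(i,i+2)$, and likewise the even nodes $T_{\textrm{e}}$; deleting $h$ long edges splits these two cycles into arcs, say $h_{\textrm{o}}$ deletions among the odd cycle and $h_{\textrm{e}}$ among the even cycle, with $h_{\textrm{o}}+h_{\textrm{e}}=h$. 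The key observation is that a deleted odd--odd edge $(i,i+2)$ can be repaired by its middle (even) node $i+1$, which is still joined to both $i$ and $i+2$ by the short edges that survive in $\mathcal{G}^4_h$. Collecting these $h_{\textrm{o}}$ bridging even nodes into a set $B_{\textrm{o}}$, the augmented set $S'=T_{\textrm{o}}\cup B_{\textrm{o}}$ induces a connected subgraph, so it is a legitimate hub for the remaining even nodes $T_{\textrm{e}}\setminus B_{\textrm{o}}$. By Theorem~\ref{thm:hub} these even nodes are recovered with $M^C_{k,\lceil n/2\rceil}+1$ measurements, and symmetrically the odd nodes outside the even bridging set are recovered with another $M^C_{k,\lceil n/2\rceil}+1$ measurements. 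Finally I would measure each of the $h$ bridging nodes individually (a single node is always a feasible connected measurement), which recovers exactly the node values excluded from the two hubs; summing gives $2M^C_{k,\lceil n/2\rceil}+h+2$.

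For the lower bound I would mimic the line/ring half of Theorem~\ref{thm:linekcs}. As there, a matrix $A$ identifying all $k$-sparse vectors must have every $2k$ columns linearly independent, so the ring-consecutive column differences $\bm\beta^j=\bm\alpha^j-\bm\alpha^{j-1}$ have the property that any $k$ of them are independent. I would then build an auxiliary graph $G_{eq}$ on the $n$ ring positions whose components, by this independence, each contain at least $k+1$ nodes, and count its edges against its number of components. Specializing to $k=1$ (the weakest case, which already matches the ring value $\lceil n/2\rceil$ when $h=n$), every component has at least two nodes, and a degree count should then yield at least $\lceil h/2\rceil$ edges of $G_{eq}$, hence at least that many measurements.

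The hard part is precisely the feature distinguishing $\mathcal{G}^4_h$ from a line or ring: because the $n-h$ surviving long edges let a single measurement skip over an isolated node, the support of a row need not be a ring-interval, and one measurement can create several $+1/-1$ boundary pairs in the difference matrix rather than exactly one. Thus the clean ``one edge of $G_{eq}$ per measurement'' correspondence of the line/ring proof breaks down, and the main technical work is to charge boundaries only at the $h$ deleted-edge gaps---equivalently, to contract the surviving long edges so that the skips they enable become internal to a measurement---while ruling out degenerate wrap-around measurements, so that after this reduction each deleted edge still forces a distinct boundary and the component argument delivers $\lceil h/2\rceil$. I expect this reduction, rather than either the hub construction or the final counting, to be the crux of the proof.
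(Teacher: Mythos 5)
Your upper bound is correct and is essentially the paper's own construction: your ``bridging'' nodes are exactly the paper's set $D=\{i: \text{edge } (i-1,i+1) \text{ was deleted}\}$; the paper likewise measures the $h$ nodes of $D$ individually, uses all odd nodes together with the even nodes of $D$ as a hub for the even nodes outside $D$, does the symmetric step for the odd nodes, and sums to $2M^C_{k,\lceil n/2\rceil}+h+2$. So that half needs no further work.

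The lower bound, however, has a genuine gap, and it is exactly the one you flag yourself: you propose to rerun the difference-matrix/$G_{eq}$ argument of Theorem~\ref{thm:linekcs}, concede that surviving long edges destroy the ``one measurement $=$ one $\pm1$ boundary pair'' correspondence, and then leave the repair (``contract the surviving long edges\dots'') as an unproven reduction. The paper does not attempt this repair; it uses a different and much shorter argument, following Theorem 2 of \cite{TWHR11}. Restrict attention to vectors supported on $D$. For each $i\in D$ the edge $(i-1,i+1)$ is absent, so a connected measurement cannot pass from one side of $i$ to the other without either containing $i$ or going all the way around the ring; consequently every feasible measurement intersects $D$ in a set that is consecutive in the circular order of $D$. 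Thus the columns of any valid measurement matrix indexed by $D$ constitute a feasible measurement matrix for a ring with $h$ nodes, and since identifying $k$-sparse vectors in particular requires identifying $1$-sparse vectors supported on $D$, the known ring bound of $\lceil h/2\rceil$ measurements applies. Note that once this projection-onto-$D$ step is in hand, your own $G_{eq}$ counting at $k=1$ would indeed finish the job (it gives $h-\lfloor h/2\rfloor=\lceil h/2\rceil$ for a ring of $h$ nodes); but that projection step is the crux, and your proposed edge contraction is not a substitute for it --- contraction merges variables and changes the signal space, and by itself it does not establish that measurement supports trace circular arcs on $D$.
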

%We first provide a lower bound of the number of measurements needed to recover a $1$-sparse vector associated with $\mathcal{G}^4_h$.
%
%\begin{figure}[h]
%\begin{center}
%  % Requires \usepackage{graphicx}
%  \includegraphics[scale=0.4]{./Figures/ring4_4}\\
%  \caption{Delete $h$ long edges from $\mathcal{G}^4$}\label{fig:ring4h}
%\end{center}
%\end{figure}
%
%\begin{theorem}\label{thm:ring4rho}
%At least $\lceil  h/2 \rceil $ measurements
%are necessary to recover a $1$-sparse vector associated with $\mathcal{G}^4_h$.
%\end{theorem}
\begin{proof}
%Let $D=\{n_{e_i}, 1 \leq i \leq h\}$, where edge $(n_{e_i}-1, n_{e_i}+1)$, $\forall i$
%($1\leq i \leq h$)
Let $D$ denote the set of nodes such that for every $i \in D$, edge $(i-1, i+1)$
is removed from $\mathcal{G}^4$.
The proof of the lower bound follows the proof of Theorem 2 in \cite{TWHR11}.
The key idea is that recovering one non-zero element in $D$ is equivalent to recovering one non-zero element in a ring   with $h$ nodes, and thus $\lceil h/2 \rceil$ measurements are necessary.

For the upper bound, we %construct the measurements as follows. We
first measure nodes in $D$ separately with $h$ measurements.  Let $S$ contain the even nodes in $D$ and all the odd nodes. %One can check that $G_S$ is connected, and
$S$ can be used as a hub to recover the $k$-sparse subvectors associated with the even nodes that are not in $D$, and the number of measurements used is at most $M^C_{k,\lfloor \frac{n}{2}\rfloor}+1$. We similarly recover $k$-sparse subvectors associated with odd nodes that are not in $D$ using the set of the odd nodes in $D$ and all the even nodes as a hub.  The number of measurements is at most $M^C_{k,\lceil \frac{n}{2}\rceil}+1$. Sum them up and the upper bound follows.
%
%If edge $(i, i+2 (\textrm{mod } n))$ is removed, we measure node
%$i+1(\textrm{mod } n)$ directly. Then $h$ nodes are measured
%separately by $h$ measurements, and let $H$ denote the set of such
%nodes. We can recover up to $k$ non-zero entries corresponding to the remaining nodes by
%$2 f(k, n/n)+2$ measurements. This is because the sum of any combination
%of even nodes in the remaining $n-h$ nodes can be measured by using
%all the odd nodes and some nodes in $H$ as a hub. Then $f(k,n/2)+1$
%measurements are enough to recover up to $k$-sparse subvectors associated with even nodes.
%Similarly, $f(k, n/2)+1$ measurements are enough for odd nodes.
%%Thus, the total number of measurements is $O(2k \log n+h)$.
%Thus, the statement follows.
%
\end{proof}

Together with (\ref{eqn:MC}), Theorem \ref{thm:ring4h}   implies that
%Corollary \ref{cor:ring4h} implies that
 if $\Theta(\log n)$ edges in the form $(i, i+2 (\textrm{mod } n))$ are deleted from $\mathcal{G}^4$, then $\Theta( \log n)$ measurements are   necessary and sufficient to recover associated $k$-sparse vectors % with $\mathcal{G}^4_{\Theta(\log n)}$
  for   constant $k$. %Moreover, the lower bound in Theorem \ref{thm:ring4h} implies that if the number of edges removed is $\Omega( \log n)$, then the number of measurements required for sparse recovery is also $\Omega( \log n)$. Thus, we need to add $n-\Theta(\log n)$ edges to a ring network such that the number of measurements required for sparse recovery is exactly $\Theta(\log n)$.
 %Therefore, after adding $n-\Theta(\log n)$ edges to a ring network, the number of measurements
%needed to recover $k$-sparse vectors significantly
%reduces from $\Theta(n)$ to $O(2k\log (n/(2k)))$.  Note that at this point, the number of edges in the graph is only $2n-\Theta(\log n)$, while the number of edges in the complete graph is $n(n-1)/2$.

Since the number of measurements required by compressed sensing is greatly reduced when we add $n$ edges to a ring, one may wonder whether
the number of measurements needed %to locate $k$ non-zero elements
by group
testing can  be greatly reduced or not. Our next result shows that
this is not the case for group testing, please refer to the conference version \cite{WXMT12} for its proof.

\begin{prop}\label{thm:ring4gt}
$\lfloor n/4\rfloor$ measurements are necessary to locate two non-zero elements associated with $\mathcal{G}^4$ by group testing.
\end{prop}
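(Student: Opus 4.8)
The plan is to reduce the claimed bound to the already-known one-sparse lower bound on a ring, after first understanding the structure of admissible measurements on $\mathcal{G}^4$. First I would characterize the feasible rows: since in $\mathcal{G}^4$ two nodes are adjacent exactly when their ring-distance is at most $2$, a node set is connected iff, reading the chosen nodes cyclically, at most one gap between consecutive chosen nodes exceeds $2$. Hence every feasible measurement is an arc of the ring in which the omitted interior nodes are isolated (no two consecutive omissions), together with at most one long outer gap. I would record the two relevant features of such an arc: its \emph{genuine endpoints} (the extreme chosen nodes, where the arc actually terminates) and its \emph{single-node skips}. Next I would translate ``locate two non-zero elements'' into the Boolean condition that the OR-combinations of the columns over all supports of size at most two are pairwise distinct, and extract the separations I will charge against: for every node $u$ there must be a measurement containing $u$ but not $u+1$ and another containing $u$ but not $u-1$, and, more sharply, measurements that isolate $u$ from the distance-two configurations around it.

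The core step is a counting argument in the spirit of the group-testing lower bound for the plain ring recalled just before this proposition (where every node must serve as both a left and a right endpoint, forcing $n$ measurements). On $\mathcal{G}^4$ the extra distance-two edges let a measurement bridge a single omitted node, so a skip can realize some of the required separations without a genuine endpoint. I would show that the truly unavoidable separations can only be produced at genuine arc endpoints, that these must appear at roughly $n/2$ positions, and that each measurement, being a single arc, supplies only two endpoints; pairing the left- and right-endpoint requirements then yields at least $\lfloor n/4\rfloor$ measurements. Equivalently, I would exhibit a family of about $n/2$ pairwise-confusable two-sparse configurations whose disambiguation behaves exactly like locating one non-zero element on a ring of $n/2$ effective nodes, and invoke the $\lceil\cdot/2\rceil$ one-sparse ring bound from \cite{HPWYC07,TWHR11} to obtain $\lceil n/4\rceil$.

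The hard part will be handling the skips, which is precisely what separates $\mathcal{G}^4$ from the plain ring. Transitions of the form ``$u$ present, $u+1$ absent'' are cheap: a single skip-heavy arc (for instance all odd nodes) produces $\Theta(n)$ of them while having no genuine endpoints at all, so any naive transition count collapses and cannot give an $\Omega(n)$ bound. The delicate point is therefore to isolate exactly those pairwise separations that no skip can manufacture and to tie them to genuine endpoints, and then to nail the constant $1/4$; I expect this to require pinning down the minimal confusable family on four consecutive nodes and accounting carefully for how one arc's two endpoints can be shared between the left- and right-separation requirements of neighboring blocks.
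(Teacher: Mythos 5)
Your proposal cannot be checked against an in-paper proof, because the paper itself gives none: the text explicitly defers the proof of this proposition to the conference version \cite{WXMT12}. Judging it on its own terms, the skeleton is right. Your characterization of feasible measurements (cyclic gaps of at most $2$, with at most one longer gap) is correct, the translation of two-sparse group testing into pairwise separation of OR-patterns is correct, and you correctly identify the crux: skip-type separations are cheap (the all-odd-nodes measurement separates $\Theta(n)$ adjacent pairs with no genuine endpoint), so the bound must rest on separations that skips cannot manufacture. But the proposal stops exactly at that crux. The sentence ``I would show that the truly unavoidable separations can only be produced at genuine arc endpoints'' \emph{is} the entire mathematical content of the proposition; you neither exhibit the confusable family nor prove the forced-endpoint lemma, and your fallback (``pinning down the minimal confusable family on four consecutive nodes'') is left as an expectation. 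As written, this is an architecture for a proof, not a proof.

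The gap is fillable, and with a family on three (not four) consecutive nodes. For each $u$, to distinguish the supports $\{u,u+1\}$ and $\{u+1,u+2\}$, a measurement $M$ must avoid $u+1$ (any $M$ containing $u+1$ returns $1$ on both) and contain exactly one of $u$, $u+2$. If $u\in M$ while $u+1,u+2\notin M$, then the cyclic gap of $M$ following $u$ has length at least $3$; since a connected set in $\mathcal{G}^4$ has at most one such gap, $u$ is the clockwise endpoint of $M$, and no other node $u'$ can play this role for the same $M$. Symmetrically, if $u+2\in M$ while $u,u+1\notin M$, then $u+2$ is the counterclockwise endpoint of $M$. Hence each measurement serves at most two of these $n$ requirements (one per end of its unique long gap; measurements with no long gap, such as all odd nodes, serve none), which gives $m\geq \lceil n/2\rceil \geq \lfloor n/4\rfloor$. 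Two further remarks: this argument actually yields a constant better than $1/4$, so your concern about carefully pairing shared endpoints to ``nail the constant'' is unnecessary; and your alternative route --- reducing to one-sparse recovery on a ring of $n/2$ effective nodes and invoking the $\lceil n/2\rceil$-type bound of \cite{HPWYC07,TWHR11} --- is the weaker of your two suggestions, since measurements of $\mathcal{G}^4$ that avoid a single fixed node do not restrict to ring measurements on the even (or odd) sub-cycle, so the claimed equivalence ``behaves exactly like a ring of $n/2$ nodes'' would itself need a proof.
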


By Corollary \ref{thm:ring4} and Proposition \ref{thm:ring4gt}, we
observe that in $\mathcal{G}^4$, with compressed sensing the number
of measurements needed to recover $k$-sparse vectors is $O(2k \log (n/(2k)))$,
while with group testing, $\Theta(n)$ measurements are required if
$k \geq 2$. %We can also show that $3n/4$ measurements are enough to locate two non-zero elements in $\mathcal{G}^4$ with group testing, and we skip the proof here.

\subsection{Line graph of a ring network with each router connecting to four routers}

\begin{figure}[h]
\begin{center}
  % Requires \usepackage{graphicx}
  \includegraphics[width=0.7\linewidth, height=0.3\linewidth]{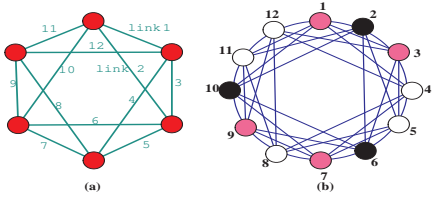}
  \caption{(a) a communication network with $n=12$ links, (b) the line graph of the network in (a). Measure the sum of any subset of odd nodes (e.g., 1, 3, 7, and 9) using nodes 2,6, and 10 as a hub }\label{fig:lineg4}
  \end{center}
\end{figure}

Here we compare our construction methods with those in   \cite{CKMS12,XMT11} on recovering link quantities in a network with each router connecting to four routers in the ring. Fig \ref{fig:lineg4} (a)\footnote{Fig \ref{fig:lineg4} (a) is a communication network with nodes representing routers and edges representing links, while Fig. \ref{fig:ring4} (a) is a graph model capturing topological constraints with nodes representing the quantities to recover.} shows such a network with $n$ links with $n=12$. As discussed in Section \ref{sec:model}, we analyze the line graph of the communication network  in Fig. \ref{fig:lineg4} (a). In its line graph, as shown in  Fig.   \ref{fig:lineg4} (b),  node $i$ (representing the delay on link $i$) is connected to nodes $i-3$, $i-2$, $i-1$, $i+1$, $i+2$, and $i+3$ (all mod $n$) for all odd $i$; and node $i$  is connected to nodes $i-4$, $i-3$, $i-1$, $i+1$, $i+3$, and $i+4$ (all mod $n$) for all even $i$. % See Fig.   \ref{fig:lineg4} (b) as an example.

 With the hub idea, we can recover $k$-sparse link delays in this network from $O(2k \log (n/(2k)))$ measurements. Specifically, we use the set of all the odd nodes as a hub to recover the values associated with the even nodes, and it takes $O(k \log (n/(2k)))$ measurements. %The odd nodes are divided into two groups, and each group is recovered using $O(k \log n)$ measurements.
 We then use the set of nodes $\{4j+2, j=0,..., \lceil \frac{n-2}{4} \rceil\}$ as a hub to recover the values associated with the odd nodes, see Fig.  \ref{fig:lineg4} (b) as an example. And it takes another $O(k \log (n/(2k)))$ measurements. %the nodes  $\{4j+1, j=0,..., \lceil \frac{n-2}{4} \rceil\}$, see Fig.  \ref{fig:lineg4} (b) as an example. Similarly, the set of nodes $\{4j, j=1,..., \lceil \frac{n-2}{4} \rceil\}$ as a hub to recover the nodes  $\{4j-1, j=1,..., \lceil \frac{n-2}{4} \rceil\}$.

%For the network in Fig \ref{fig:lineg4} (a),
Our construction of $O(2k\log (n/(2k)))$   measurements   to recover $k$-sparse link delays in the network in Fig \ref{fig:lineg4} (a) greatly improves over
the existing results in \cite{CKMS12,XMT11},   which  are based on the mixing time of a random walk.
%Theorem \ref{thm:ring4} implies that $M^{\mathcal{G}^4}_{k,n} \leq M^C_{k, \lfloor n/2 \rfloor}+M^C_{k, \lceil n/2 \rceil}+2$.
%Since  $M^C_{k,n} \leq O( k\log (nk))$, then $M^{\mathcal{G}^4}_{k,n} \leq O( 2k\log (n/(2k)))+2$. %We provide a constructive way to take $O(k \log n)$ measurements so
%as to identify up to $k$ errors in the proof of Theorem
%\ref{thm:ring4}. In order to decode the errors in even nodes, we
%first deduct $s_1$ from all the measurements on the even nodes, then
%apply $\ell_1$-minimization. To decode the errors in the odd nodes,
%we deduce $s_2$ from all the measurements on the odd nodes, and then
%apply $\ell_1$-minimization.
%If we randomly generate a measurement matrix for the complete graph with $\lfloor n/2 \rfloor$ nodes, then with overwhelming probability that $ O( k \log (n/2k)$ random measurements are enough to identify a $k$-sparse vector in $\mathcal{R}^{\lfloor n/2 \rfloor}$. Similarly, we randomly generate a measurement matrix for the complete graph with $\lceil n/2 \rceil $ nodes. Therefore from Theorem \ref{thm:ring4}, with overwhelming probability, $k$-sparse vectors associated with $\mathcal{G}^4$ can be recovered with $O(2k \log (n/2k))+2$ measurements. Thus, $M^{\mathcal{G}^4}_{k,n} \leq O(2 k\log ( n/2k )) +2$. In \cite{XMT11}, the measurement matrix corresponds to random walks on the graph, and
%$T(n)$ is the $\delta$-mixing time defined as follows.
%Let $\mu$ denote the stationary distribution over the nodes of a standard random walk
%over the graph $G$.
The %$\delta$-
mixing time $T(n)$ can be roughly interpreted as the minimum length of a random walk on a graph  such that its distribution is close to its stationary distribution. %is  the smallest $t'$ such that
%a random walk of length $t'$ starting at any node in $G$ ends up
%having a distribution $\mu'$ with $\|\mu-\mu'\|_{\infty} \leq 1/(2cn)^2$ %where the degree of every node in $G$ is between $D$ and $cD$
%for some $c\geq 1$, where $\mu$ is the stationary distribution over the nodes of a standard random walk
%over the graph $G$. %$T (n)$ is the $\delta$-mixing time of $G$ for  $\delta = 1/(2cn)^2$, and the degree of every node in $G$ is between $D$ and $cD$ for some $c\geq 1$.
\cite{XMT11} proves that $O( k T^2(n) \log n)$ measurements %from random walks
can identify $k$-sparse vectors with overwhelming probability by compressed sensing.  \cite{CKMS12} needs $O(k^2 T^2(n) \log (n/k))$  measurements to identify $k$ non-zero elements by group testing.  %one can easily see that %the stationary distribution is $\mu_i=1/n, \forall i$. A random walk with length at most  $\lfloor n/4 \rfloor$ starting at node 1 can never reach %the stationary distribution on
%node $\lfloor n/2 \rfloor$ is 0, therefore
$T(n)$ should be at least $n/8$ for the network in Fig \ref{fig:lineg4} (a). Then both results provide no saving in the number of measurements, while our construction reduces this number to  $O(2k\log (n/(2k)))$.  %as the mixing time is $\Theta(n)$. %However, our method can recover all $k$-sparse vectors with overwhelming probability with  $O(2k \log (n/2k))+2$ measurements .
%Besides, the number of measurements used to identify $k$ non-zero elements by group testing in \cite{CKMS12} is $O(k^2 T^2(n) \log (n/d)$, much greater than our result.

\subsection{Two-dimensional grid}\label{sec:2d}
Next we consider the two-dimensional grid, denoted by $\mathcal{G}^{2d}$. %As shown in
%Fig. \ref{fig:2d} (a),
$\mathcal{G}^{2d}$ has $\sqrt{n}$ rows and $\sqrt{n}$ columns. We assume $\sqrt{n}$ to be even here, and also skip `$\lceil\cdot \rceil$' and `$\lfloor\cdot \rfloor$' for notational simplicity. % but note that the number of nodes should always be an integer.

%\begin{figure}
%\centering
%\begin{tabular}{c c}
%%\includegraphics[width=0.5\linewidth]{./Figures/enum3}
%%\includegraphics[width=0.\linewidth]{./Figures/twoD1}
%%&
%\includegraphics[width=0.35\linewidth]{./Figures/twoD2}
%&
%%\includegraphics[width=0.5\linewidth]{./Figures/enum2}
%\includegraphics[width=0.35\linewidth]{./Figures/twoD3}
%\\
%%{\scriptsize (a) Two dimensional grid }&
%{\scriptsize (a) The set of black nodes as a hub }& {\scriptsize (b) Measure pink nodes via the hub}\\
%\end{tabular}
%      \caption{Sparse recovery on two-dimensional grid}
%      \label{fig:2d}
%   \end{figure}

\begin{figure}
\begin{minipage}{1.6in}
\begin{center}
\includegraphics[width=0.75\linewidth,height=0.5\linewidth]{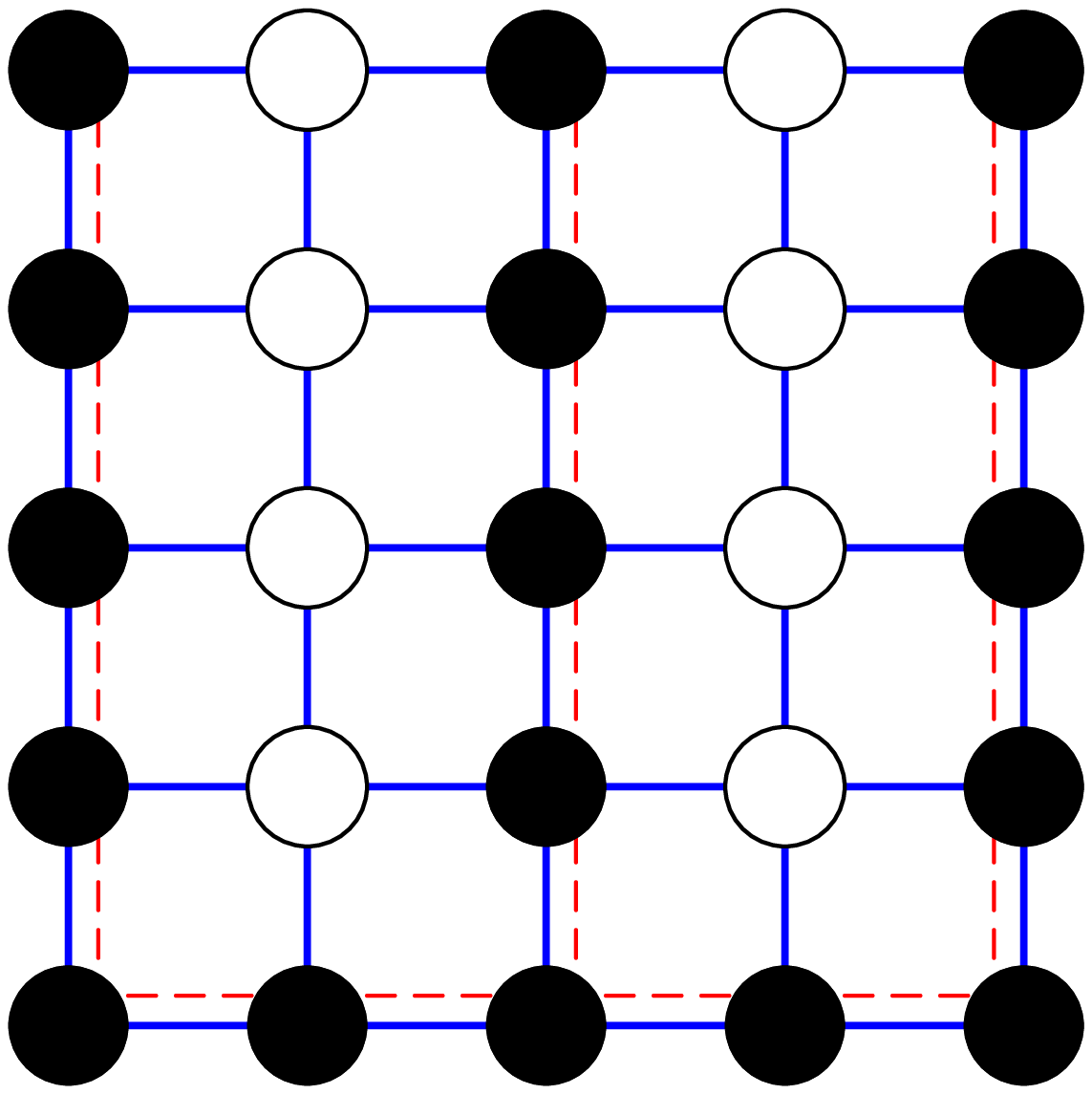}
\caption{Two-dimensional grid}\label{fig:2d}
\end{center}
\end{minipage}
\begin{minipage}{1.6in}
\begin{center}
\includegraphics[width=0.75\linewidth, height=0.5\linewidth]{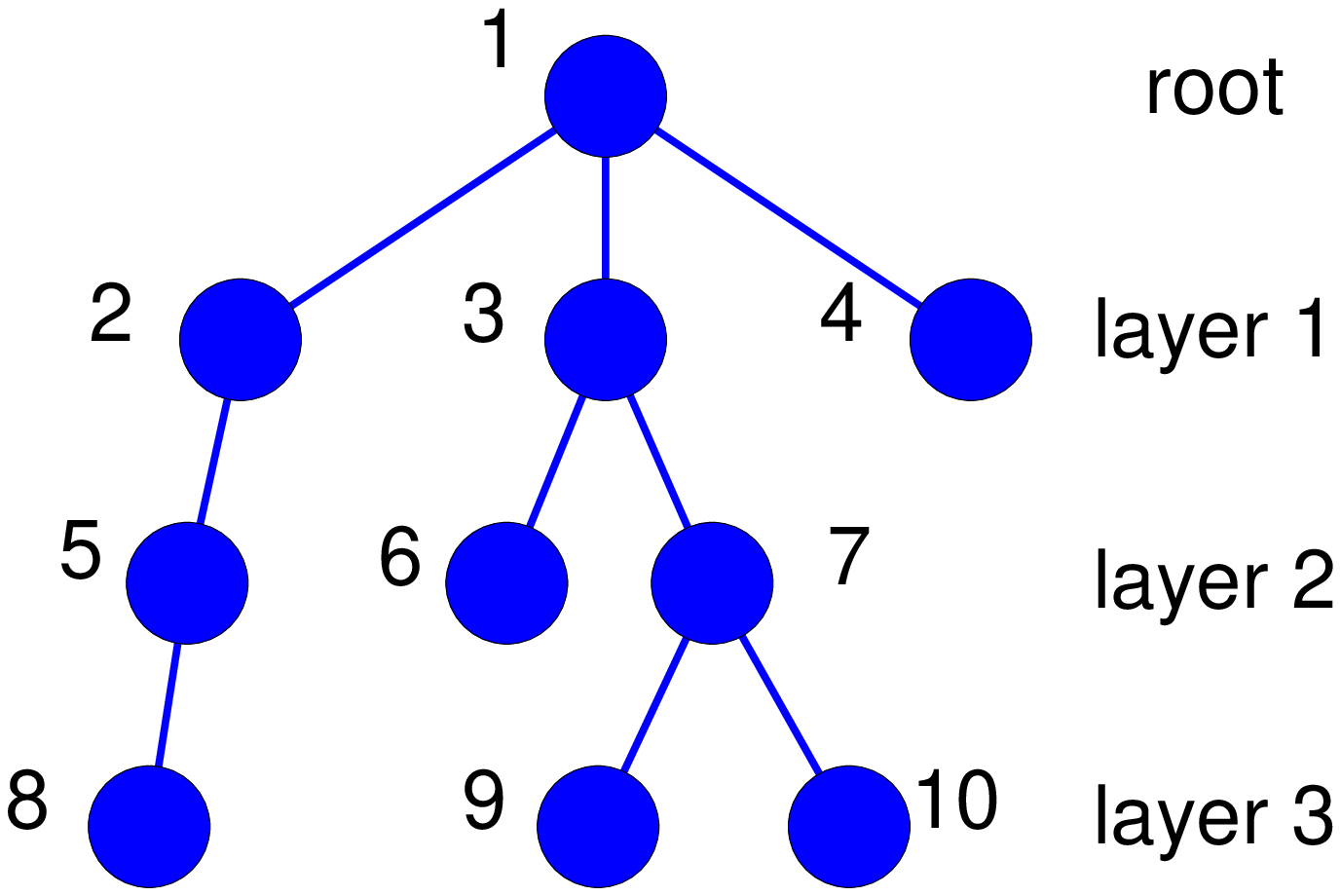}
\caption{Tree topology} \label{fig:tree}
\end{center}
\end{minipage}
\end{figure}

%\begin{figure}
%\begin{tabular}{c c}
%%\includegraphics[width=0.5\linewidth]{./Figures/enum3}
%%\includegraphics[width=0.\linewidth]{./Figures/twoD1}
%%&
%\includegraphics[width=0.35\linewidth]{./Figures/twoD2}
%&
%%\includegraphics[width=0.5\linewidth]{./Figures/enum2}
%\includegraphics[width=0.35\linewidth]{./Figures/twoD3}
%\\
%%{\scriptsize (a) Two dimensional grid }&
%{\scriptsize (a) The set of black nodes as a hub }& {\scriptsize (b) Measure pink nodes via the hub}\\
%\end{tabular}
%      \caption{Sparse recovery on two-dimensional grid}
%      \label{fig:2d}
%   \end{figure}

%We construct $2f(k, n/2-\sqrt{n}/2)+f(k,\sqrt{n})+2$ measurements %.provide a constructive way
%to recover $k$-sparse vectors associated with $\mathcal{G}^{2d}$.
%We assume $\sqrt{n}$ to be even here for notational simplicity, and the result can be easily modified for the case that $\sqrt{n}$ is odd.
The idea of measurement construction is still the use of a hub. %similar to that for graph $\mathcal{G}^4$.
First, Let $S_1$ contain the nodes in the first row and all the nodes in the odd columns, i.e., the black nodes in Fig. \ref{fig:2d}. Then $S_1$ can be used as a hub to measure $k$-sparse subvectors associated with nodes in $V \backslash S_1$. %, as shown in Fig. \ref{fig:2d}.
The number of measurements %used to recover up to $k$-sparse subvectors associated with  $V \backslash S_1$
is $M^C_{k, (n/2-\sqrt{n}/2)}+1$. Then let $S_2$ contain the nodes in the first row and all the nodes in the even columns, and use $S_2$ as a hub to recover up to $k$-sparse subvectors associated with nodes in $V \backslash S_2$. Then number of measurements required is also $M^C_{k, (n/2-\sqrt{n}/2)}+1$. Finally, use nodes in the second row as a hub to recover sparse subvectors associated with nodes in the first row. Since nodes in the second row are already identified in the above two steps, then we do not need to measure the hub separately in this step. The number of measurements here is  $M^C_{k, \sqrt{n}}$. %Thus, the total number of measurements by this construction is $2M^C_{k, (n/2-\sqrt{n}/2)}+M^C_{k, (\sqrt{n})}+2$.
Therefore,
 %We pick a subset of nodes that can induce a connected subgraph as a hub, like the black nodes in Fig. \ref{fig:2d} (b). Then for those nodes that are have direct edges to at least one node in the hub, we can measure the sum of any subset of nodes, like the pink nodes in  Fig. \ref{fig:2d} (c) by first %any two nodes are connected via the hub, and we can apply the result for complete graph with simple modification. For example, although the sum of the pink nodes cannot be measured directly since they do not induce a connected subgraph, we can
%measuring the sum of the nodes to be measured and the nodes in the hub, and then subtract the sum of the nodes in the hub. We need to pick several hubs such that every node in the graph is at least directly connected to one hub.
%\begin{prop}

\vspace{0.05in}
\textit{With $2M^C_{k, n/2-\sqrt{n}/2}+M^C_{k,\sqrt{n}}+2$ measurements one can recover $k$-sparse vectors associated with $\mathcal{G}^{2d}$.} %The number of measurements needed to recover $k$-sparse vectors associated with $\mathcal{G}^{2d}$ is at most $2M^C_{k, n/2-\sqrt{n}/2}+M^C_{k,\sqrt{n}}+2$.}

\subsection{Tree}\label{sec:tree}
Next we consider a tree topology as in Fig. \ref{fig:tree}. For a given tree, the
root is treated as the only node in layer 0. The nodes that are $t$
steps away from the root are in layer $t$. We say the tree has depth
$h$ if the farthest node is $h$ steps away from the root. Let $n_i$ denote the number of nodes on layer $i$, and $n_0=1$. We %show
%that $\sum_{i=0}^{h} M^C_{k,n_i}$ measurement are enough
construct measurements to recover vectors associated with a tree by the following \textit{tree approach}.
%in a tree with depth $h$ by compressive sensing, and provide two
%different ways to construct the measurement matrix.
%\begin{figure}[h]
%\begin{center}
%  % Requires \usepackage{graphicx}
%  \includegraphics[scale=0.25]{./Figures/tree}
%  \caption{Tree topology}\label{fig:tree}
%  \end{center}
%\end{figure}
%\noindent {\bf Approach 1}:
% We first measure the root directly.  Then, in
%order to measure the nodes in lay $i$ ($2 \leq i \leq h$), we use all the nodes in layers smaller than $i$ as a hub, and let
%$s_{i-1}$ denote the sum of these nodes, then we use the hub to measure the nodes in layer $i$. We can measure the sum of any
%subset of nodes in layer $i$ by measuring the sum of these nodes and all the nodes in layers smaller than $i$ and
%subtracting $s_{i-1}$ from the obtained sum. Since $f(k,n_i)$ measurements are enough to recover $k$-sparse vectors associated with $n_i$ nodes if
%any combination of the nodes can be measured, and we also know that measuring each node separately will cost $n_i$ measurements, then $\min (f(k,n_i)+1, n_i)$ measurements
%are enough to recover up to $k$-sparse vectors associated with nodes in layer $i$. Then the total  number of measurements to recover $k$-sparse vectors associated with the tree is $\sum_{i=0}^h \min (f(k,n_i)+1, n_i)$.
%
%Note that in this approach, identifying nodes in one layer is
%independent of identifying nodes in another layer, since the only
%information it requires is the sum of all the nodes above. Thus we
%can start with identifying nodes in any layer.

%\noindent {\bf Tree Approach}:
We recover the nodes layer by layer
starting from the root, and recovering nodes in layer $i$ requires
that all the nodes above layer $i$ should already be recovered. First measure the root separately.  When recovering the subvector associated with nodes in  layer $i$ ($2\leq i \leq h$), %note that $M^C_{k,n_i}$ measurements are enough to recover $k$-sparse vectors associated with a complete graph of $n_i$ nodes. %, or equivalently, when any combination of the nodes can be measured together.
we can measure the sum of any subset of nodes in layer $i$ using some nodes in the upper layers as a hub and then delete the value of the hub from the obtained sum.
%For those $n_i$ nodes in layer $i$, although there is no direct edge between any two nodes, we can use the nodes in the above lays to route among nodes in layer $i$.   Since all
%the nodes above layer $i$ are already recovered, we only need to subtract the value of these nodes from the obtained measurement.  Given a subset of nodes in layer $i$ to be measured together,
One simple way to find a hub %obtain the nodes in upper layers as a hub
is to trace back from  nodes to be
measured
on the tree simultaneously until they reach one same node. For example in Fig. \ref{fig:tree}, %in order to measure nodes 9 and 10 in layer 3 together, we can measure the sum of nodes 7, 9 and 10, and then subtract node 7 from the sum since the value at node 7 has already been identified when recovering nodes in layer 2. If we want to
  to measure the sum of nodes 5 and 7, we  trace back to the root and measure the sum of nodes 1, 2, 3, 5, and 7  and then subtract the values of nodes 1, 2, and 3, which are already identified when we recover nodes in the upper layers. %For layer $i$, since measuring each node separately will cost $n_i$ measurements, then $\min (f(k,n_i), n_i)$ measurements
%are enough to recover up to $k$-sparse vectors associated with nodes in layer $i$.
%Since we use Then
%$\min(f(k,n_i),n_i)$ measurements are enough to recover up to $k$-sparse vectors associated %with nodes in
%layer $i$.
%Then the tree approach uses $\sum_{i=0}^h M_{k,n_i}$ measurements in
%total. Then
With this approach, we have,
%In the process of
%recovering nodes in layer $i$, we will still use the $f(k, n_i)$ measurements for the complete graph. %apply the  need to
%%be able to measure any combination of nodes in layer $i$.
%Instead of using all the nodes above layer $i$ as a hub, here since all
%the nodes above layer $i$ are already identified, we can use any
%subset of them to route among the nodes to be measured in layer $i$
%and then simply delete the value of these nodes from the obtained sum.  One simply way to route is tracing back from the nodes to be
%measured in layer $i$ on the tree at the same time until they reach one same node. Then
%$\min(f(k,n_i),n_i)$ measurements are enough to recover up to $k$-sparse vectors associated with nodes in
%layer $i$. This approach uses $\sum_{i=0}^h\min(f(k,n_i),n_i)$ measurements in
%total.
%\begin{prop}\label{thm:tree}

\vspace{0.05in}
\textit{
$\sum_{i=0}^h M^C_{k,n_i}$ measurements are enough to recover $k$-sparse vectors associated with a tree with depth $h$, where $n_i$ is the number of nodes in layer $i$.} %is at most $\sum_{i=0}^h M_{k,n_i}$.
%\vspace{0.05in}

%\begin{equation}
%M^T_{k,n} \leq \sum_{k=0}^h \min (M^C_{k, n_i}, n_i).
%\end{equation}
%\end{prop}
%
%\begin{proof}
%%We have two different ways to construct such $O(k h \log n)$
%%measurements. Both measure the nodes by layers.
%It follows from the discussion of the tree approach.
%\end{proof}

\section{Sparse Recovery over General Graphs}\label{sec:general}

In this section we consider recovering $k$-sparse vectors associated with general graphs. The graph is assumed to
be connected. If not, %we simply  treat each component as a connected subgraph and
 we design measurements to recover $k$-sparse subvectors associated with each component separately. %Then the total number of measurements for the whole
%graph is the sum of the measurements on each component.
%all the components is the number of measurements for.

%Inspired by the construction methods in Section \ref{sec:special}, %for measurements design on general graphs,
In Section \ref{sec:bound}
we propose a general design guideline based on ``$r$-partition''. The key idea is to divide the nodes into a small number of groups such that each group can be measured with the help of a hub. %nodes in the same group are connected to one hub, and thus %Then nodes in the same group
%can be measured freely with the help of the hub.  $\mathcal{G}^4$ and  Erd\H{o}s-R\'enyi random graphs are examples of graphs having an $r$-partition. % as an example to illustrate the design guideline based on $r$-partition.
Since finding the minimum number of such groups turns out to be NP-hard  in general, in Section \ref{sec:algo} we propose a simple algorithm to design measurements on any given graph. %a small number of measurements to recover $k$-sparse vectors associated with any given graph.

%
%
% In Section \ref{sec:bound} we will provide two general designing guidelines %inspired from previous results on special networks
% together with their corresponding number of measurements. One idea (Theorem \ref{thm:general}) is to obtain a spanning tree of the graph and then apply the tree approach introduced in Section \ref{sec:special}.  %This idea does not take full advantage of the edges in a general graph since we in fact only use the edges in the spanning tree.
%
% The other idea (Theorem \ref{thm:component}) is to divide nodes into several groups such that nodes in the same group can be freely combined in one measurement via a hub, which induces a connected subgraph. Since the number of measurements needed for a given graph can be much smaller that the general upper bounds in Section \ref{sec:bound}, then in Section \ref{sec:algo}, we provide an algorithm to design measurements as few as possible to recover $k$-sparse vectors associated with any given graph.
%
%%Our methods to recover $k$-sparse vectors in the special networks in the
%%previous section inspires the following two different ways to
%%identify sparse vectors associated with a general network.
%\subsection{Guidelines for Measurement Construction on General Graphs}\label{sec:bound}
\subsection{Measurement Construction Based on $r$-partition}\label{sec:bound}
 \begin{defi}[$r$-partition]\label{def:rp}
Given $G=(V,E)$, disjoint subsets $N_i$ ($i=1,...,r$) of $V$ form an \textbf{$r$-partition} of $G$ if and only if these two conditions both hold: (1) $\cup_{i=1}^r N_i=V$, and (2) $\forall i$, $V \backslash N_i$ is a hub for $N_i$.
%all the three condition holds:
%\begin{description}
%\item[(1)] $\cup_{i=1}^r N_i=V$,
%\item[(2)] $G_{V \backslash N_i}$ is connected for every $i=1,...,r$, and
%\item[(3)]  $\forall$ $u\in N_i$, $\exists v\in V \backslash N_i$ such that $(u,v)\in E$.
% \end{description}
 \end{defi}
Clearly, $T_{\textrm{o}}$ and $T_{\textrm{e}}$ form a $2$-partition of graph $\mathcal{G}^4$. With Definition \ref{def:rp} and Theorem \ref{thm:hub}, we have %of $r$-partition, we have
%\begin{theorem}\label{thm:component}
%Given graph $G=(V,E)$, let $c_i$ ($i=1,...,r$) denote $r$ disjoint
%subsets of $V$, and let $c= \cup_{i=1}^r c_i$. If $G_{c_i}$ is
%connected for every $i=1,...,r$, and for every $u \in V$, there
%exists some $v \in c$ such that $(u,v) \in E$, then $O(r k \log n)$
%measurements are enough to identify up to $k$ errors in the graph.
%\end{theorem}
\begin{theorem}\label{thm:component}
If $G$ has an $r$-partition $N_i$ ($i=1,...,r$), then the number of measurements needed to recover $k$-sparse vectors associated with $G$ is at most  $\sum_{i=1}^r M^C_{k,|N_i|}+r$, which is $O(r k\log (n/k))$. %measurements are enough to recover $k$-sparse vectors associated with $G$, where $n_i=|N_i|$. %is the cardinality of set $N_i$.
\end{theorem}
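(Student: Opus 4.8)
The plan is to recover $\bfx$ group by group, using the $r$-partition to reduce each group to an instance that Theorem \ref{thm:hub} already handles. The starting observation is that since $\bfx$ is at most $k$-sparse over all of $V$, its restriction $\bfx_{N_i}$ to any subset $N_i$ is also at most $k$-sparse. Condition (2) of Definition \ref{def:rp} guarantees that $V\backslash N_i$ is a hub for $N_i$, so Theorem \ref{thm:hub} applies directly with $S=V\backslash N_i$ and $T=N_i$: there is a family of $M^C_{k,|N_i|}+1$ measurements, each of the form ``sum over $S\cup W$'' for some $W\subseteq N_i$. Each such node set induces a connected subgraph (the hub property makes $G_S$ connected and attaches every node of $W$ to $S$), so every measurement satisfies (A1), and additivity gives (A2).

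Next I would assemble these constructions for $i=1,\dots,r$ into a single measurement matrix $A$ by stacking their rows. The key point making the measurement count additive is that recovery decouples across groups: the measurements built for group $i$ consist of one measurement of the hub sum $s_i=\sum_{v\in V\backslash N_i}x_v$ together with the $S\cup W$ measurements; subtracting $s_i$ isolates the subset sums over $N_i$, so these rows alone determine $\bfx_{N_i}$ without reference to the other groups. Because condition (1) gives $\cup_{i=1}^r N_i=V$, recovering every $\bfx_{N_i}$ recovers all coordinates of $\bfx$, so $A$ identifies all $k$-sparse vectors. The total number of measurements is $\sum_{i=1}^r\big(M^C_{k,|N_i|}+1\big)=\sum_{i=1}^r M^C_{k,|N_i|}+r$, which is the claimed upper bound.

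Finally, for the asymptotic estimate I would invoke (\ref{eqn:MC}): each $M^C_{k,|N_i|}=O\big(k\log(|N_i|/k)\big)=O\big(k\log(n/k)\big)$ since $|N_i|\le n$. Summing $r$ such terms and absorbing the additive $r$ (dominated by $rk\log(n/k)$) yields $\sum_{i=1}^r M^C_{k,|N_i|}+r=O\big(rk\log(n/k)\big)$.

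Since each step is a direct application of Theorem \ref{thm:hub} together with the two defining properties of an $r$-partition, I do not expect a genuine obstacle; the only point requiring care is the decoupling argument, namely verifying that stacking the per-group blocks creates no ambiguity across groups and that each $\bfx_{N_i}$ is reconstructed purely from its own block after the single hub subtraction. Making that explicit is precisely what upgrades the per-group guarantee of Theorem \ref{thm:hub} into a global identifiability statement for the assembled matrix $A$.
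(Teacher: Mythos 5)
Your proposal is correct and follows exactly the paper's intended argument: the paper states Theorem \ref{thm:component} as an immediate consequence of Definition \ref{def:rp} and Theorem \ref{thm:hub} (applying the hub construction to each group $N_i$ with hub $V\backslash N_i$ and summing the counts), giving no further written proof. Your explicit verification of the cross-group decoupling via the hub subtraction is a careful filling-in of what the paper leaves implicit, not a different route.
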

Another example of the existence of an $r$-partition is the Erd\H{o}s-R\'enyi random graph $G(n,p)$ with $p>\log n/n$. %With $r$-partition,
The number of our constructed measurements on $G(n,p)$ is less than the existing estimates in \cite{CKMS12,XMT11}. Please refer to %the detailed discussion of $G(n,p)$ in
Section \ref{sec:erdos} for the detailed discussion.

 Clearly, if an $r$-partition exists, the number of measurements also depends on $r$.  In general one  wants to reduce $r$ so as to reduce the number of measurements. %However, finding an $r$-partition of graph $G$ with the smallest $r$ turns out to be NP-hard. %the $r$-partition problem turns out to be $NP$-complete
Given graph $G$ and integer $r$, the question that whether or not $G$ has an $r$-partition is called \textit{$r$-partition problem}.
%We have the following proposition.
In fact,
\begin{theorem}\label{thm:np}
$\forall r \geq 3$, $r$-partition problem is NP-complete.
\end{theorem}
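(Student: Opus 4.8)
The plan is to prove the two halves of NP-completeness separately: membership in NP, and NP-hardness for each fixed $r\geq 3$. Membership is immediate, since a candidate collection $N_1,\dots,N_r$ is a polynomial-size certificate and every clause of Definition~\ref{def:rp} is checkable in polynomial time: disjointness and $\cup_i N_i=V$ by inspection, connectivity of each $G_{V\setminus N_i}$ by a breadth-first search, and the hub/domination requirement (every $u\in N_i$ has a neighbor in $V\setminus N_i$) by scanning adjacency lists. For hardness I would reduce from graph $r$-colorability, which is NP-complete for every $r\geq 3$.

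Given an instance $H=(V_H,E_H)$ of $r$-colorability, I would build $G$ as follows. Keep all vertices and edges of $H$; add a \emph{connector clique} $C=\{c_1,\dots,c_r\}$, making the $c_j$ mutually adjacent and joining every $c_j$ to every vertex of $V_H$; and for each edge $(u,v)\in E_H$ add a single \emph{forcing vertex} $w_{uv}$ whose \emph{only} neighbors are $u$ and $v$ (crucially, the connectors are not joined to any $w_{uv}$). The claim is that $G$ has an $r$-partition if and only if $H$ is $r$-colorable.

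For the reverse direction (partition $\Rightarrow$ coloring) the forcing vertices carry the argument. Suppose an $r$-partition places adjacent $u,v$ in the same part $N_p$. Then either $w_{uv}\in N_p$, in which case $w_{uv}$ has all its neighbors ($u$ and $v$) inside $N_p$ and so violates the domination condition; or $w_{uv}\notin N_p$, in which case deleting $N_p$ strips $w_{uv}$ of both its neighbors and isolates it, so $G_{V\setminus N_p}$ is disconnected. Either way the partition is invalid, so every edge of $H$ is bichromatic and the parts induce a proper $r$-coloring. For the forward direction (coloring $\Rightarrow$ partition), given a proper coloring $\chi$ I would set $v\in N_{\chi(v)}$ for $v\in V_H$, $c_j\in N_j$, and $w_{uv}\in N_{\chi(u)}$. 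Domination then holds everywhere: each original vertex and each connector sees connectors of other colors, and $w_{uv}$ sees $v$, whose color differs from $u$'s. Connectivity of each $G_{V\setminus N_i}$ holds because the $r-1\geq 2$ surviving connectors form a clique adjacent to all surviving original vertices, and each surviving $w_{uv}$ attaches through its surviving anchor $u$.

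The step I expect to be the main obstacle is the connectivity condition~(A1)-style requirement that $G_{V\setminus N_i}$ stay connected for \emph{every} $i$. The forcing vertices are deliberately low-degree and thus make the induced subgraphs fragile, so the connector clique must be engineered to preserve connectivity across all part deletions in the yes-instances while never touching the forcing vertices, since any such edge would destroy the edge-forcing argument above. Making these two requirements coexist, together with ruling out the degenerate case in which $V\setminus N_p$ collapses to a single vertex (excluded by applying the domination condition to the connectors, which cannot all lie in one part), is the delicate core of the reduction; the coloring encoding itself is then routine.
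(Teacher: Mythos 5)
Your proposal is correct, and it is essentially the same argument the authors use: the paper defers this proof to its conference version \cite{WXMT12}, which likewise establishes hardness by reducing $r$-colorability (NP-complete for each fixed $r\geq 3$) to the $r$-partition problem, encoding each edge of the coloring instance by a degree-two forcing node and adding a clique of $r$ extra nodes joined to all original vertices so that the connectivity and domination requirements of Definition~\ref{def:rp} are met exactly by partitions induced by proper colorings. This correspondence also explains the paper's remark that the $2$-partition problem's hardness is left as a conjecture: the reduction's source problem, $2$-colorability, is polynomial-time solvable, so the approach breaks down at $r=2$.
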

%Please refer to Appendix for its proof.
Please see the conference version \cite{WXMT12} for its proof. We remark that we cannot prove the hardness of the $2$-partition problem though we conjecture it is also a hard problem.

Although  finding an $r$-partition with the smallest $r$ in general is NP-hard, it still provides a guideline that one can reduce the number of measurements by constructing a small number of hubs such that all the nodes are connected to at least one hub. Our measurement constructions for some special graphs in Section \ref{sec:special} are also based on this guideline. We next provide efficient measurement design methods for a general graph $G$ based on this guideline.
\subsection{Measurement Construction Algorithm for General Graphs}\label{sec:algo}
%Section \ref{sec:bound} proposes the $r$-partition concept as a measurement design guideline. %and uses it to reduce the number of measurements needed by construction.
%But finding an $r$-partition with the smallest $r$ in general is NP-hard. %provides two general ideas in designing measurements.
%Given a connected graph $G$, how shall we efficiently design a small number of measurements %matrix such
%%that the number of measurements is as small as possible while at the same time we can still
%to recover $k$-sparse vectors associated with $G$?
%%However, the spanning tree idea does not take full advantage of the edges in a general graph since we only use the edges in the spanning tree.  and it is NP-hard to find an $r$-partition of a general graph with smallest $r$.

%(1) $G_S$ is connected for each set $S$ of nodes measured
%together, (2) the number of measurements are relatively small, and
%(3) we can still identify up to $k$ errors?
%We assume the graph to
%be connected without loss of generality. If the graph is not
%connected, we treat each component as a connected subgraph and
%design measurements that can identify up to $k$ errors separately
%for each component. Then the total number of measurements needed for
%all the components is the number of measurements for the whole
%graph.
%Since $G$ is connected,
One simple way is to find the spanning tree of $G$, and   use the tree approach in Section \ref{sec:tree}. %For a connected
% graph $G=(V,E)$ with $|V|=n$,
The depth of the spanning tree is at least $R$, where $R=\min_{u \in V}
\max_{v \in V} d_{uv}$ is the radius of $G$ with $d_{uv}$ as the
length of the shortest path between $u$ and $v$. % Let $u^*$ denote the node such that $\max_{v \in V} d_{u^* v}= R$.
%Pick $u^*$ as the root and obtain a spanning tree of $G$ by
%breadth-first search. Then the depth of the spanning tree is
%$R$. Apply the tree approach on the spanning tree rooted at $u^*$, from Theorem \ref{thm:tree}, we know that the number of measurements used is at most $ RM^C_{k,n}$, which is no greater than $Rk \log (n/k)$.
This approach only uses edges in the spanning tree, and the number of measurements needed %is at least $R$, one for each layer, and thus
is large when the radius $R$ is large. For example, the radius of  $\mathcal{G}^4$   is $n/4$, then the   tree approach uses at least $n/4$ measurements, while %one for each layer. However,  %if we can take advantage of the additional edges in a general connected graph, the number of measurements can be less than that in the spanning tree approach. Like in $\mathcal{G}^4$,
%the number of measurements can be as small as
$O(2k \log (n/2k))$ measurements are already enough if we take advantage of the additional edges not in the spanning tree.

Here we propose a simple algorithm to design the measurements for general graphs. The algorithm combines the ideas of the tree approach and the $r$-partition. %and is very simple to implement.
We still  divide nodes into a small number of groups such that each group can be identified via some hub. Here nodes in the same group are the leaf nodes of a spanning tree of a gradually reduced graph. A leaf node has no
children on the tree.

Let $G^*=(V^*,E^*)$ denote the original graph. The algorithm is built on the following two subroutines. %\textbf{Leaves}($G$, $u$), and \textbf{Reduce}($G$, $u$, $H$).
 \textbf{Leaves}($G$, $u$) returns the set of leaf nodes of a spanning tree of $G$ rooted at $u$.  \textbf{Reduce}($G$, $u$, $K$) deletes $u$ from $G$ and fully connects all the neighbors of $u$. Specifically, for every two neighbors $v$ and
$w$ of $u$, %if edge $(v,w)$ does not already exist,
we add a edge
$(v,w)$, if not already exist, %Moreover, for each edge $(s,t)$, set $H_{(s,t)}$ contains the nodes in $G^*$ that serve as a hub for $s$ and $t$. For the new added edge $(v,w)$,
and let $K_{(v,w)}=K_{(v,u)}\cup K_{(u,w)} \cup \{u \}$, where for each edge $(s,t)$, $K_{(s,t)}$ denotes the set of nodes, if any,  that connects $s$ and $t$ in the original graph $G^*$. We record $K$ such that measurements constructed on a reduced graph $G$ can be feasible in $G^*$.

%and add an mark $u$ to that edge. We delete a node and add
%edges sequentially for all the nodes in $L_f$,
%
% Let $u^*$ denote the node
%such that $\max_{v \in V} d_{u^* v}= R$. Pick $u^*$ as the root and
%obtain a spanning tree of $G$ by breadth-first search. Then clearly
%the depth of the spanning tree is $R$. We divide nodes into
%different groups and two nodes are in the same graph if and only if
%they are on the same level of the tree.  Nodes in the same group can
%be measured freely using all the nodes in the upper levels as a hub,
%and $O(k \log n)$ measurements are enough to identify $k$ errors in
%each group. We next provide an improved design of measurements based
%on the spanning tree idea.
 \floatname{algorithm}{Subroutine}
\begin{algorithm}
\begin{algorithmic}[1]
\REQUIRE graph $G$, root $u$
 \STATE  Find a spanning tree $T$ of $G$ rooted at $u$ by breadth-first search, and let $S$ denote the set of leaf nodes of $T$.
 \RETURN $S$
 \end{algorithmic}
 \caption{\textbf{Leaves}($G$, $u$)}
 \end{algorithm}
 \begin{algorithm}
\begin{algorithmic}[1]
\REQUIRE  $G=(V, E)$,  $H_e$ for each $e \in E$, and node $u$
 \STATE  $V=V\backslash u$.
 \FORALL {two different neighbors $v$ and
$w$ of $u$}
 \IF {$(v,w) \notin E$}
 \STATE  $E=E \cup (v,w)$, $K_{(v,w)} =K_{(v,u)} \cup K_{(u,w)} \cup \{u \}$.%add an mark $u$ to that edge.
 \ENDIF
 \ENDFOR
 \RETURN $G$, $K$
 \end{algorithmic}
 \caption{\textbf{Reduce}($G$, $u$, $K$)}
 \end{algorithm}
\setcounter{algorithm}{0}
 \floatname{algorithm}{Algorithm}
 \renewcommand{\algorithmicreturn}{\textbf{Output:}}
  \begin{algorithm}
\begin{algorithmic}[1]
\REQUIRE  $G^*=(V^*, E^*)$.
 \STATE $G=G^*$, $K_e=\varnothing$ for each $e \in E$
\WHILE { $|V| > 1$}
 \STATE  Find the node $u$ such that $\max_{v \in V} d_{u v}= R^G$, where $R^G$ is the radius of $G$.  $S=$\textbf{Leaves}($G$, $u$).
  \STATE Design $f(k, |S|)+1$ measurements to recover $k$-sparse vectors associated with $S$ using nodes in $V\backslash S$ as a hub.
   \FORALL {$v$ in $S$}
 \STATE $G=$ \textbf{Reduce}($G$, $v$, $K$)
 \ENDFOR
 \ENDWHILE
 \STATE Measure the last node in $V$ directly.
 \RETURN All the measurements.
 \end{algorithmic}
 \caption{Measurement construction for graph $G^*$}\label{algo:design}
 \end{algorithm}

Given graph $G^*$, %Given connected $G=(V,E)$ with diameter $R$,
let $u$ denote the
node such that $\max_{v \in V^*} d_{u v}= R$, where $R$ is the radius of $G^*$. Pick $u$ as the root
and obtain a spanning tree $T$ of $G^*$ by breadth-first search. Let
$S$ denote the set of leaf nodes in $T$.  %Since $G^*_{V^* \backslash S}$ is connected, then
With $V^* \backslash S$ as a hub, we
can design $f(k,|S|)+1$ measurements to recover up to $k$-sparse vectors associated with $S$.  %using the nodes in $V^*\backslash S$ as a hub. The additional one measurement measures the sum of the nodes in $V^*\backslash S$. %Since every node in $L_f$ is identified now,
%we can reduce the network as follows. For each node $u \in L_f$,
We then reduce the network by deleting every node $v$ in $S$ and fully connects its neighbors. %and for every two different neighbors $v$ and
%$w$ of $u$, if edge $(v,w)$ does not already exist, we add an edge
%$(v,w)$ and add an mark $u$ to that edge. We delete a node and add
%edges sequentially for all the nodes in $L_f$, and let $G_2=(V_2,
%E_2)$ denote the resulting graph.
For the   reduced network $G$, we repeat the above process until all the nodes are deleted. %Note that for each edge $(v, w)$ in the reduced graph $G$, $H_{(v,w)}$ denotes the set of nodes, if any,  that are already identified in previous steps and connect nodes $v$ and $w$ in the original graph $G^*$. Thus, %when we design measurements for the reduced network,
%a measurement should include nodes in $H_{(v,w)}$ if edge $(v, w)$ is used so as to be feasible in the original graph $G^*$. %we should also measure all the nodes Note that the reduced network may contain some edge $(v,w)$ that is not in $E^*$. In $G^*$, there are several nodes that connects $v$ and $w$ and has
Note that when designing the measurements in a reduced graph $G$, if a measurement passes edge $(v, w)$, then it should also include nodes in $K_{(v,w)}$ so as to be feasible in the original graph $G^*$.

In each step tree $T$ is rooted at node $u$ where $ \max_{v \in {V}} d_{u v}$ equals the radius of the current graph $G$. Since all the leaf nodes
of $T$ are deleted in the graph reduction procedure, the radius of the new obtained graph should be reduced by at least one. %Since the radius of $G^*$ is $R$,
Then we have at most $R$ iterations in Algorithm \ref{algo:design} until only one node is left. Clearly we have,
\begin{theorem}\label{prop:algo}
The number of measurements designed by Algorithm \ref{algo:design} is at most $Rf(k, n)+R+1$, where $R$ is the radius of the graph.
\end{theorem}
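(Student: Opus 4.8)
The plan is to bound the total count as (number of \textbf{while}-loop iterations) $\times$ (measurements spent per iteration) plus the one final measurement, and to control the two factors independently. Writing $J$ for the number of iterations and $S_j$ for the leaf set removed in iteration $j$, Algorithm \ref{algo:design} produces at most $\sum_{j=1}^{J}\bigl(f(k,|S_j|)+1\bigr)+1$ measurements. So it suffices to establish (i) each summand is at most $f(k,n)+1$, and (ii) $J\le R$. Granting these, the total is at most $R\bigl(f(k,n)+1\bigr)+1=Rf(k,n)+R+1$, which is exactly the claim.

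For (i), I would first check that in each iteration $V\backslash S$ is a legitimate hub for $S$ in the current reduced graph $G$, so that Theorem \ref{thm:hub} applies and $f(k,|S|)+1$ measurements recover the $k$-sparse subvector on $S$. This holds because $S$ is the leaf set of a BFS spanning tree rooted at $u$: every node of $S$ has its tree-parent as a neighbour in $V\backslash S$, and $G_{V\backslash S}$ is connected since it contains the internal nodes of that spanning tree. Since $S\subseteq V\subseteq V^*$ gives $|S|\le n$, monotonicity of $f(k,\cdot)$ in its second argument yields $f(k,|S|)\le f(k,n)$, so every iteration costs at most $f(k,n)+1$. The $K$-bookkeeping in \textbf{Reduce} merely augments each measurement with intermediate nodes to keep it feasible in $G^*$ and does not affect the count.

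For (ii), the key is that the radius strictly decreases at every iteration. Fix an iteration with current radius $R^G$ and root $u$ chosen so that $\max_{v}d_{uv}=R^G$. Because $|V|>1$, the node $u$ has a child in the BFS tree and hence is not a leaf, so $u$ survives the reduction. Every node at distance exactly $R^G$ from $u$ sits at the deepest level of the BFS tree and therefore has no children, i.e.\ it is a leaf and lies in $S$; thus all such nodes are deleted and every surviving node is at distance at most $R^G-1$ from $u$ in $G$. The essential fact is that deleting a node while fully connecting its neighbours never increases distances between surviving nodes: a shortest path passing through a deleted node $\dots a\,v\,b\dots$ can be rerouted along the newly inserted edge $(a,b)$, so distances among the nodes of $V\backslash S$ in the reduced graph are at most their distances in $G$. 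Consequently the eccentricity of $u$ in the reduced graph is at most $R^G-1$, hence so is its radius. Starting from $R$ and dropping by at least one each time, the radius reaches $0$ after at most $R$ iterations, forcing $|V|=1$ and terminating the loop; the last node is then measured directly, contributing the final $+1$.

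The main obstacle is precisely the radius-decrease argument of step (ii): since \textbf{Reduce} simultaneously removes the leaf set and inserts edges, I must verify both that the chosen center $u$ is retained and that the inserted shortcut edges cannot create a new farthest surviving node, so that the eccentricity of $u$ genuinely falls by at least one. Once the shortcut (distance-nonincreasing) property is in hand, the remaining ingredients—the hub validity from Theorem \ref{thm:hub} and the monotonicity of $f$—are routine, and the arithmetic assembly $R(f(k,n)+1)+1=Rf(k,n)+R+1$ is immediate.
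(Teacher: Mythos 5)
Your proposal is correct and follows essentially the same route as the paper: bound each iteration's cost by $f(k,n)+1$ via the hub construction, show the radius drops by at least one per iteration so there are at most $R$ iterations, and add the final direct measurement. In fact, your argument is more complete than the paper's, which merely asserts the radius decrease; your verification that the chosen center $u$ survives, that all nodes at distance $R^G$ from $u$ are BFS leaves, and that \textbf{Reduce} never increases distances among surviving nodes supplies exactly the justification the paper leaves implicit.
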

We  remark  that the number of measurements by the spanning tree approach %we mentioned at the beginning of Section \ref{sec:algo}
is also no greater than $Rf(k, n)+R+1$.   %in Proposition \ref{prop:algo} is the same as that of
%depends on the radius $R$ of the graph,
However, since Algorithm 1 also considers edges that are not in the spanning tree, we expect that  for general graphs, it uses fewer measurements than the spanning tree approach. This is verified in Experiment 1 in Section \ref{sec:simu}. %and thus the upper bound is loose.  % then the number of measurements designed can be much less than the upper bound in Proposition \ref{prop:algo}.

\section{Sparse Recover over   Random Graphs}\label{sec:erdos}

Here we consider measurement constructions over the Erd\H{o}s-R\'enyi random graph $G(n,p)$, which has $n$ nodes and every two nodes are connected by a edge independently with probability $p$. The behavior of $G(n,p)$ changes significantly when $p$ varies. We study the dependence of number of measurements needed for sparse recovery on $p$.

\subsection{$np=\beta \log n$ for some constant $\beta>1$}
Now $G(n,p)$ is connected almost surely \cite{ER60}. Moreover, we have the following lemma regarding the existence of an $r$-partition. %the set of nodes that can be measured together via some hub.

\begin{lemma}\label{lem:huberdos}
When $p=\beta \log n/n$ for some constant $\beta>1$, with probability at least $1-O(n^{-\alpha})$ for some $\alpha>0$, every set $S$ of nodes with size $|S|=n/(\beta-\epsilon)$ for any $\epsilon \in (0, \beta-1)$ forms a hub for the complementary set $T=V \backslash S$, %, where $V$ is the set of all nodes.
%Moreover, with probability at least $1-O(n^{-\alpha})$,
which implies that $G(n,p)$ has a $\lceil\frac{\beta-\epsilon}{\beta-\epsilon-1}\rceil$-partition.
\end{lemma}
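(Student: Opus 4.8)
The plan is to reduce the lemma to a statement about a single fixed vertex set and then assemble the partition from $O(1)$ such sets. Writing $\gamma := \beta-\epsilon > 1$, I would first prove the following per-set claim: for any \emph{fixed} $S \subseteq V$ with $|S| \ge n/\gamma$, the set $S$ is a hub for $T = V\setminus S$ with probability at least $1 - O(n^{-\alpha})$, where $\alpha = \epsilon/\gamma > 0$. This is the right target because the $r$-partition only requires the hub property for the $r = O(1)$ complements of a fixed partition, not for all $\binom{n}{n/\gamma}$ subsets simultaneously. The phrase ``every set'' should be read in this per-set sense, since a single high-probability event covering all subsets at once would in fact be too strong: one can always exhibit a size-$n/\gamma$ set avoiding the $\Theta(\log n)$-size neighborhood of some low-degree vertex, leaving that vertex uncovered.

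For the covering condition of the hub definition, I would bound the number of vertices of $T$ with no neighbor in $S$ by a first-moment argument. A fixed $u \in T$ has no neighbor in $S$ with probability $(1-p)^{|S|} \le e^{-p|S|} = e^{-(\beta \log n/n)(n/\gamma)} = n^{-\beta/\gamma}$. Since $|T| < n$, a union bound gives that some vertex of $T$ is uncovered with probability at most $n\cdot n^{-\beta/\gamma} = n^{1-\beta/\gamma} = n^{-\epsilon/\gamma}$. The inequality $\beta/\gamma = \beta/(\beta-\epsilon) > 1$, which holds precisely because $\epsilon > 0$, is exactly what makes this exponent negative; this is where the hypothesis $\epsilon \in (0,\beta-1)$ enters.

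For the connectivity condition, I would use that the induced subgraph $G_S$ is distributed as an Erd\H{o}s--R\'enyi graph $G(|S|,p)$ on $|S| \ge n/\gamma$ vertices, and that its defining edges (inside $S$) are disjoint from, hence independent of, those used in the covering step (between $S$ and $T$). Here $|S|\,p \ge (n/\gamma)(\beta\log n/n) = (\beta/\gamma)\log n$, which exceeds the connectivity threshold $\log|S| \le \log n$ since $\beta/\gamma > 1$, with gap $(\beta/\gamma-1)\log n = (\epsilon/\gamma)\log n \to \infty$. The probability that $G_S$ is disconnected is dominated by the probability of an isolated vertex inside $S$, at most $|S|(1-p)^{|S|-1} = O(n^{1-\beta/\gamma}) = O(n^{-\epsilon/\gamma})$, the contribution of more balanced cuts being exponentially smaller by a standard estimate. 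Combining the two conditions, a fixed $S$ of size $\ge n/\gamma$ is a hub for $V\setminus S$ with failure probability $O(n^{-\alpha})$, $\alpha = \epsilon/\gamma$.

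Finally, to obtain the partition I would fix an equitable partition of $V$ into $r = \lceil \gamma/(\gamma-1) \rceil = \lceil \frac{\beta-\epsilon}{\beta-\epsilon-1}\rceil$ blocks $N_1,\dots,N_r$, each of size at most $\lceil n/r\rceil \le n(\gamma-1)/\gamma$ for $n$ large. Then each complement $V\setminus N_i$ has size at least $n/\gamma$, so the per-set claim applies to it, and a union bound over the $r = O(1)$ blocks shows that, with probability $1 - O(n^{-\alpha})$, every $V\setminus N_i$ is simultaneously a hub for $N_i$ --- exactly an $r$-partition in the sense of Definition \ref{def:rp}. The main obstacle, and the point needing the most care, is this quantifier bookkeeping: the per-set hub estimate cannot be union-bounded over all exponentially many subsets, so the argument must be organized so that only a constant number of fixed sets ever require the hub property at once. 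Everything else reduces to the two routine tail estimates above, both governed by the single inequality $\beta/(\beta-\epsilon) > 1$.
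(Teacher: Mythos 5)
Your proposal is correct and follows essentially the same route as the paper's own proof: a per-fixed-set union bound giving covering failure probability $n(1-p)^{|S|} \le n^{-\epsilon/(\beta-\epsilon)}$, connectivity of $G_S$ via the Erd\H{o}s--R\'enyi connectivity threshold, and a union bound over the $O(1)$ blocks to assemble the $\lceil\frac{\beta-\epsilon}{\beta-\epsilon-1}\rceil$-partition. Your two refinements --- reading ``every set'' in the per-set sense (the simultaneous version over all subsets is indeed false) and supplying the polynomial rate $|S|(1-p)^{|S|-1}$ for the connectivity failure where the paper only says ``almost surely'' --- merely tighten steps the paper treats more casually, not a different approach.
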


\begin{proof}
Note that the subgraph $G_S$ is also Erd\H{o}s-R\'enyi random graph in $G(n/(\beta-\epsilon), p)$. Since $p=\beta\log n/n>\log ( n/(\beta-\epsilon))/(n/(\beta-\epsilon))$,   $G_S$ is connected almost surely.

%We claim that with high probability, for every $u \in T$, there exists $v \in S$ such that $(u,v) \in E$. % and for
%every $v \in N_2$, there exists $u \in N_1$ such that $(u,v) \in E$.
%Then by Corollary \ref{cor:part}, the statement follows.
Let $P_f$ denote the probability that there exists some $u \in
T$ such that $(u, v) \notin E$ for every $v \in S$. Then
\begin{align*}
P_f&= \sum_{u\in T}(1-p)^{|S|}= (1-\frac{1}{\beta-\epsilon})n
 (1-\beta \log n/n)^{n/(\beta-\epsilon)}\\
  &= (1-1/(\beta-\epsilon))n
(1-\beta \log n/n)^{ \frac{n}{\beta \log n}\cdot \frac{ \beta \log n}{\beta-\epsilon}}\\
&\leq (1-\frac{1}{\beta-\epsilon})n e^{-\frac{\beta\log n}{\beta-\epsilon}} \quad
\leq(1-\frac{1}{\beta-\epsilon}) n^{-\epsilon/(\beta-\epsilon)}. % \\
%&\leq n^{-\epsilon/2}/2,
\end{align*}
 %Then $P_f$ goes to zero as $n$ goes to infinity.  %i.e., with high probability for every $u \in N_1$, there exists $v \in N_2$
%such that $(u,v) \in E$.
%According to the definition of a hub, with high probability, $S$ is a hub for $T$.
%Then the first claim follows.
Thus, $S$ is a hub for $T$ with probability at least $1-O(n^{-\alpha})$ for   $\alpha=\epsilon/(\beta-\epsilon)>0$. Since the size of $T$ is $(1-1/(\beta-\epsilon))n$,  $G(n,p)$  has at most  $\lceil \frac{\beta-\epsilon}{\beta-\epsilon-1}\rceil$ such disjoint sets. Then by a simple union bound, one can conclude that $G(n,p)$ has a $\lceil\frac{\beta-\epsilon}{\beta-\epsilon-1}\rceil$-partition with probability at least $1-O(n^{-\alpha})$.
\end{proof}

%From Lemma \ref{lem:huberdos}, every set $T$ of $(1-1/(\beta-\epsilon))n$ ($\epsilon \in (0, \beta-1)$) nodes can be measured freely with the rest of the nodes as a hub, thus $M^C_{k, (1-1/(\beta-\epsilon))n}+1$ measurements are enough to identify $k$-sparse subvectors associated with $T$. Since $G(n,p)$ only has at most  $\lceil \frac{\beta-\epsilon}{\beta-\epsilon-1}\rceil$ such disjoint sets, then  $(\lceil \frac{\beta-\epsilon}{\beta-\epsilon-1}\rceil )(M^C_{k, (1-1/(\beta-\epsilon))n}+1)$ measurements are enough to recover $k$-sparse vectors associated with $G(n,p)$. By (\ref{eqn:MC}), and let $\epsilon \rightarrow 0$, we have
For example, when $\beta>2$, Lemma \ref{lem:huberdos} implies that any two disjoint sets $N_1$ and $N_2$ with $|N_1|=|N_2|=n/2$ form a 2-partition of $G(n,p)$ with probability $1-O(n^{-\alpha})$. From Theorem \ref{thm:component} and Lemma \ref{lem:huberdos}, and let $\epsilon \rightarrow 0$, we have

\vspace{0.05in}
%\begin{prop}
\textit{When $p=\beta \log n/n$ for some constant $\beta>1$, all $k$-sparse vectors associated with %an Erd\H{o}s-R\'enyi random graph
$G(n,p)$ can be identified with $O(\lceil \frac{\beta}{\beta-1}\rceil k \log (n/k))$ measurements with  probability at least $1-O(n^{-\alpha})$ for some $\alpha>0$.}
%\end{prop}
\vspace{0.05in}

\cite{CKMS12} considers group testing over Erd\H{o}s-R\'enyi random graphs and shows that $O(k^2 \log^3 n)$ measurements %corresponding to random walks on the graph
are enough to identify up to $k$ non-zero entries %in an $n$-dimensional logical vector
if it further holds that $p=\Theta (k \log^2 n/n)$. Here with compressed sensing setup and $r$-partition results, we can recover $k$-sparse vectors in $\mathcal{R}^n$ with $O(k\log(n/k))$ measurements when $p>\log n/n$. This result also improves over the previous result in \cite{XMT11}, which requires $O(k \log^3 n)$ measurements for compressed sensing on $G(n,p)$.

\subsection{$np-\log n \rightarrow +\infty$, and $\frac{np-\log n }{\log n}\rightarrow 0$}
Roughly speaking, $p$   is just large enough to guarantee that $G(n,p)$ is connected almost surely \cite{ER60}. The diameter $D=\max_{u,v}d_{uv}$ of a connected graph is the greatest distance between any pair of nodes, and here it is concentrated around $\frac{\log n}{\log \log n}$ almost surely \cite{Bollobas01}. We design measurements on $G(n,p)$ with  Algorithm 1. With Theorem \ref{prop:algo} and the fact that the radius $R$ is no greater than the diameter $D$ by definition, we have
%\begin{prop}

\vspace{0.05in}
\textit{When $np-\log n \rightarrow +\infty$, and $\frac{np-\log n }{\log n}\rightarrow 0$, $O(k\log n \log (n/k)/\log \log n)$ measurements can identify $k$-sparse vectors associated with $G(n,p)$ almost surely.}
%\vspace{0.05in}

%\end{prop}
\subsection{$1<c=np<\log n$}

Now $G(n,p)$ is disconnected and has a unique giant component containing $(\alpha+o(1))n$ nodes almost surely with $\alpha$ satisfying
%\begin{equation*}
$e^{-c \alpha}=1-\alpha,$
%\end{equation*}
or equivalently,
\begin{equation}\nonumber
\alpha=1-\frac{1}{c}\sum_{k=1}^{\infty}\frac{k^{k-1}}{k !}(ce^{-c})^k,
\end{equation}
and all the other nodes belong to small components.
%The diameter $D=\max_{u,v}d_{uv}$ of a connected graph is the greatest distance between any pair of nodes. %the largest number of nodes which must be traversed in order to travel from one node to another.
The expectation of the total number of components in $G(n,p)$ is $(1-\alpha-c(1-\alpha)^2/2+o(1))n$ \cite{ER60}. Since it is necessary to take at least one measurement for each component, $(1-\alpha-c(1-\alpha)^2/2+o(1))n$ is an expected lower bound of measurements required to identify sparse vectors. % associated with $G(n,p)$.

The diameter $D$ of a disconnected graph is defined to be the largest distance between any pair of nodes that belong to the same component. Since   $D$  is now $\Theta(\log n/ \log (np))$ almost surely \cite{CL01}, then for the radius $R$ of the giant component,  $R \leq D = O(\log n/ \log (np))$, where the second equality holds almost surely.
   We use Algorithm 1 to design measurements on the giant component, %and the number of measurements used to identify $k$-sparse subvectors associated with the giant component is at most $ O(k\log^2 n/\log(np))$ almost surely.  We then
   and then measure every node in the small components directly. Thus, $k$-sparse vectors associated with $G(n,p)$ can be identified almost surely with   $ O(k\log n \log(n/k)/\log(np))+(1-\alpha+o(1))n$ measurements.

  Note that here  almost surely the size of every small component is at most $\frac{\log n+ 2\sqrt{\log n}}{np-1-\log(np)}$ (Lemma 5, \cite{CL01}). If $k= \Omega( \log n)$, %for every small component,  its node values are all non-zero, thus,
  %the number of measurements needed to identify the subvector associated with a small component should be no less than the size of the component. Therefore,   when $k= \Omega( \log n)$,
  almost surely
  $(1-\alpha+o(1))n$ measurements are necessary to identify subvectors associated with small components, and thus necessary for identifying $k$-sparse vectors associated with $G(n,p)$. Combing the arguments, we have

 % \begin{prop}
 \vspace{0.05in}
\textit{  When $1<c=np<\log n$ with  constant $c$, we can identify $k$-sparse vectors associated with $G(n,p)$ almost surely with   $O(k\log n \log (n/k)/\log(np))+(1-\alpha+o(1))n$ measurements. $(1-\alpha-c(1-\alpha)^2/2+o(1))n$ is an expected lower bound of the number of measurements needed. Moreover, if $k= \Omega( \log n)$, almost surely $(1-\alpha+o(1))n$ measurements are necessary to identify $k$-sparse vectors.}
 %  \vspace{0.05in}

 % \end{prop}

  \subsection{$np<1$}

  Since the expectation of the total number of components in $G(n,p)$ with $np<1$ is $n-pn^2/2+O(1)$ \cite{ER60}, %and we need to take at least one measurement of each component,
  then  $n-pn^2/2+O(1)$ is an expected lower bound of the number of measurements required. % to identify  vectors associated with $G(n,p)$.
  Since almost surely all components are of size $O(\log n)$, then we need to take $n$ measurements when $k= \Omega( \log n)$. Therefore,
  %\begin{prop}

   \vspace{0.05in}
\textit{  When  $np<1$, we need at least $n-pn^2/2+O(1)$ measurements to identify $k$-sparse vectors associated with $G(n,p)$ in expectation. Moreover, when $k= \Omega( \log n)$, $n$ measurements are necessary almost surely.}%to identify $k$-sparse vectors almost surely.
  %\vspace{0.05in}

 % \end{prop} 
\section{Adding additional graph constraints} \label{sec:extension}

%We discuss the design of the measurement matrices with graph constraints in the  previous sections.
Our   constructions are based on assumptions (A1) and (A2). Here we  consider   additional graph constraints brought by practical implementation. We first consider measurement construction with   length constraints, since measurements with short length are preferred in practice. We then discuss the scenario that each measurement should pass at least one node in a fixed subset of nodes, since in network applications, one may want to reduce the number of routers that initiate the measurements. %when only a  subset of nodes can act as agents and each measurement should pass at least one agent. %In Section \ref{sec:huberror}, we address the issue that a hub measurement is erroneous, and prove that successful recovery is still achievable in this case. % even though hub measurements are not correct.% in practice one may only measure a small number of nodes in one measurement so
%%as to reduce the overhead and the measurement
%%noise.v

 \subsection{Measurements with short length}
 We have not imposed any constraint on the number of nodes in one measurement. %There is no constraint on the length of a measurement in previous sections, and we assume that a measurement can measure any number of nodes as long as they form a connected subgraph. In practice, however, one may want to reduce the number of nodes measured  in one measurement
 In practice, one may want to take short measurements so as to reduce the communication cost and the measurement noise. We next consider sparse recovery with additional constraint on measurement length, and we discuss two special graphs. % that are introduced earlier.

\subsubsection{Line and Ring}
% To recover $k$ errors associated with a line/ring with $n$ nodes, we provide a construction method using $n+1- \lfloor \frac{n+1}{k+1}\rfloor $ measurements
The construction in Section \ref{sec:line} is optimal for a line in terms of the number of measurements needed, %Theorem \ref{thm:linekcs} also shows that this number is the minimum number of measurements needed for a line  and is no greater than the minimum number of measurements needed for a ring plus one.
% However, when $k$ is a constant, the number of nodes in a measurement, or equivalently,
and the length of each measurement is $\lfloor \frac{n+1}{k+1}\rfloor$, which is proportional to $n$ when $k$ is a constant. Here we provide a different  construction such that the total number of measurements needed to recover associated  $k$-sparse vectors is $k \lceil \frac{n}{k+1}\rceil +1$, but each measurement measures at most $k+2$ nodes. %Since
%  \[  (k\lceil \frac{n}{k+1}\rceil +1)-(n+1-\lfloor\frac{n+1}{k+1}\rfloor) \leq  \max( k-1,1),\]
%and
 %\[  (k\lceil \frac{n}{k+1}\rceil +1)-(n-\lfloor\frac{n}{k+1}\rfloor) \leq k,\]
We also remark that the number of measurements by this construction is  within the minimum plus $\max( k-1,1)$ for a line, and  the minimum plus $k$ for a ring.

 We construct the measurements as follows. Given $k$, let $B^k$ be a $k+1$ by $k+1$ square matrix with  entries of `1' on the main diagonal and the first row, i.e. $B^k_{ii}=1$ %for all $i$
 and $B^k_{1i}=1$ for all $i$. If $k$ is even, %the subdiagonal entries of $B^k$ are all `1's, i.e.
 let $B^k_{i(i-1)}=1$ for all $2 \leq i\leq k+1$; if $k$ is odd,  %the subdiagonal entries of $B^k$ are  `1's except  $B^k_{(k+1)k}$, i.e.
 let $B^k_{i(i-1)}=1$ for all $2 \leq i\leq k$.  $B^k_{ij}=0$ elsewhere. Let $t=\lceil \frac{n}{k+1} \rceil$, we construct a %$n+1-\lfloor \frac{n+1}{k+1}\rfloor$ by $n$
 $(kt+1)$ by $(k+1)t$ matrix $A$ based on $B^k$.   Given set %$S\subseteq \{1,...,n+1-\lfloor \frac{n+1}{k+1}\rfloor\}$ and set $T \subseteq \{1,...,n\}$,
 $S\subseteq \{1,...,kt+1\}$ and set $T \subseteq \{1,...,(k+1)t\}$,
 $A_{ST}$ is the submatrix of $A$ with row indices in $S$ and column indices in $T$. %Let $S_1=\{1,...,2k\}$ and $T_1=\{1,...,2k\}$.
 %For all $i=1,...,t$, let $S_i=\{(i-1)k+1,...,ik+1\}$, and let $T_i=\{(k+1)(i-1)+1,...,(k+1)i \}$.
  For all $i=1,...,t$, %$i=1,...,\lfloor \frac{n}{k+1}\rfloor$,
  let $S_i=\{(i-1)k+1,...,ik+1\}$, and let $T_i=\{(k+1)(i-1)+1,...,(k+1)i \}$. Define $A_{S_i T_i}=B^k$ for all $i$. %$=1,...,\lfloor \frac{n}{k+1}\rfloor$. %Let $r=n-(k+1)\lfloor n/(k+1) \rfloor$. Let $I_r$ be an $r \times r$ identity matrix.
 All the other entries of $A$ are zeros. %For measuring   the line/ring  with $n$ nodes,
 We keep the first $n$ columns of $A$ as a measurement matrix for the line/ring  with $n$ nodes.
  Note that   the last one or serval rows of the reduced matrix can be all zeros, and we just delete such rows, and let the resulting matrix be the measurement matrix. For example, when $k=2$ and $n=9$, we have $t=3$, and
   \begin{equation*}  {\small B^2=\left[ \begin{array}{ccc}  1 & 1 & 1  \\ 1& 1 & 0 \\ 0 & 1 & 1\end{array}\right],} \end{equation*}
   and
   \begin{equation}\label{eqn:Aex1}
   {\small A= \left[ \begin{array}{ccccccccc}  1 & 1 & 1 & 0& 0& 0 & 0 & 0 & 0\\ 1 & 1 & 0 &0&0&0 &0&0&0 \\ 0 & 1 & 1 &1 &1 &1&0&0&0 \\ 0& 0&0&1 &1&0&0&0&0  \\0&0&0&0&1&1 &1&1&1 \\ 0 &0&0&  0& 0&0&1 &1&0 \\0 &0&0&  0& 0&0&0 &1&1\end{array}  \right].}
   \end{equation}
When $k=3$, and $n=8$, we have $t=2$ and
  \begin{equation*}   {\small B^3=\left[ \begin{array}{cccc}  1 & 1 & 1 &1  \\ 1& 1 & 0 & 0 \\ 0 & 1 & 1 & 0 \\ 0 & 0 & 0 & 1\end{array}\right],
   A= \left[ \begin{array}{cccccccc}  1 & 1 & 1 &1 & 0& 0& 0 &0 \\ 1 & 1 & 0 &0&0&0 & 0 &0 \\ 0 & 1 & 1 &0 &0 &0 &0 &0\\ 0& 0&0&1 &1&1&1&1 \\0&0&0&0&1&1& 0&0 \\0&0&0&0&0&1&1& 0\\0&0&0&0&0&0& 0&1   \end{array}  \right].}\end{equation*}
%In the above construction, when $k$ is even,
Each measurement measures at most $k+2$ nodes when $k$ is even and at most $k+1$ nodes when $k$ is odd. We have,
   \begin{theorem}
The above construction can recover $k$-sparse vectors associated with a line/ring  %constructed $k\lceil\frac{n}{k+1}\rceil +1$ measurements,
 %one can identify up to $k$ errors in a line/ring  of $n$ nodes
 with at most $k\lceil\frac{n}{k+1}\rceil +1$ measurements, which is within the minimum number of measurements needed plus $k$. And each measurement measures at most $k+2$ nodes.
 \end{theorem}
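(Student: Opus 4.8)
The plan is to establish the three asserted properties separately: identifiability of all $k$-sparse vectors, the count $k\lceil n/(k+1)\rceil+1$ together with its comparison to the lower bounds of Theorem~\ref{thm:linekcs}, and the per-measurement length bound. The counting and comparison are pure bookkeeping, so I would dispatch them first. By construction $A$ has $kt+1$ rows with $t=\lceil n/(k+1)\rceil$, and deleting the all-zero rows that appear after truncating to the first $n$ columns can only decrease this, giving at most $k\lceil n/(k+1)\rceil+1$ measurements. Substituting $M^L_{k,n}=n+1-\lfloor(n+1)/(k+1)\rfloor$ from Theorem~\ref{thm:linekcs}, the inequality $k\lceil n/(k+1)\rceil+1\le M^L_{k,n}+k$ reduces to $k(t-1)\le n-\lfloor(n+1)/(k+1)\rfloor$, which follows from $n\ge(t-1)(k+1)+1$; the ring case is identical using $M^R_{k,n}\ge n-\lfloor n/(k+1)\rfloor$. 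For the length bound I would read the support of each row off the block layout: the interior rows of a block $B^k$ carry at most two ones, the top row of a block carries $k+1$, and the only long rows are the boundary rows shared by two consecutive blocks, whose support is the union of the bottom row of one $B^k$ and the all-ones top row of the next; counting these consecutive entries yields the stated bound.

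The substantive part is identifiability, and here I would prove that every $2k$ columns of $A$ are linearly independent, equivalently that the only $\bfw$ with $\|\bfw\|_0\le 2k$ and $A\bfw=\mathbf{0}$ is $\bfw=\mathbf{0}$; this is precisely the condition that every $k$-sparse vector is the unique solution of (\ref{eqn:ell0}). Partition the coordinates into the length-$(k+1)$ blocks $T_i$ of the construction, write $\bfw^{(i)}$ for the restriction of $\bfw$ to $T_i$, and set a head $a_i:=w^{(i)}_1$ and a tail $b_i:=w^{(i)}_{k+1}$. The $k-1$ interior rows of block $i$ (those coming from rows $2,\dots,k$ of $B^k$, which are not shared with any neighbour) read $w^{(i)}_{j-1}+w^{(i)}_j=0$ for $2\le j\le k$, forcing the head to alternate: $w^{(i)}_j=(-1)^{j-1}a_i$ for $1\le j\le k$.

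The crucial consequence is a sparsity dichotomy: if $a_i\neq 0$ then block $i$ already contains $k$ nonzero entries, so a $2k$-sparse null vector can have $a_i\neq 0$ in at most two blocks. I would then use the remaining equations to rule this out. The first row, each boundary row $ik+1$, and the last row couple the tail $b_i$ of block $i$ (through the bottom row of $B^k$) to the full block sum of block $i+1$ (through its all-ones top row); summing the alternating head shows this block sum equals $b_i$ when $k$ is even and $a_i+b_i$ when $k$ is odd, so these relations collapse to a two-term recurrence in $(a_i,b_i)$ with endpoint conditions supplied by the first and last rows. When no block is activated the recurrence immediately forces every $b_i=0$; when exactly two blocks are activated the $2k$ head-nonzeros exhaust the sparsity budget, forcing every $b_i=0$, after which the recurrence forces every $a_i=0$, a contradiction; and the single-activated-block case is eliminated the same way, because the recurrence propagates a nonzero tail all the way to the far endpoint where it violates the terminal condition. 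I expect this recurrence-inconsistency step to be the main obstacle, since it is where the even/odd design of the subdiagonal of $B^k$ and the truncation of the last, possibly partial, block must be handled with care; once it is in place, $\bfw=\mathbf{0}$ follows and identifiability is established.
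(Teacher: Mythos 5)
Your identifiability argument is correct, and it takes a genuinely different route from the paper's. The paper first reduces to the untruncated $(kt+1)\times(k+1)t$ matrix and then inducts on the number of blocks: for each $t'$ it analyzes the truncated matrices $A_{t'}$, splitting on whether the last block of a null vector is all-zero (invoke the induction hypothesis and full rank of $B^k$) or all-nonzero (propagate nonzeros backward through the boundary row and the all-ones row to count at least $2k+1$ of them), and it writes out only the even-$k$ case. You instead give a global analysis: interior rows collapse any null vector to per-block head/tail parameters $(a_i,b_i)$, the boundary and end rows become a two-term recurrence with endpoint conditions (for even $k$: $b_1=0$, $b_{i+1}=a_i-b_i$, $b_t=a_t$; for odd $k$: $a_1+b_1=0$, $b_{i+1}=-b_i-a_{i+1}$, $b_t=0$), and the sparsity budget limits the number of activated blocks to two, after which the $0/1/2$-activated cases are each killed by the recurrence. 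I checked both parities and all three cases; the argument closes exactly as you sketch. Your route avoids induction, treats even and odd $k$ uniformly, and actually characterizes the full null space of $A$, which is more than the paper's proof yields. The truncation of the last block that worries you is a non-issue: as the paper does, prove the property for the full $(k+1)t$-column matrix and observe that keeping only the first $n$ columns (and deleting zero rows) cannot create new sparse null vectors. The comparison with Theorem \ref{thm:linekcs} is fine too, though for the ring the crude bound $t-1\le(n-1)/(k+1)$ is not quite enough and the floors must be handled exactly.

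One concrete flaw: your length count, carried out honestly, does not yield the stated bound when $k$ is even. A boundary row is the union of the bottom row of one $B^k$ (two ones when $k$ is even) and the all-ones top row of the next block ($k+1$ ones), hence $k+3$ ones, not $k+2$; the paper's own example (\ref{eqn:Aex1}) with $k=2$ has rows with five ones while $k+2=4$. For odd $k$ the bottom row has a single one and the count is $k+2$ as claimed. This off-by-one is present in the theorem statement itself (and in the sentence preceding it in the paper), and the paper's proof never addresses the length claim at all; still, your assertion that counting "yields the stated bound" should flag this discrepancy rather than paper over it.
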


 \begin{proof}
 We only need to prove that all $k$-sparse vectors in $\mathcal{R}^{(k+1)t}$ can be identified with $A$, which happens if and only if for every  vector $\bfz \neq \bm 0$ such that $A\bfz =\bm 0$, $\bfz$ has at least $2k+1$ non-zero elements \cite{CaT05}.

 If $t=1$, $A$ a $k+1$ by $k+1$ full rank matrix, and the claim holds trivially.
  We next consider $t \geq 2$.  We prove the case when $k$ is even, and skip the similar proof for odd $k$. % The proof for odd $k$ is similar, and is skipped due to space constraint.

%We prove the case that $k$ is even. The proof for the case $k$ is odd is very similar, and we skip its details for simplicity.
%When $k$ is even, consider $k+1$ by $k+1$ matrix $B^k$ and the $(kt+1)$ by $(k+1)t$ matrix $A$ defined as above, and $t=\lceil \frac{n}{k+1} \rceil$. We assume that $t \geq 2$ without loss of generality. When $t=1$, the constructed matrix is a $k+1$ by $k+1$ full rank matrix, and thus can recover all the vectors. The first $n$ columns of $A$ is the measurement matrix. To prove that $k$ errors can be recovered with the measurement matrix, we only need to prove that all $k$-sparse vectors in $\mathcal{R}^{(k+1)t}$ can be identified with $A$, which happens if and only if for every non-zero vector $\bfz$ such that $A\bfz =\bm 0$, $\bfz$ has at least $2k+1$ non-zero elements.

 For each integer $t'$ in $[2, t]$, define a submatrix $A_{t'}$ formed by the first $kt'+1$ rows and the first $(k+1)t'$ columns of $A$. For example, for $A$ in (\ref{eqn:Aex1}), we define
  \begin{equation*}  {\small A_2= \left[ \begin{array}{cccccc}  1 & 1 & 1 & 0& 0& 0 \\ 1 & 1 & 0 &0&0&0 \\ 0 & 1 & 1 &1 &1 &1\\ 0& 0&0&1 &1&0 \\0&0&0&0&1&1 \end{array}  \right], \quad \textrm{and } A_3=A. }
  \end{equation*}

  We will prove by induction on $t'$ that (*) \textit{every non-zero vector $\bfz \in \mathcal{R}^{(k+1)t'}$ such that $A_{t'} \bfz=\bm 0$ holds has at least $2k+1$ non-zero elements for every $t'$ in $[2, t]$}.

 First consider $A_2$, which is a $(2k+1)\times (2k+2)$ matrix. From the last $k$ rows of $A_2$, one can easily argue that for every $\bfz$ such that $A_2 \bfz=\bm 0$, its last $k+1$ entries are either all zeros or all non-zeros. If the last $k+1$ entries of $\bfz$ are all zeros, let  $\bfz'$ denote the subvector containing the first $k+1$ entries of $\bfz$. Then we have $\bm 0=A_2 \bfz=B^k \bfz'$. Since $B^k$ is full rank, then $\bfz'=\bm 0$, which implies that $\bfz =\bm 0$.
 % Let $\bfa_i$ ($i=1,...,2k+1$) denote the $i$th row of $A_2$. For every $\bfz$ such that $A_2 \bfz=\bm 0$, $\bfa_i \bfz=0$ holds for all $i$. We consider two cases: $z_{k+2}=0$ and $z_{k+2} \neq 0$. From $\bfa_i \bfz=0$ for $i=k+2,...,2k+1$ and the construction of $A_2$, one can easily see that if $z_{k+2}$ is zero, then $z_{i}$ should be zero for $i=k+3, ...,2k+2$. In this case, the last $k+1$ entries of $\bfz$ are all zeros. Let $\bfz'$ denote the subvector containing the first $k+1$ entries of $\bfz$, then $\bm 0=A_2 \bfz=B^k \bfz'$. Since $B^k$ is full rank, then $\bfz'=\bm 0$. Therefore, if $z_{k+2}=0$, then $\bfz$ must be a zero vector.

  Now consider the case that  last $k+1$ entries of $\bfz$ are all non-zeros. Since $k+1$ is odd, the sum of these   entries is, %$z_{k+2} \neq 0$. Still from $\bfa_i \bfz=0$ for $i=k+2,...,2k+1$ and the construction of $A_2$, one can check that $z_{i}\neq 0$ for all $i=k+3, ...,2k+2$. Thus, the last $k+1$ entries of $\bfz$ are all nonzero. Moreover, we also check that
 \begin{equation}\label{eqn:med1}
  \sum_{i = k+2}^{2k+2}z_i = z_{k+2} \neq 0.
  \end{equation}
 Let $\bfa_i^T$ ($i=1,...,2k+1$) denote the $i$th row of $A_2$. We have %Since from $\bfa_{k+1} \bfz=0$, we know that
 \begin{equation}
 \label{eqn:med2}
 \bfa_{k+1}^T \bfz= \sum_{i = k}^{2k+2} z_i= 0.
 \end{equation}
 Combining (\ref{eqn:med1}) and (\ref{eqn:med2}), we know that
 \begin{equation}\label{eqn:med3}
 z_{k}+z_{k+1}= - z_{k+2} \neq 0.
 \end{equation}
Thus, at least one of $z_{k}$ and $z_{k+1}$ is non-zero.
Combining (\ref{eqn:med3}) with $\bfa_1^T \bfz=0$, we have
 %\begin{equation}\label{eqn:med4} \sum_{i = 1}^{k+1} = 0.\end{equation}
  %Combining (\ref{eqn:med3}) and (\ref{eqn:med4}),
     one of the first $k-1$ entries of $\bfz$ is non-zero. From $\bfa_i ^T\bfz=0$ for $2 \leq i\leq k-1$, % and the construction of $A_2$,
  one can argue that if one of the first $k-1$ entries of $\bfz$ is non-zero, then all the first $k-1$ entries are non-zero. Therefore, %in the case that  $z_{k+2} \neq 0$,
  %all the first $k-1$ entries and the last $k+1$ entries of $\bfz$ are non-zero, and at least one of  $z_{k}$ and $z_{k+1}$ is nonzero. Thus,
  $\bfz$ has at least $2k+1$ nonzero entries.  (*) holds for $A_2$. %We have thus proved that if $A_2 \bfz=\bm 0$, then $\bfz$ either is a zero vector or has at least $2k+1$ nonzero entries.

 %We have proved the base case $A_2$.
 Now suppose (*) holds for  some $t'$ in $[2, t-1]$. Consider % We next need to prove that the claim holds for
 matrix $A_{t'+1}$. Same as the arguments for $A_2$, one can show that for every  $\bfz \neq \bm 0$ %Let $\bfa_i$ denote the $i$th row of $A_{t'+1}$. For every vector $\bfz$
 such that $A_{t'+1} \bfz=\bm0$, its last $k+1$ entries are either all zeros or all non-zero.  %there are two possibilities, either $z_{(k+1)t'+1}=0$ or $z_{(k+1)t'+1}\neq 0$. If $z_{(k+1)t'+1}=0$, same as the case for $A_2$, one can argue that $z_{i}=0$ for all $i=(k+1)t'+2,...,(k+1)(t'+1)$. The last $k+1$ entries of $\bfz$ are all zeros.
 In the former case,  let $\bfz'$ denote the subvector containing the first $(k+1)t'$ entries of $\bfz$. %Then $\bm 0=A_{t'+1} \bfz=A_{t'} \bfz'$.
 By induction hypothesis, $\bfz'$ %either is a zero vector or
  has at least $2k+1$ nonzero entries, thus so does $\bfz$. %Then $z_{(k+1)t'+1}=0$, $\bfz$ also has either all zeros or at least $2k+1$ nonzero entries.
 %Now consider the case that $z_{(k+1)t'+1}\neq 0$. Same as the arguments for the $A_2$ case, one can check that the

 If the last $k+1$ entries of $\bfz$ are all non-zero, like in the $A_2$ case, we argue that the sum of $z_{(k+1)t'-1}$ and $z_{(k+1)t'}$ is non-zero, which implies that at   least one of them % $z_{(k+1)t'-1}$ and $z_{(k+1)t'}$
 is non-zero. Also consider %We claim that there exist $j$ with $ 1\leq j \leq t'$ such that
% \begin{equation}\label{eqn:med5}\sum_{i=(j-1)(k+1)+1}^{(j-1)(k+1)+k-1} z_i \neq 0 \end{equation}.
 %Suppose not, then for every $ 1\leq j \leq t'$, \begin{equation}\label{eqn:med6}\sum_{i=(j-1)(k+1)+1}^{(j-1)(k+1)+k-1} z_i =0 \end{equation}.
%Combining these equations with
  $\bfa_i^T\bfz=0$ with $i=rk+1$ for every integer $r$ in $[0,t'-1]$, one can argue that there exist $j$ in $[0, t'-1]$ such that
  the sum of all $k-1$ entries from $z_{j(k+1)+1}$ to $z_{j(k+1)+k-1}$ is non-zero.
 %\begin{equation}\label{eqn:med5}\sum_{i=(j-1)(k+1)+1}^{(j-1)(k+1)+k-1} z_i \neq 0. \end{equation}
 Then, from
  % sequentially from $j=1$ to $t'$ that   $z_{(k+1)j-1}+z_{(k+1)j}=0$. This contradicts the above observation that $ z_{(k+1)t'-1}+ z_{(k+1)t'}$ is nonzero. Therefore, the claim follows, and (\ref{eqn:med5}) holds for some $j$.
%From the construction of $A$, and specifically
$\bfa_i^T\bfz=0$ for $i=jk+2,..., jk+k-1$, we know that if the sum of $z_{j(k+1)+1}$ to $z_{j(k+1)+k-1}$ is non-zero, every entry is non-zero. %if one of the entries in the sum of (\ref{eqn:med5}) is non-zero, then all the entries are non-zero. Therefore, all $k-1$ entries from $z_{(j-1)(k+1)+1}$ to $z_{(j-1)(k+1)+k-1}$ are nonzero. %Since we obtained earlier that at least one of  $z_{(k+1)t'-1}$ and $z_{(k+1)t'}$ is nonzero, and the last $k+1$ entries of $\bfz$ are all nonzero,
We conclude that in this case $\bfz$ also has at least $2k+1$ nonzero entries.

By induction over $t'$, every $\bfz \neq \bm 0$ such that $A\bfz=\bm 0$ has at least $2k+1$ non-zero entries, then the result follows.
%
%When $k$ is odd, the proof is basically the same, the only difference in the arguments is that instead of having at least one of  $z_{(k+1)t'-1}$ and $z_{(k+1)t'}$ is nonzero, we have $z_{(k+1)t'}$ is nonzero.
 \end{proof}

%The above construction can recover $k$ errors in a line/ring  with $n$ nodes by $k\lceil\frac{n}{k+1}\rceil +1$ measurements, which is within the minimum plus $k$. The number of nodes measured by each measurement is at most $k+2$.  %In Section \ref{sec:line}, we provide a measurement construction method with each measurement measuring $n+1-\lfloor \frac{n+1}{k+1}\rfloor$ nodes. In order to reduce the length of each measurement, given $k$ and $n$, one can choose a construction method that measures a smaller number of nodes in each measurement. However,
%When $k$ is large, say, $k=\Theta(\log n)$, the two methods measure $\Theta(\log n)$ and $\Theta(n/\log n)$ nodes respectively in each measurement, which are still large when $n$ is large.
This construction measures at most $k+2$ nodes in each measurement. If measurements with constant length are preferred, we provide another construction method such that every measurement only measures at most three nodes. This method requires more measurements,  $(2k-1)\lceil\frac{n}{2k}\rceil +1$ measurements %, instead of $k\lceil\frac{n}{k+1}\rceil +1$ measurements,
to recover $k$-sparse vectors associated with a line/ring.

 Given $k$, let $D^k$ be a $2k$ by $2k$ square matrix having  entries of `1' on the main diagonal and the subdiagonal and `0' elsewhere, i.e. $D^k_{ii}=1$ for all $i$ and $D^k_{i(i-1)}=1$ for all $i\geq 2$, and $D^k_{ij}=0$ elsewhere. Let $t=\lceil \frac{n}{2k} \rceil$, we construct a $(2kt-t+1)$ by $2kt$ matrix $A$ based on $D^k$.  % Given set $S\subseteq \{1,...,2kt-t+1\}$ and set $T \subseteq \{1,...,2kt\}$, $A_{ST}$ is the submatrix of $A$ with row indices in $S$ and column indices in $T$. %Let $S_1=\{1,...,2k\}$ and $T_1=\{1,...,2k\}$.
 Let $S_i=\{(i-1)(2k-1)+1,...,i(2k-1)+1\}$, and let $T_i=\{2k(i-1)+1,...,2ki \}$. Define $A_{S_i T_i}=D^k$ for all $i=1,...,t$, and $A_{ij}=0$ elsewhere. We keep the first $n$ columns of $A$ as the measurement matrix. For example, when $k=2$ and $n=8$, we have
  \begin{equation*}  {\small D^2=\left[ \begin{array}{cccc}  1 &0 & 0& 0 \\ 1& 1 & 0 & 0 \\ 0 & 1 & 1 & 0\\ 0 & 0 & 1 & 1\end{array}\right],} \end{equation*}
   and
   \begin{equation}\label{eqn:Aex2}
   {\small A= \left[ \begin{array}{cccccccc}  1 & 0 & 0 & 0& 0& 0 & 0 & 0 \\ 1 & 1 & 0 &0&0&0 &0&0 \\ 0 & 1 & 1 &0 &0 &0&0&0 \\ 0& 0&1&1 &1&0&0&0  \\0&0&0&0&1&1 &0&0 \\ 0 &0&0&  0& 0&1&1 &0 \\0 &0&0&  0& 0&0&1&1\end{array}  \right].}
   \end{equation}
 % In fact, we have
  \begin{theorem}
 The above constructed $(2k-1)\lceil\frac{n}{2k}\rceil +1$ measurements  can identify $k$-sparse vectors associated with a line/ring  of $n$ nodes, and each measurement measures at most three nodes.
 \end{theorem}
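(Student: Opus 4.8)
The plan is to reduce the claim to a statement about the spark of the measurement matrix. By the standard recovery criterion used for the earlier $B^k$ construction (cited as \cite{CaT05}), the matrix identifies all $k$-sparse vectors if and only if every nonzero $\bfz$ with $A\bfz=\bm 0$ satisfies $\|\bfz\|_0\geq 2k+1$, i.e.\ every $2k$ columns are linearly independent. Deleting the all-zero rows does not change the null space, and keeping only the first $n$ columns merely restricts attention to null vectors supported on the first $n$ coordinates; so it suffices to prove $\|\bfz\|_0\geq 2k+1$ for the full $(2kt+1-t)\times 2kt$ matrix $A$ assembled from the $D^k$ blocks, since any null vector of the truncation extends by zeros to a null vector of the full $A$ with the same support. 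The ``at most three nodes'' claim is immediate: each row of $D^k$ has its ones on two consecutive columns (diagonal and subdiagonal), while each shared junction row concatenates the last row of one block (two consecutive ones) with the first row of the next (one further consecutive one), giving at most three ones, all on consecutive indices, hence feasible on a line/ring. The case $t=1$ is trivial ($A=D^k$ is invertible), so I assume $t\geq 2$.

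First I would set up a clean parametrization of the null space. Index the $2kt$ columns as $t$ consecutive blocks of size $2k$, and write $a_i$ for the first and $b_i$ for the last coordinate of block $i$. The interior rows of each block read $z_{m-1}+z_m=0$, forcing the first $2k-1$ coordinates to alternate, $z^{(i)}_m=(-1)^{m-1}a_i$; these are thus either all zero (if $a_i=0$) or all nonzero. The top row gives $a_1=0$; the final row makes block $t$ alternate through all $2k$ of its entries (all zero or all nonzero according to $a_t$); and each junction row gives $a_i+b_i+a_{i+1}=0$, i.e.\ the recurrence $a_{i+1}=-a_i-b_i$ with $b_i=-(a_i+a_{i+1})$. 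A null vector is therefore determined by $a_1=0,a_2,\dots,a_t$ with $a_t$ forced consistent, the free parameters being $b_1,\dots,b_{t-1}$, matching the expected null dimension $t-1$.

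The counting step is the heart of the argument. I would write $\|\bfz\|_0=\sum_{i<t}\big((2k-1)\,[a_i\neq0]+[b_i\neq0]\big)+2k\,[a_t\neq0]$ and split on whether $a_t=0$. If $a_t\neq0$, block $t$ already gives $2k$ nonzeros, and $a_t=-a_{t-1}-b_{t-1}\neq0$ forces either $b_{t-1}\neq0$ (one more, giving $2k+1$) or $a_{t-1}\neq0$ (block $t-1$ giving $2k-1$ more, hence $4k-1\geq 2k+1$). If $a_t=0$, then since $\bfz\neq\bm 0$ and $a_1=a_t=0$ the set of indices $i\in[2,t-1]$ with $a_i\neq0$ is nonempty; let $i_0,i_1$ be its minimum and maximum. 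Minimality and maximality give $a_{i_0-1}=a_{i_1+1}=0$, so $b_{i_0-1}=-a_{i_0}\neq0$ and $b_{i_1}=-a_{i_1}\neq0$ are nonzero last-coordinates in the distinct blocks $i_0-1$ and $i_1$, while block $i_0$ contributes $2k-1$ from $a_{i_0}\neq0$; summing gives $\|\bfz\|_0\geq(2k-1)+1+1=2k+1$ (the single-run case $i_0=i_1$ is checked identically, block $i_0$ then contributing $2k$ via its own $b_{i_0}$ together with the boundary term $b_{i_0-1}$).

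The main obstacle I anticipate is the bookkeeping in the $a_t=0$ case: one must verify that the two boundary nonzero $b$'s are genuinely distinct coordinates and are charged separately from the $2k-1$ coordinates attributed to $a_{i_0}$, and the degenerate single-run case must be dispatched so that nothing is double counted. As a reserve, paralleling the earlier $B^k$ theorem, I would keep an induction on the number of blocks $t'$ using the truncated matrices $A_{t'}$: the last $2k$ entries of any null vector are all zero or all nonzero, reducing to the induction hypothesis in the former case and to an explicit alternating tail in the latter. I expect the direct null-space parametrization to be cleaner and would present that, falling back on induction only if the block coupling resists the global count.
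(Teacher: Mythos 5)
Your proposal is correct, and it takes a genuinely different route from the paper's proof. The paper establishes the same spark property ($\|\bfz\|_0\geq 2k+1$ for every nonzero null vector) by induction on the number of blocks: it defines truncations $A_{t'}$ consisting of the first $2kt'-t'+1$ rows and first $2kt'$ columns, checks the base case $A_2$ (the first $2k-1$ coordinates of any null vector must vanish, and the last block's entries are all zero or all nonzero), and in the inductive step splits on whether the first entry of the newest block vanishes --- reducing to the hypothesis if it does, and otherwise collecting $2k$ nonzeros from that block plus one more forced by the junction row. You instead solve the null space in closed form: in-block alternation $z^{(i)}_m=(-1)^{m-1}a_i$, the junction recurrence $a_i+b_i+a_{i+1}=0$, and the boundary conditions $a_1=0$, $b_t=-a_t$, which identify the null space exactly (dimension $t-1$) and turn the spark bound into a one-shot count split on $a_t\neq 0$ versus $a_t=0$, the latter handled by the minimal and maximal indices $i_0,i_1$ with $a_i\neq 0$. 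Both proofs rest on the same two local facts, but yours buys an exact description of every null vector and avoids induction bookkeeping, while the paper's reads as a short delta on the induction it already ran for the $B^k$ construction. Two further points in your favor: you make explicit the reductions the paper leaves implicit (deleting all-zero rows, and that a null vector of the $n$-column truncation extends by zeros to a null vector of the full matrix with the same support), and your treatment of the single-run case $i_0=i_1$ correctly charges $b_{i_0}$ inside block $i_0$ and $b_{i_0-1}$ in the distinct block $i_0-1$, so nothing is double counted. One tiny remark: when $t=2$ the case $a_t=0$ is vacuous --- the index range $[2,t-1]$ is empty, so your nonemptiness derivation simply shows that no nonzero null vector falls in that case --- and your argument absorbs this without modification.
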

\begin{proof}
%Consider the $(2kt-t+1)$ by $2kt$ matrix $A$ defined as above where $t=\lceil \frac{n}{2k} \rceil$. We assume that $t \geq 2$ without loss of generality. When $t=1$, the matrix constructed is a $2k$ by $2k$ full rank matrix, and thus can recover all the vectors. The first $n$ columns of $A$ is the measurement matrix. To prove that $k$ errors can be recovered with the measurement matrix, we only need to prove that all $k$-sparse vectors in $\mathcal{R}^{2kt}$ can be identified with $A$, which happens if and only if for every non-zero vector $\bfz$ such that $A\bfz =\bm 0$, $\bfz$ has at least $2k+1$ non-zero elements.
When $t=1$, $A$ is a full rank square matrix. We focus on the case that $t \geq 2$.
For each integer $t'$ in $[2, t]$, define a submatrix $A_{t'}$ formed by the first $2kt'-t'+1$ rows and the first $2kt'$ columns of $A$. We will prove by induction on $t'$ that every  $\bfz \neq \bm 0$ such that $A_{t'} \bfz=\bm 0$ holds has at least $2k+1$ non-zero elements for every $t'$ in $[2, t]$.

First consider $A_2$. For $A$ in (\ref{eqn:Aex2}), $A_2=A$. From the first $2k-1$ rows of $A_2$, one can check that for every $\bfz$ such that $A_2 \bfz=\bm 0$, its first $2k-1$ entries are zeros. % Let $\bfa_i$ denote the $i$th row of $A_2$. For every $\bfz$ such that $A_2 \bfz=\bm 0$, $\bfa_i \bfz=0$ holds for all $i=1,...,4k-1$. Since $\bfa_1 \bfz=z_1$, then $z_1=0$. Since $\bfa_2\bfz=z_1+z_2=0$ and $z_1=0$, then we have $z_2=0$. Similarly we can argue that $z_i=0$ for every $i=1,..., 2k-1$, for every $\bfz$ such that $A_2 \bfz=\bm 0$. Note that \[\bfa_{2k}\bfz=z_{2k-1}+z_{2k}+z_{2k+1}=0\] and $z_{2k-1}=0$, then we know
  From the $2k$th row of $A_2$, we know that $z_{2k}$ and $z_{2k+1}$ are either both zeros or both non-zero. In the former case,  the remaining $2k-1$ entries of $\bfz$ must be zeros, thus, $\bfz=\bm0$. In the latter case, one can check that   the remaining $2k-1$ entries are all non-zero, and therefore $\bfz$ has $2k+1$ non-zero entries.  %If $z_{2k}$ and $z_{2k+1}$ are both zero, considering $\bfa_i \bfz=0$ for every $i=2k+1, ..., 4k$, we know that $z_{i}=0$ for every $i$ by similar arguments, thus $\bfz=\bm0$. If $z_{2k}$ and $z_{2k+1}$ are both non-zero, since $\bfa_{2k+1}\bfz=z_{2k+1}+z_{2k+2}=0$, then $z_{2k+2}\neq 0$. We can argue one by one that $z_{i} \neq 0$ for every $i \geq 2k+3$. Then $\bfz$ has $2k+1$ non-zero entries from index $2k$ to $4k$. Therefore, for every non-zero $\bfz$ such that $A_2 \bfz=\bm0$, $\bfz$ has $2k+1$ non-zero elements.

Now suppose the claim holds for some $t'$ in $[2, t-1]$. %We next need to prove that the claim holds for matrix $A_{t'+1}$. Let $\bfa_i$ denote the $i$th row of $A_{t'+1}$. For every
Consider vector $\bfz \neq \bm 0$ such that $A_{t'+1} \bfz=\bm0$. If  $z_{2kt'+1}=0$, it is easy to see that  the last $2k$ entries of $\bfz$ are all zeros. Then by induction hypothesis, at least  $2k+1$ entries of the first $2kt'$ elements of $\bfz$ are non-zero. If  $z_{2kt'+1}\neq 0$, one can check that the last $2k-1$ entries of $\bfz$ are all non-zero, and at least one of  $z_{2kt'-1}$ and $z_{2kt'}$ is non-zero. Thus, $\bfz$ also has at least $2k+1$ non-zero entries in this case. %there are two possibilities, either $z_{2kt'+1}=0$ or $z_{2kt'+1}\neq 0$. If $z_{2kt'+1}=0$, then $\bfa_{2kt'-t'+2}\bfz=z_{2kt'+1}+z_{2kt'+2}=0$, then $z_{2kt'+2}=0$. Since $\bfa_{i}\bfz=0$ for all $i =2kt'-t'+3,..., 2k(t'+1)-t'$, we can obtain that $z_{i}=0$ for all $i \geq 2kt'+3$ by similar arguments. Then the last $2k$ entries of $\bfz$ are all zero. Let vector $\bfz'$ contain the first $2kt'$ entries of $\bfz$, then $A_{t'} \bfz'=\bm0$. By induction hypothesis, either $\bfz'=\bm0$ or $\bfz'$ has at least $2k+1$ non-zero entries. Thus, $\bfz$ is either $\bm0$ or contains at least $2k+1$ non-zero entries. Now consider the case that $z_{2kt'+1}\neq 0$. Since \[\bfa_{2kt'-t'+1}\bfz=z_{2kt'-1}+z_{2kt'}+z_{2kt'+1}=0,\] then at least one of  $z_{2kt'-1}$ and $z_{2kt'}$ is non-zero. Since $\bfa_{2kt'-t'+2}\bfz=z_{2kt'+1}+z_{2kt'+2}=0$, then $z_{2kt'+2}\neq0$. We can argue that $z_{i} \neq 0$ for all $i \geq 2kt'+3$. Thus, $\bfz$ has at least $2k+1$ non-zero entries.

By induction over $t'$, every $\bfz \neq \bm 0$ such that $A\bfz=\bm 0$ has at least $2k+1$ non-zero entries, then the theorem follows.
\end{proof}

The number of measurements %5needed to identify $k$ errors
by this construction %method is  $(2k-1)\lceil\frac{n}{2k}\rceil +1$, which
is greater than those of the previous methods. %the number of measurements by our method in Section \ref{sec:line} and the method at the beginning of this section. However,
But the advantage of this construction is that the number of nodes in each measurement is at most three, no matter how large $n$ and $k$ is.

\subsubsection{Ring with each node connecting to four  neighbors}
We next consider   $\mathcal{G}^4$ in Fig. \ref{fig:ring4} (a). %where each node connects to four closet neighbors, as shown in Fig. \ref{fig:ring4} (a). Compared with the construction in Section \ref{sec:ring4},
 We further impose the constraint that the number of nodes in each measurement cannot exceed $d$ for some predetermined integer $d$. We  neglect  $\lfloor \cdot \rfloor$ and $\lceil \cdot \rceil$ for notational simplicity. %We assume $d>k$ here. %In Section \ref{sec:ring4}, our deterministic construction method uses all the even nodes as a hub to measure the odd nodes and vice versa, and therefore each measurement measures at least $n/2$ nodes. The random construction method based on Markov chain also measures at least $n/2$ nodes in a measurement. Therefore, for any $d<n/2$, our previous construction methods do not apply.

%We still use some even nodes as a hub to measure the odd nodes that are directly connected to them. But in order to make sure that the number of nodes in each measurement does not exceed $d$,  each hub contains at most $\lfloor d/2 \rfloor$ nodes, and the number of nodes measured by each hub is also at most $\lceil d/2 \rceil$. Then
All the even nodes are divided into $n/d$ groups. Each group contains $d/2$ consecutive even nodes and is used as a hub to measure $d/2$ odd nodes that have direct edges with nodes in the hub. %Then we can recover $k$ errors in the  $d/2$ odd nodes with $M^C_{k,d/2}+1$ measurements using the corresponding even nodes as a hub.
Then we can identify the values related to all the odd nodes with   $nM^C_{k,d/2}/d +n/d$ measurements, and the number of nodes in each measurement does not exceed $d$. We then measure the even nodes with groups of odd nodes as hubs. %, and the number of measurements needed is $nM^C_{k,d/2}/d$. % since we know the values of all the odd nodes and do not need to measure the hubs separately.
In total, the number of measurements is $2nM^C_{k,d/2}/d +2n/d$, which is $O(2kn \log(d/2)/d)$. When $d$ equals to $n$, the result coincides with Theorem \ref{thm:ring4}. Since $n/d$ measurements are needed to measure each node  at least once,   we have
\begin{theorem}
The number of  measurements needed to recover $k$-sparse vectors associated with $\mathcal{G}^4$ with each measurement containing at most $d$ nodes is lower bounded by $n/d$, and upper bounded by $O(2kn \log(d/2)/d)$.
\end{theorem}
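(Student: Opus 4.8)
The plan is to establish the two bounds separately, leaning on the hub machinery of Theorem \ref{thm:hub} and Corollary \ref{thm:ring4} but now forcing every measurement to stay short.

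For the upper bound, I would refine the two-coloring construction behind Corollary \ref{thm:ring4}. Numbering the nodes $1,\dots,n$ around the ring, in $\mathcal{G}^4$ node $i$ is adjacent to $i\pm 1$ and $i\pm 2 \pmod n$, so any block of $d/2$ consecutive even nodes is connected (successive even nodes differ by $2$ and hence are joined by an edge) and is therefore a legal hub. I would partition the even nodes into the $n/d$ such blocks; each block is adjacent to the $\approx d/2$ odd nodes interleaved with it, so by the hub definition the block is a hub for those odd nodes. Applying Theorem \ref{thm:hub} to each block recovers the (at most $k$-sparse) restriction of the signal to its adjacent odd nodes in $M^C_{k,d/2}+1$ measurements, and every such measurement contains at most $d/2$ hub nodes plus at most $d/2$ target nodes, i.e. at most $d$ nodes, as required. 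Summing over the $n/d$ blocks recovers all odd-node values with $nM^C_{k,d/2}/d+n/d$ measurements; the symmetric construction with blocks of consecutive odd nodes as hubs recovers the even-node values with the same count. Adding the two and substituting $M^C_{k,d/2}=O(k\log(d/(2k)))$ from (\ref{eqn:MC}) yields the claimed $O(2kn\log(d/2)/d)$.

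The step I expect to require the most care is the bookkeeping at the block boundaries together with the ring wrap-around. An odd node sitting between two consecutive even blocks is adjacent to both, so I must fix a convention that assigns each odd node to exactly one hub, so that the blocks induce an honest partition of the odd-node set (no node missed, none double-counted), while keeping each measurement length at most $d$; the $\pm 2$ adjacency and the mod-$n$ closure make these edge cases slightly delicate but not deep. The sparsity accounting is automatic: since the full odd-node subvector is at most $k$-sparse, so is its restriction to any block, hence Theorem \ref{thm:hub} applies with parameter $k$ on every block.

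For the lower bound, I would use a simple covering argument. For recovery to be possible, every node must appear in at least one measurement: if some node $u$ were never measured, the all-zero signal and the $1$-sparse signal supported on $u$ would produce identical measurements and be indistinguishable. Since each measurement contains at most $d$ nodes, covering all $n$ nodes forces at least $\lceil n/d\rceil$ measurements, giving the stated $n/d$ lower bound.
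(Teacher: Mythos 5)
Your proposal is correct and takes essentially the same route as the paper: the paper likewise partitions the even nodes into $n/d$ blocks of $d/2$ consecutive even nodes, uses each block as a hub for the interleaved odd nodes (and then symmetrically uses odd blocks to recover the even nodes), giving $2nM^C_{k,d/2}/d + 2n/d = O(2kn\log(d/2)/d)$ measurements each containing at most $d$ nodes, and it justifies the lower bound by the same covering observation that every node must appear in some measurement of size at most $d$. Your extra details (the indistinguishability argument behind ``every node must be measured'' and the block-boundary bookkeeping) merely make explicit what the paper leaves implicit.
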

The ratio of the number of measurements by our construction to the minimum number needed with length constraint is within $Ck\log(d/2)$ for some constant $C$.

\subsection{Measurements passing at least one node in a fixed subset}
%\subsection{Limited number of agents}
\begin{figure}[h]
\begin{center}
\includegraphics[scale=0.2]{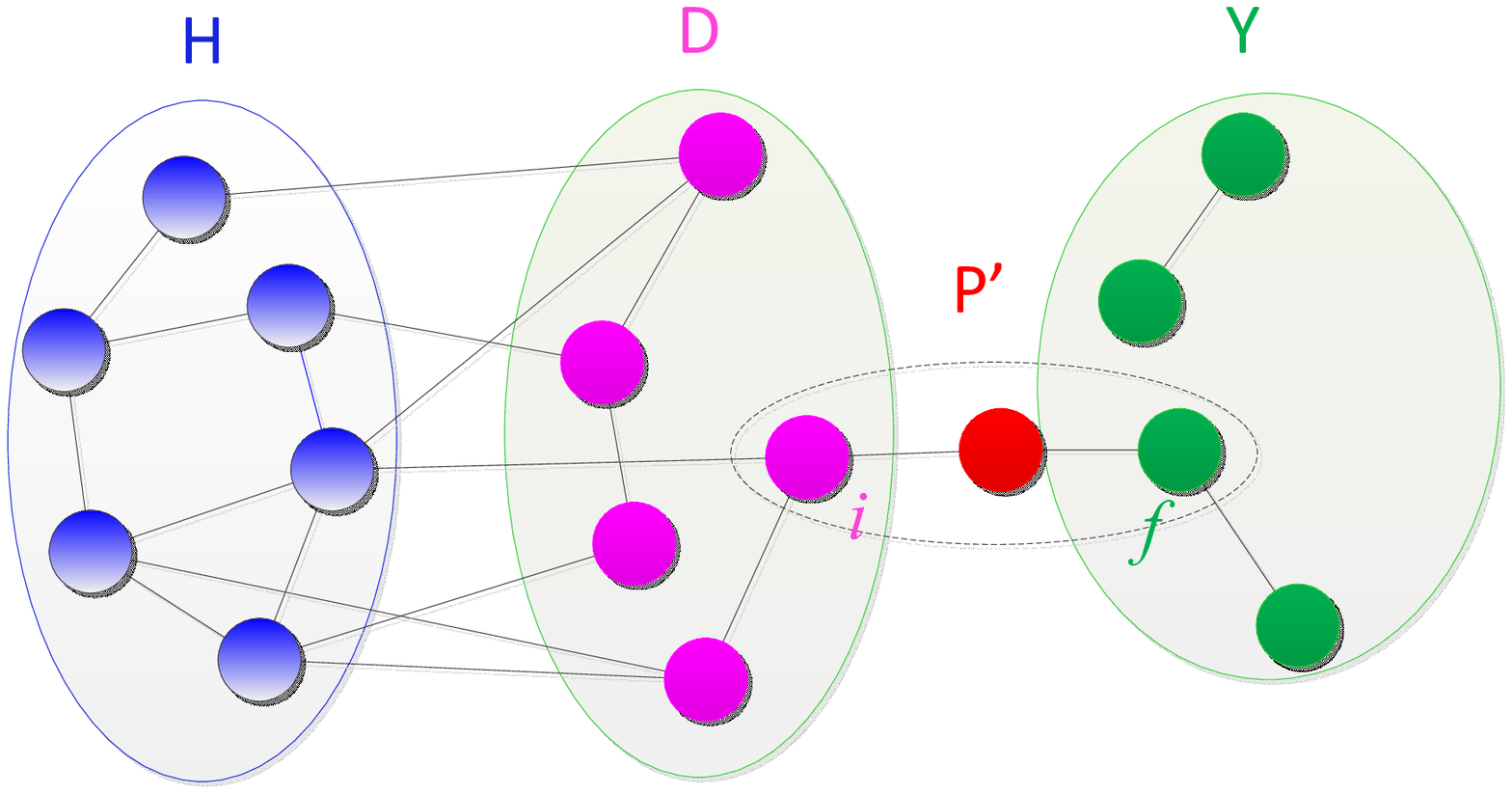}
  \caption{When $H \cap Y =\Phi$, use  hub $H'=H \cup P'$ to measure nodes $D\backslash i$}\label{fig:newhub}
  \end{center}
\end{figure}

Recall that in network delay monitoring, a router sends a probing packet to measure the sum of delays on links that the packet transverses. Then every measurement initiated by this router measures the delay on at least one link that is connected to the router. In order to reduce the monitoring cost,  one may only employ several routers to initiate measurements, thus, each measurement would include at least one link that is connected to these routers. In the graph model $G=(V,E)$ we consider in this paper, it is equivalent to the requirement that every  measurement should contain at least one node in a fixed subset of nodes $Y \subset V$. We will show that this requirement can be achieved with small modifications to Algorithm 1.

%Recall that we assume that one agent is chosen to monitor one aggregate measurement. % for a connected subgraph $G_S$, %in assumption (A1), in order to measure the sum of nodes in set $S$ such that $G_S$ is connected, an
%agent $u$ in $S$ aggregates the sum of all the nodes in $S$, and sends the measurement to the central operator.
%Previously we assume that a measurement can measure any subset of nodes as long as they form a connected subgraph. Recall that the value of each measurement is aggregated at one node in the set of nodes that are measured. We call such a node which finally obtains the value of a measurement as an ``agent''.
%In order to recover the sparse vector,
%The central operator then asks  all the agents for the measurements and run the recovering algorithm.
%One may want to reduce the total number of agents in a network so as to reduce the monitoring and transmission costs. If only a small set $Y$ of nodes can perform as the agents, %that requires each measurement passing at least one of the agents. Therefore, in this setup, we further require that
%every constructed measurement is then required to pass at least one node in $Y$.

%This can be achieved with small modification to the construction algorithm.  %all the measurements will contain at least one node in a fixed subset of nodes.
After step 3 in Algorithm 1, %Let $Y$ denote the set of agents.
let $H$ denote the currently chosen hub, and let $D$ denote the set of nodes that one needs to design measurements via hub $H$. % i.e., $G_H$ is connected, and for every node $i$ in $D$, there exists some node $j$ in $H$ such that edge $(i,j)$ exists. %There are two cases to consider. Case one:
If $H \cap Y$ is not empty,  %That means the hub $H$ contains at least one node in $F$.
since every measurement constructed to measure nodes in $D$ should contain all the nodes in the $H$,
 it contains at least one node in $Y$ automatically.  If $H \cap Y$ is empty, we want to find a new hub that contains at least one node in $Y$. If there exists a path $P$ in $G$ from some node $j$ in $H$ to some node $f$ in $Y$ such that $P$ does not contain any node in $D$, then let  $\hat{H}:=H \cup P$ be the new hub, and design measurements for $D$ using hub $\hat{H}$. Then every measurement contains all nodes in $\hat{H}$ and thus the node $f$. If such a path does not exist, pick any node $i$ in $D$ and any node $f$ in $Y$, find the shortest path $P'$ between $i$ and $f$.  %Since the graph is connected, then every node $f$ in $F$ is connected to some node in $H$ via some path. Since every node in $D$ is directly connected to some node in $H$, then for every $f$ in $F$, there always exists a path $P$ from $f$ to some node $j$ in $H$ such that $P$ contains at most one node in $D$, let $i^*$ denote such a node if exists.
 Let $H':=H \cup P'$ be the hub, and let $D':=D\backslash i$ be the set of nodes that can be measured via $H'$, see Fig. \ref{fig:newhub}. %Since $H'$ contains $f$ in $F$, then with this simple modification,
 Then every measurement containing hub $H'$ contains $f$. %, and the number of nodes that can be measured together is only one less than the number of nodes measured in the original method.
 Since node $i$ belongs to $H'$, we need two additional measurements passing $f$ to measure it. One measures    $P'$, and the other measures $P'\backslash i$. With this simple modification, we can measure  nodes in $D$ with each measurement containing  one node in $Y$, and the total number of measurements increases by at most two.
 \floatname{algorithm}{Subroutine}
\setcounter{algorithm}{2}
 \begin{algorithm}
\begin{algorithmic}[1]
\REQUIRE hub $H$, set $D$ of nodes to measure, set $Y$ of fixed nodes, $G$
 \IF {$H \cap Y \neq \Phi$}
 \STATE Design $f(k,|D|)+1$ measurements to recover $k$-sparse vectors associated with $D$ using $H$ as a hub.
 \ELSE
 \STATE Find the shortest path between every node in $H$ and every node in $Y$.
 \IF {there exists a shortest path $P$ s.t. $P \cap D = \Phi$}
 \STATE  Design $f(k,|D|)+1$ measurements to recover nodes in $D$ using $\hat{H}=H \cup P$ as a hub.
\ELSE
\STATE pick a  node $i$ in $D$ and a node $f$ in $Y$, find the shortest path $P'$ between $i$ and $f$.
\STATE  $D':=D\backslash i$, $H':=H \cup P'$, design  $f(k,|D'|)+1$ measurements to recover  $D'$ with $H'$ as a hub.
\STATE Measure $P'$ and $P'\backslash i$ to recover node $i$.
 \ENDIF
  \ENDIF
 %\RETURN $G$, $H$
 \end{algorithmic}
 \caption{\textbf{Agent}($H$, $D$, $Y$, $G$)}
 \end{algorithm}

We summarize the above modification in subroutine \textbf{Agent}. %($H$, $D$, $Y$, $G$).
For measurement design on general graphs, %with additional constraint that every measurement passes at least one agent in $Y$,
we first replace step 4 in Algorithm 1 in Section \ref{sec:algo} with subroutine \text{Agent($V\backslash S$, $S$, $Y$, $G$)}. %measurements on general graphs. In each iteration, we can apply the above modification to make every measurement contain at least one node in set $F$.
Then in each iteration the number of measurements is increased by at most two. We then replace step 9 %To measure the last node $i$ in the network after several iterations of reduction, we simple need to
with measuring the paths $P^*$ and $P^*\backslash n_{\textrm{last}}$, where $n_{\textrm{last}}$ is the last node in $G$, and $P^*$ connects $n_{\textrm{last}}$ to any node $j$ in $Y$ on the original graph. Therefore, the total number of measurements needed %to recover $k$ errors in a network with $n$ nodes
by the modified algorithm is upper bounded by $Rf(k,n)+3R+2$, and each measurement in the modified version contains at least one node in  $Y$.

 \section{Sensitivity to  hub measurement errors}\label{sec:huberror}
 In constructions based on the use of a hub, in order to measure nodes in $S$ using hub $H$, we
%The key idea in design measurements for a set $S$ of nodes using a hub $H$ is that %on graphs is that we can use a connected subset $H$ of nodes as a hub to freely measure the set $S$ of nodes that are directly connected to the hub. We
%measure the hub with one measurement.
%the sum of any subset of nodes in $S$ is obtained by
first measure the sum of nodes in $H$, and then delete it from other measurements to obtain the sum of some subset of nodes in $S$.  This arises the issue that if the sum of $H$ is not measured  correctly, %and all the measurements we take over $S$ using $H$ a hub are correct,
this single error would be introduced into all the measurements. % since we need to delete the erroneous measurement of $H$ from every other measurement. %in order to recover the unknown vector associated with $S$.
Here we prove that successful recovery is still achievable when a hub measurement is erroneous.

Mathematically, let $\bfx_S$ denote the sparse vector associated with $S$, and let $\bfx_H$ denote the vector associated with $H$ and let $A^{m\times |S|}$ be a measurement matrix that can identify $k$-sparse vectors associated with a complete graph of $|S|$ nodes. We arrange the vector $\bfx$ such that $\bfx=[\bfx_S^T \quad \bfx_H^T]^T$, then
\[F= \left[ \begin{array}{cc}
A &  W^{m \times |H|} \\
\bm0_{|S|}^T & \bm1_{|H|}^T
\end{array}
 \right] \]
  is the measurement matrix for detecting $k$ non-zeros in $S$ using hub $H$, where  $W$ is %an $m$ by $|H|$
  a matrix with all `1's,  %$\bm0_{|S|}$ is a $|S|$-dimensional column vector of all `0's, and $\bm1_{|H|}$ is a $|H|$-dimensional column vector of all `1's.
  $\bm0$ is a  column vector of all `0's, and $\bm1$ is a   column vector of all `1's.
   Let vector $\bfz$ denote the first $m$ measurements, and let $z_0$ denote the last measurement of the hub $H$. Then
  \[\left[ \begin{array}{c}
 \bfz \\
z_0
\end{array}
 \right] = \left[ \begin{array}{c}
A\bfx_S+\bm1^T \bfx_H \bm 1_m \\
\bm1^T \bfx_H
\end{array}
 \right],\] or equivalently
  \begin{equation}\label{eqn:recoverhub}\bfz-z_0 \bm 1_m=A\bfx_S.\end{equation}
    % We choose $A$ and the recovery method such that $\bfx_S$ can be correctly covered given $\bfz$ and $z_0$. % and $A$ by the Compressed Sensing theory.
     If there is some error $e_0$ in the last measurement, i.e., instead of $z_0$, the actual measurement we obtain is
     \[\hat{z}_0= \bm1^T \bfx_H+ e_0,
    \]
    %then when recovering $\bfx_S$, we have
    %\[A\bfx_S=\bfz-\hat{z}_0 \bm1_m= \bfz-z_0\bm1_m-e_0\bm1_m.\]
    % Then error $e_0$ in the hub measurement can in fact lead to errors in every other measurement of nodes in $S$, and finally the error in the recovery of $\bfx_S$.
$e_0$ hurts the recovery accuracy of $\bfx_S$ through (\ref{eqn:recoverhub}).

 To eliminate the impact of $e_0$,  %the errors in the hub measurements on the recovery accuracy,
 we model it as an entry of an augmented sparse signal to recover. %the errors in the hub measurements as entries of an augmented sparse signal to recover. In the above example,
 Let $\bfx'=[\bfx^T \quad e_0]^T$, and $A'=[A \quad -\bm1_m]$,  %let $F'=[F \quad \bfe_{m+1}]$, where $\bfe_i$ is a column vector with `1' on the $i$th entry and `0' elsewhere. Then the measurements and the augmented signal are related by
% \[\left[ \begin{array}{c}
% \bfz \\
%\hat{z}_0
%\end{array}
% \right] = F' \bfx'=F\bfx+ e_0 \bfe_{m+1} =\left[ \begin{array}{c}
%A\bfx_S+\bm1^T \bfx_H  \\
%\bm1^T \bfx_H+e_0
%\end{array}
% \right].\]
% %$[\bfz^T \hat{z}_0]^T=A'\bfx'$.
% When recovering $\bfx_S$, we delete the sum of the hub from every other measurement, and the obtained equations are \[A\bfx_S-e_0 \bm1_m=\bfz-\hat{z}_0 \bm1_m,\] with the equivalent matrix form
we have
\begin{equation}\label{eqn:recoverhub2}
A' \bfx' =\bfz-\hat{z}_0 \bm1_m.
 \end{equation}
 %where $A'=[A \quad -\bm1_m]$, and $\bfx'_S=[\bfx_S^T \quad e_0]^T$. We know that with the measurement matrix $A$, one can recover all $k$-sparse $|S|$-dimensional vectors $\bfx_S$, but with the erroneous hub measurement,
Then, recovering $\bfx_S$   in the presence of hub error $e_0$ is equivalent to recovering $k+1$-sparse vector $\bfx'$ from (\ref{eqn:recoverhub2}). %We next show that %with the measurement matrix $A'$? If the answer is yes, then with the measurement design method we proposed earlier, by augmenting the measurement matrix and the sparse vector to recover, we can easily recover the sparse vector together with the errors, if any,  in the hub measurements. We next show that
%one can indeed correctly recover $\bfx'_S$ from (\ref{eqn:recoverub2}). %under certain conditions, the statement is indeed true, and with the same way of measurement construction as we proposed earlier, one can recover sparse vectors with the presence of errors in hub measurements.

% Since the recovery performance varies for different recovery methods, we focus on the $\ell_1$-minimization method the widely used in Compressed Sensing. Given measurement matrix $A$ and the measurements $\bfy=A\bfx$, $\ell_1$-minimization returns vector $\bfx^*$ with the least $\ell_1$-norm among all the vectors $\bfx'$ such that $A\bfx'=\bfy$ and uses $\bfx^*$ as an estimate the unknown vector $\bfx$. The following lemma provides the equivalent null space condition of successful sparse recovery via $\ell_1$-minimization when the hub error exists.
%
% \begin{lemma}\label{lem:nullspace}
% Given the augmented matrix $A'=[A \quad -\bm1_m]$, $\ell_1$-minimization  successfully recovers $k$-sparse vectors $\bfx_S \in \mathcal{R}^n$ in the presence of some unknown error $e_0$ in the hub measurement if and only if for every non-zero vector $\bfw$ such that $A'\bfw=0$, and for every set $T \subseteq \{1, ..., n\}$ with $|T| \leq k$, it holds that
% \begin{equation}\nonumber
% \|\bfw_T\|_1 + |w_{n+1}| \leq \|\bfw_{T^c}\|_1,
% \end{equation}
% where $T^c=\{1,...,n\}\backslash T$.
% \end{lemma}

We consider one special construction of matrix $A^{m \times |S|}$ for a complete graph. % with $|S|$ nodes.
$A$ has `1' on every entry in the last row, and takes value `1' and `0' with equal probability independently for every other entry. $A'=[A \quad -\bm1_m]$, let $\hat{A}$ be the submatrix of the first $m-1$ rows of $A'$. Let $\bfy=\bfz-\hat{z}_0 \bm 1_m$, and let $\hat{\bfy}$ denote the first $m-1$ entries of $\bfy$.  We have,
\begin{equation}\nonumber %\label{eqn:recover11}
 (2\hat{A}-W^{(m-1)\times |S|})\bfx' = 2\hat{\bfy}-y_m.
\end{equation}
We recover $\bfx'$ by solving the  $\ell_1$-minimization problem,
\begin{equation}\label{eqn:recover1}
\min \|\bfx\|_1, \quad \textrm {s.t. } (2\hat{A}-W^{(m-1)\times |S|})\bfx = 2\hat{\bfy}-y_m.
\end{equation}
%where $\|\bfx\|_1=\sum_i |x_i|$. %We have

 \begin{theorem}\label{thm:recovery}
 With the above construction of $A$, when $m \geq C(k+1)\log |S|$ for some constant $C>0$ and $|S|$ is large enough, with probability at least $1-O(|S|^{-\alpha})$ for some constant $\alpha>0$, $\bfx'$ is the unique solution to (\ref{eqn:recover1}) for all $k+1$-sparse vectors $\bfx'$ in $\mathcal{R}^{|S|+1}$.
 \end{theorem}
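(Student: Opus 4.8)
The plan is to recast the recovery problem as a standard compressed sensing problem for a matrix whose columns are almost all independent random signs, and then invoke the restricted isometry property (RIP). First I would make the reduction explicit. Writing $A_0$ for the $(m-1)\times|S|$ Bernoulli block (the first $m-1$ rows of $A$), a direct computation of $2\hat{\bfy}-y_m\bm 1_{m-1}$ from $\hat{\bfy}=A_0\bfx_S-e_0\bm 1_{m-1}$ and $y_m=\bm 1^T\bfx_S-e_0$ shows that the left-hand side of (\ref{eqn:recover1}) equals
\[
\big[\,(2A_0-W)\ \big|\ -\bm 1_{m-1}\,\big]\,\bfx'.
\]
Thus the effective sensing matrix $2\hat{A}-W$ has its first $|S|$ columns equal to independent Rademacher vectors (each entry $2\,\mathrm{Bernoulli}(1/2)-1=\pm1$ with equal probability), while its last column is a deterministic multiple of $\bm 1_{m-1}$ (the precise scalar is immaterial, as an $\ell_1$-minimizer is unaffected by rescaling a single column, here the $e_0$ coordinate). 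Since $\bfx_S$ is $k$-sparse and $e_0$ adds one more coordinate, $\bfx'$ is $(k+1)$-sparse, so it suffices to show that $\ell_1$-minimization with this matrix uniformly recovers all $(k+1)$-sparse vectors.

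The main tool is to establish that $M:=2\hat{A}-W$, rescaled by $1/\sqrt{m-1}$, satisfies the RIP of order $2(k+1)$ with a small enough constant, since RIP of this order is known to imply the null space property and hence exact, unique $\ell_1$-recovery of every $(k+1)$-sparse vector (see e.g.\ \cite{CaT06,BGIKS08}; the null space characterization is the one already used in (\ref{eqn:nullspace})). For any support $T$ drawn only from the $|S|$ random columns, the textbook argument applies verbatim: for fixed $x$, $\|M_Tx\|_2^2$ concentrates around $(m-1)\|x\|_2^2$ with a sub-Gaussian tail $2e^{-c\delta^2(m-1)}$; an $\varepsilon$-net of the unit sphere in the $s$-dimensional coordinate subspace (of size $(3/\varepsilon)^s$ with $s=2(k+1)$) upgrades this to the whole subspace, and a union bound over the $\binom{|S|}{s}$ supports succeeds once $m-1\geq Cs\log(|S|/s)$.

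The one nonstandard point — and the step I expect to be the crux — is the deterministic last column $-\bm 1_{m-1}$, which blocks a direct appeal to the i.i.d.\ random-matrix theorems. To treat a support $T$ that includes it, I would split $T=T'\cup\{\text{last}\}$ and expand
\[
\|M_Tx\|_2^2=\|M_{T'}x_{T'}\|_2^2-2x_{\text{last}}\langle M_{T'}x_{T'},\bm 1\rangle+x_{\text{last}}^2(m-1).
\]
The first term concentrates around $(m-1)\|x_{T'}\|_2^2$ by the random-column analysis, the third term is exact, and only the cross term is new. Here $\langle M_{T'}x_{T'},\bm 1\rangle=\sum_i(M_{T'}x_{T'})_i$ is a sum of $m-1$ independent, mean-zero entries, each of variance $\|x_{T'}\|_2^2$ and bounded by $\sqrt{s}\,\|x_{T'}\|_2$, so a Bernstein/Hoeffding bound gives $|\langle M_{T'}x_{T'},\bm 1\rangle|\lesssim \sqrt{(m-1)s}\,\|x_{T'}\|_2$ off an event of probability $e^{-c'(m-1)}$. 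Hence the cross term is $O\big(\sqrt{(m-1)}\,\|x\|_2^2\big)$, a vanishing $O((m-1)^{-1/2})$ fraction of the target $(m-1)\|x\|_2^2$, and folding it into the net-and-union-bound argument (now with $N=|S|+1$, which leaves the logarithm unchanged) yields the RIP of order $2(k+1)$ for $m-1\geq C(k+1)\log\!\big((|S|+1)/(k+1)\big)$ off an event of probability $O(|S|^{-\alpha})$.

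Finally, invoking the RIP-implies-$\ell_1$-recovery theorem shows that on this high-probability event $\bfx'$ is the unique minimizer of (\ref{eqn:recover1}) simultaneously for every $(k+1)$-sparse $\bfx'$, which is exactly the claim. The genuinely new work is confined to the cross-term estimate for the all-ones column; everything else is the standard RIP argument with $k$ replaced by $k+1$ to absorb the extra unknown $e_0$.
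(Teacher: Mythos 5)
Your outer skeleton is the same as the paper's: reduce (\ref{eqn:recover1}) to uniform $\ell_1$-recovery under a matrix whose first $|S|$ columns are i.i.d.\ Rademacher and whose last column is a deterministic multiple of $\bm 1_{m-1}$, establish RIP of order $2(k+1)$ for that matrix, and invoke the RIP-implies-$\ell_1$ theorem of \cite{CaT06,FL09}. Where you genuinely diverge is the crux step, the RIP itself (the paper's Lemma~\ref{lem:rip}). The paper avoids all new concentration analysis: starting from a fully i.i.d.\ $\pm 1/\sqrt{p}$ matrix $\Phi'$, it flips the sign of every row whose last entry is positive, and observes that (i) the resulting matrix has exactly the law of the structured matrix (the random entries stay i.i.d.\ Rademacher, the last column becomes constant), and (ii) row sign flips leave every Gram matrix ${\Phi'_U}^{T}\Phi'_U$ unchanged, so the Cand\`es--Tao RIP for i.i.d.\ matrices transfers verbatim. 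Your plan instead reruns the net-plus-union-bound argument and treats the all-ones column by hand; this is more labor, but it is also more robust — it would apply to any fixed last column with controlled column sums, whereas the sign-flip trick exploits the special $\pm\bm 1$ symmetry.

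There is, however, a real inconsistency in the quantitative core of your cross-term estimate. The quantity $\langle M_{T'}x_{T'},\bm 1\rangle$ is a sum of $m-1$ independent terms, each sub-Gaussian with variance proxy $\|x_{T'}\|_2^2$, so its tail is $2\exp\bigl(-t^2/(2(m-1)\|x_{T'}\|_2^2)\bigr)$. At your threshold $t\asymp\sqrt{(m-1)s}\,\|x_{T'}\|_2$ this gives failure probability $e^{-cs}$, \emph{not} $e^{-c'(m-1)}$, and $e^{-cs}$ does not survive the union bound over the $\binom{|S|}{s}(3/\varepsilon)^s$ (support, net point) pairs, which requires an exponent of order $s\log(|S|/s)$, i.e.\ of order $m$. (Hoeffding/Bernstein with the almost-sure bound $\sqrt{s}\,\|x_{T'}\|_2$ is also too weak: it caps the exponent at order $(m-1)/\sqrt{s}$, which fails once $k$ grows with $|S|$.) The repair is to take $t=\delta(m-1)\|x_{T'}\|_2$: the pure sub-Gaussian tail then gives failure probability $2e^{-\delta^2(m-1)/2}$, which folds into the union bound exactly like the diagonal terms, and the cross term is at most $2|x_{\mathrm{last}}|\,\delta(m-1)\|x_{T'}\|_2\le\delta(m-1)\|x\|_2^2$, as needed. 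An even simpler route is to union bound over only the $|S|$ random columns to get $\max_j|\langle \mathrm{col}_j,\bm 1\rangle|=O\bigl(\sqrt{(m-1)\log|S|}\bigr)$, whence $\|M_{T'}^{T}\bm 1\|_2\le \sqrt{s}\cdot O\bigl(\sqrt{(m-1)\log|S|}\bigr)\le\delta(m-1)$ once $m\ge C(k+1)\log|S|/\delta^2$. With either correction your argument is complete. One last nit: rescaling a column does \emph{not} leave the $\ell_1$-minimizer unaffected (it turns the objective into a weighted $\ell_1$ norm); the clean statement is that the constant can be absorbed into the unknown $e_0$, which preserves $(k+1)$-sparsity and hence the recovery claim.
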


Theorem \ref{thm:recovery} indicates that even though the hub measurement is erroneous, one can still identify $k$-sparse vectors associated with  $S$  with $O((k+1)\log |S|)$ measurements.

%  The recovery performance also varies for different measurement construction methods, here we consider the random measurement construction for complete graphs in which that every node is included in a measurement independently with probability 0.5, and every measurement is independent of each other.  Mathematically, $P(A_{ij}=1)=0.5$ and $P(A_{ij}=0)=0.5$ independently for every $i$ and $j$. Let the number of the randomly chosen measurements be $m=O(k\log n)$, and we will choose the scaling constant later. %We assume that $A$ has one row with all `1's. If this is not the case,
% We also add one row to $A$ with all `1's, and it only increases the number of measurements by one. For such a measurement matrix $A$ for complete graphs, we have the following result.
%
% \begin{theorem}\label{thm:recovery}
% Given $n$ nodes that can be measured freely via one hub, if the measurement matrix $A^{(m+1)\times n}$ with $m=O(k\log n)$ for a complete graph has one row of all `1's, and every other entry independently takes value `1' or `0' with equal probability, then with probability at least $1-O(n^{-\alpha})$ for some constant $\alpha>0$, $\ell_1$-minimization can successfully recover all $k$-sparse vectors in $\mathcal{R}^n$ even if the hub measurement is erroneous.
% \end{theorem}
The proof of Theorem \ref{thm:recovery} relies heavily on Lemma \ref{lem:rip}. % so we first state it as follows.

\begin{lemma}\label{lem:rip}
If matrix $\Phi^{p \times n}$ takes value $-1/\sqrt{p}$ on every entry in the last column and takes value $\pm 1/\sqrt{p}$ with equal probability independently on every other entry, then for any $\delta>0$, there exists some constant $C$ such that when $p\geq C(k+1) \log n$ and $n$ is large enough, with probability at least $1-O(n^{-\alpha})$ for some constant $\alpha>0$ it holds that for every   $U\subseteq \{1,...,n\}$ with $|U|\leq 2k+2$ and for every   $\bfx \in \mathcal{R}^{2k+2}$,
\begin{equation}\label{eqn:rip}
(1-\delta)\|\bfx\|^2_2 \leq \|\Phi_U \bfx \|_2^2 \leq (1+\delta)\|\bfx\|_2^2,
\end{equation}
where $\Phi_U$ is the submatrix of $\Phi$ with column indices in $U$.
%holds simultaneously.
\end{lemma}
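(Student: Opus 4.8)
The plan is to recognize Lemma \ref{lem:rip} as a restricted isometry property (RIP) of order $2k+2$ with constant $\delta$ for the matrix $\Phi$, and to establish it by the standard three-step recipe: (i) a concentration-of-measure bound showing that for each \emph{fixed} vector the squared output norm is tightly concentrated around the squared input norm, (ii) a covering ($\eta$-net) argument that upgrades the pointwise bound to a uniform bound over one fixed support $U$, and (iii) a union bound over all $\binom{n}{2k+2}$ supports. The one nonstandard feature is that the last column of $\Phi$ is \emph{deterministic} (every entry equals $-1/\sqrt p$), so off-the-shelf RIP theorems for i.i.d.\ sub-Gaussian matrices do not apply verbatim; the heart of the argument is to check that the concentration bound survives this single deterministic column.

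For step (i), fix $U$ and a unit vector $\bfx$ supported on $U$. Writing $\Phi_{ij}=\epsilon_{ij}/\sqrt p$ with $\epsilon_{ij}$ i.i.d.\ Rademacher for $j\neq n$ and $\epsilon_{in}=-1$ for the deterministic column, the $i$-th coordinate of $\Phi_U\bfx$ equals $\frac{1}{\sqrt p}(Y_i-x_n)$, where $Y_i=\sum_{j\in U,\,j\neq n}x_j\epsilon_{ij}$. The $Y_i$ are i.i.d.\ across rows with $\mathbb{E}[Y_i]=0$ and $\mathbb{E}[Y_i^2]=\sum_{j\neq n}x_j^2$, so a direct computation gives
\begin{equation}\nonumber
\mathbb{E}\,\|\Phi_U\bfx\|_2^2=\mathbb{E}(Y_1-x_n)^2=\sum_{j\neq n}x_j^2+x_n^2=\|\bfx\|_2^2,
\end{equation}
i.e.\ the deterministic column only shifts each row's inner product by the fixed amount $x_n$ and does not bias the expectation. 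Since each $Y_i-x_n$ is a bounded shifted Rademacher sum, hence sub-Gaussian with an $O(1)$ parameter for unit $\bfx$, its square is sub-exponential, and $\|\Phi_U\bfx\|_2^2$ is an average of $p$ i.i.d.\ sub-exponential variables. Bernstein's inequality then yields $\Pr[\,|\,\|\Phi_U\bfx\|_2^2-\|\bfx\|_2^2\,|\ge\epsilon\|\bfx\|_2^2\,]\le 2e^{-p\,c_0(\epsilon)}$ for a constant $c_0(\epsilon)>0$, which is exactly the concentration bound used in the fully i.i.d.\ case.

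Given this bound, steps (ii) and (iii) are routine. For a fixed $U$ with $|U|=s\le 2k+2$, take an $\eta$-net $\mathcal N$ of the unit sphere of $\mathcal R^{U}$ with $|\mathcal N|\le(3/\eta)^s$, apply the concentration bound at level $\delta/2$ to each net point, and union bound, so that all net points are $(\delta/2)$-isometrically embedded with probability at least $1-2(3/\eta)^s e^{-p\,c_0(\delta/2)}$. A standard net-to-sphere argument (bounding the operator norm $M=\max_{\|\bfx\|_2=1}\|\Phi_U\bfx\|_2$ via $M\le\sqrt{1+\delta/2}+M\eta$ and choosing $\eta$ small) promotes this to the two-sided bound (\ref{eqn:rip}) for \emph{every} $\bfx$ supported on $U$, with $\delta$ in place of $\delta/2$. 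Finally, union-bounding over the at most $n^{2k+2}$ choices of $U$ gives a total failure probability at most $n^{2k+2}\cdot 2(3/\eta)^{2k+2}e^{-p\,c_0(\delta/2)}$; taking logarithms shows this is $O(n^{-\alpha})$ once $p\ge C(k+1)\log n$ with $C=C(\delta)$ large enough to dominate the $(2k+2)\log n$ and $\alpha\log n$ terms, which is precisely the claimed sample complexity.

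The main obstacle is entirely in step (i): because $\Phi$ is not an i.i.d.\ matrix, one must verify that the single deterministic column neither biases the mean of $\|\Phi_U\bfx\|_2^2$ nor destroys its exponential concentration. Both are confirmed by the computation above---the deterministic column enters only as the fixed shift $x_n$ in each (still independent) row, leaving the rows i.i.d.\ and the sub-exponential Bernstein machinery intact. Once this is in hand, the rest is the classical RIP proof of Baraniuk--Davenport--DeVore--Wakin.
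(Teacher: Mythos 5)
Your proof is correct, but it takes a genuinely different route from the paper's. The paper never redoes the RIP machinery: it starts from a fully i.i.d.\ $\pm 1/\sqrt{p}$ matrix $\Phi'$, for which RIP with the stated parameters is simply quoted from Cand\`es--Tao, and then flips the sign of every row of $\Phi'$ whose last entry is $+1/\sqrt{p}$. This row-flipping map has two key properties: (a) the resulting matrix $\hat{\Phi}$ has exactly the same distribution as the $\Phi$ of the lemma, because conditioned on the last column the remaining entries are symmetric i.i.d.\ signs; and (b) it leaves every Gram matrix unchanged, since $\hat{\Phi}=D\Phi'$ with $D$ diagonal $\pm 1$, so ${\hat{\Phi}_U}^{T}\hat{\Phi}_U={\Phi'_U}^{T}\Phi'_U$ for every $U$. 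Since $\|\Phi_U\bfx\|_2^2=\bfx^T{\Phi_U}^T\Phi_U\bfx$ depends only on the Gram matrix, RIP transfers verbatim from $\Phi'$ to $\hat{\Phi}$, hence in distribution to $\Phi$. You instead re-prove RIP from scratch in the Baraniuk--Davenport--DeVore--Wakin style: per-vector concentration (where your check that the deterministic column enters only as a fixed shift $x_n$ in each row, leaving the rows i.i.d., the mean exactly $\|\bfx\|_2^2$, and the summands sub-exponential, is the essential and correct point), then an $\eta$-net and a union bound over supports. Both arguments are sound. The paper's symmetry trick is much shorter and inherits the constants of the quoted theorem for free, but it hinges on the deterministic column being a $\pm$-symmetric image of a random one; your argument is self-contained and would extend to deterministic column perturbations that admit no such symmetry, at the price of reproducing the standard concentration-plus-net machinery.
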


\begin{proof}
Consider  matrix $\Phi'^{p\times n}$ with each entry taking value  $\pm 1/\sqrt{p}$ with equal probability independently. For every realization of matrix $\Phi'$, construct a matrix $\hat{\Phi}$ as follows. For every $i \in \{1,...,p\}$ such that $\Phi'_{in}=1/\sqrt{p}$, let $\hat{\Phi}_{ij}=-\Phi'_{ij}$ for all $j=1,...,n$. Let $\hat{\Phi}_{ij}=\Phi'_{ij}$ for every other entry. One can check that $\hat{\Phi}$ and $\Phi$ follow the same probability distribution. Besides, according to the construction of $\hat{\Phi}$, for any subset $U\subseteq \{1,...,n\}$,
\begin{equation}\label{eqn:flip}
{\Phi'_U}^{T} \Phi'_U={\hat{\Phi}_U}^{T}\hat{\Phi}_U.
\end{equation}

The Restricted Isometry Property \cite{CaT05} indicates that the statement in Lemma \ref{lem:rip} holds for $\Phi'$. %  states that for any $\delta>0$, if $p\geq C(k+1) \log n$ for some constant $C$ and $n$ is large enough, then with probability at least $1-O(n^{-\alpha})$ for some constant $\alpha>0$ such that for every set $U\subseteq \{1,...,n\}$ with $|T|\leq 2k+2$ and for every vector $\bfx \in \mathcal{R}^{2k+2}$,
%\begin{equation}\label{eqn:phi'}
%(1-\delta)\|\bfx\|^2_2 \leq \|\Phi'_U \bfx \|_2^2 \leq (1+\delta)\|\bfx\|_2^2
%\end{equation}
%holds simultaneously.
 From (\ref{eqn:flip}), and the fact that $\|\Phi'_U \bfx \|_2^2 =\bfx^T {\Phi'_U}^{T}\Phi'_U \bfx $,   the statement also holds for $\hat{\Phi}$. %still holds if we replace $\Phi'_U$ with $\hat{\Phi}_U$ in (\ref{eqn:phi'}).
 Since $\hat{\Phi}$ and $\Phi$ follow the same probability distribution, the lemma follows.
\end{proof}

\begin{proof}%{(of Theorem \ref{thm:recovery})}
(of Theorem \ref{thm:recovery}) From Lemma \ref{lem:rip}, when $m \geq C(k+1) \log |S|$ for some $C>0$ and $|S|$ is large enough, with probability at least $1-O(|S|^{-\alpha})$, matrix $(2\hat{A}-W^{(m-1)\times |S|})/\sqrt{m-1}$ satisfies (\ref{eqn:rip}) for some small enough $\delta$, say $\delta<\sqrt{2}-1$. Then from \cite{CaT06,FL09}, %existing compressed sensing theory,
(\ref{eqn:recover1}) can recover all $k+1$-sparse vectors correctly.
\end{proof}

%As the number of nodes  in a  network goes to infinity, if the number of groups of nodes that can be measured together via some hub remains constant, and the number of nodes in each group goes to infinity, (one example of such network is $\mathcal{G}^4$,) then by applying
%Theorem \ref{thm:recovery} with a simple union bound, we know that with high probability $\ell_1$-minimization can successfully recover all the sparse vectors even if the hub measurements are erroneous, provided that while designing the measurement matrix for a general graph based on hubs, we  randomly generate  matrices with each entry taking value `0' and `1' with equal probability independently as the measurement matrices for complete graphs. %then with high probability, provided that for every group of nodes that are  measured together via some  hub, the number of nodes grows to infinity as $n$ grows to infinity, and the number  of measurements safi 

\section{Simulation} \label{sec:simu}
\begin{figure*}[ht]
\begin{minipage}{2.3in}
%\begin{tabular}{c}
%\begin{center}
\begin{psfrags}
\psfrag{Upper bound of number of measurements}[bl][bl][1.1]{\tiny Upper bound of number of measurements}
\psfrag{Number of measurements}[bl][bl][1.1]{\tiny Number of measurements}
\psfrag{Radius}[bl][bl][1.1]{\tiny Radius}
\psfrag{Number of edges}[cl][l][1.2]{\tiny Number of edges}
\includegraphics[width=1.1\linewidth,height=0.65\linewidth]{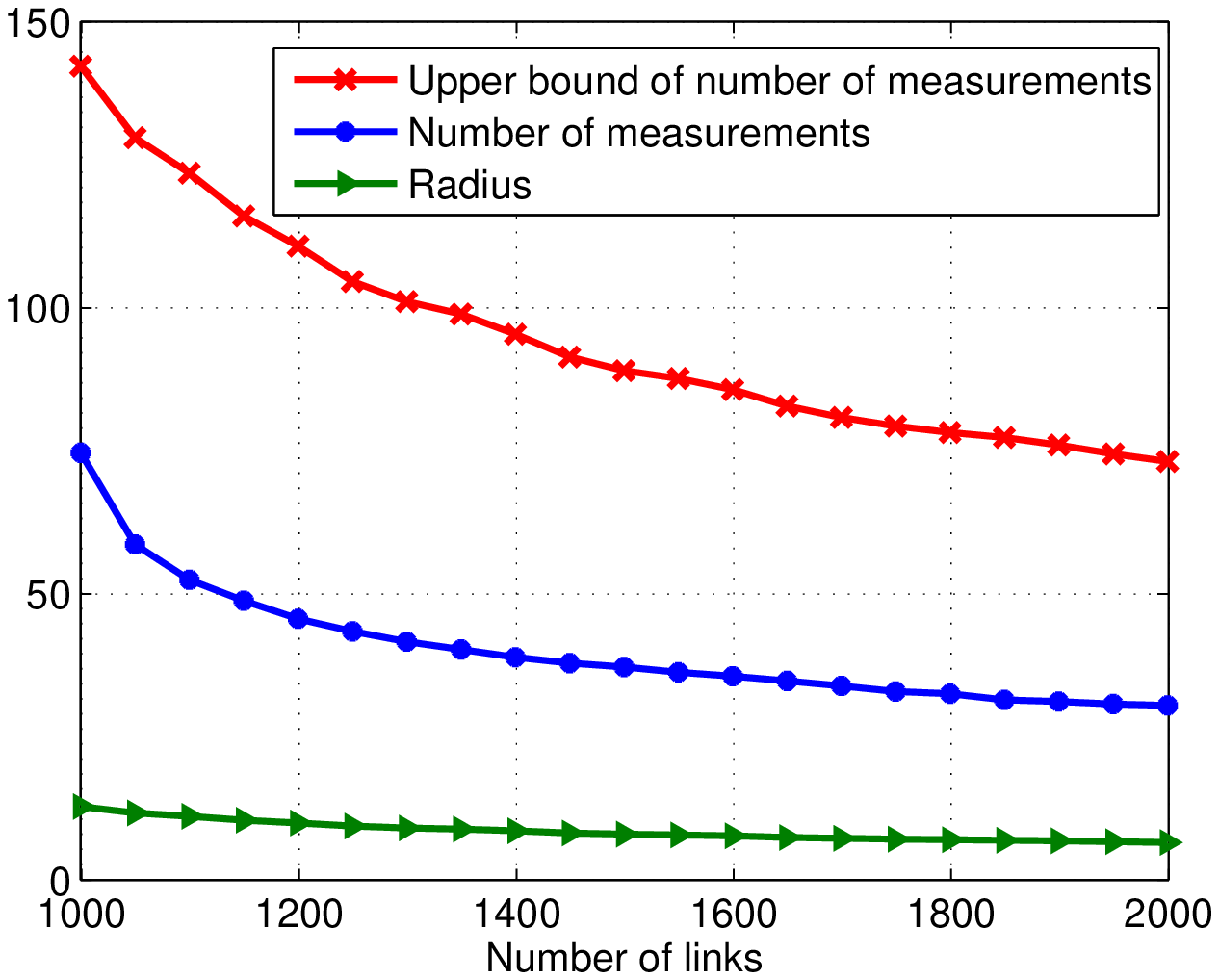}
\end{psfrags}
%\\
\caption{Random graph with $n=1000$}\label{fig:treeplus}
%\end{center}
%\end{tabular}
%\end{figure}
\end{minipage}
%\hfill
\begin{minipage}{2.3in}
%\begin{figure}
%\begin{center}
%\begin{tabular}{c}
\begin{psfrags}
\psfrag{m=1 m=1}[bl][bl][1.1]{\tiny $m=1$}
\psfrag{m=2 m=2}[bl][bl][1.1]{\tiny $m=2$}
\psfrag{m=3 m=3}[bl][bl][1.1]{\tiny $m=3$}
\psfrag{Number of nodes}[bl][l][1.2]{\tiny Number of nodes}
\psfrag{Number of measurements}[cl][l][1.2]{\tiny Number of measurements}
\includegraphics[width=1.1\linewidth,height=0.65\linewidth]{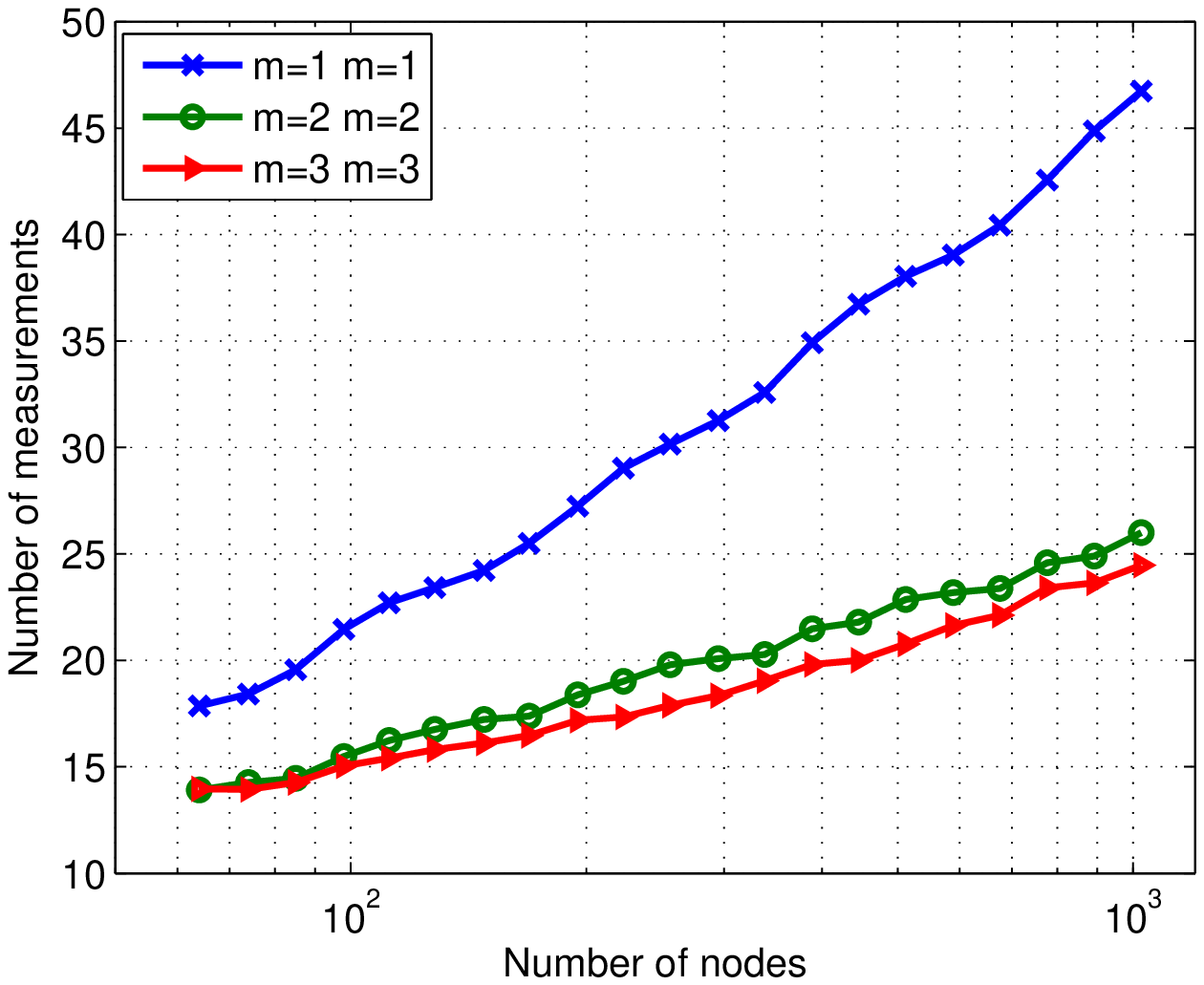}
\end{psfrags} %\\
%\end{figure}
\caption{BA model with increasing $n$}\label{fig:BA}
%\end{center}
%\end{tabular}
\end{minipage}
\begin{minipage}{2.3in}
%\begin{figure}
%\begin{center}
%\begin{tabular}{c}
\begin{psfrags}
\psfrag{ellell, with noise noise}[bl][bl][1.1]{\tiny $\ell_1$, with noise}
\psfrag{ellell, no noise noise}[bl][bl][1.1]{\tiny $\ell_1$, no noise}
\psfrag{Our method, with noise noise}[bl][bl][1.1]{\tiny Our method, with noise}
\psfrag{Our method, no noise noise}[bl][bl][1.1]{\tiny Our method, no noise}
\psfrag{Support size of the vectors}[cl][l][1.2]{\tiny Support size of the vectors}
\psfrag{xrxo2}[bc][l][1]{\tiny $\|\bfx_{\textrm{r}}-\bfx_0\|_2/\|\bfx_0\|_2$}
\includegraphics[width=1.1\linewidth,height=0.65\linewidth]{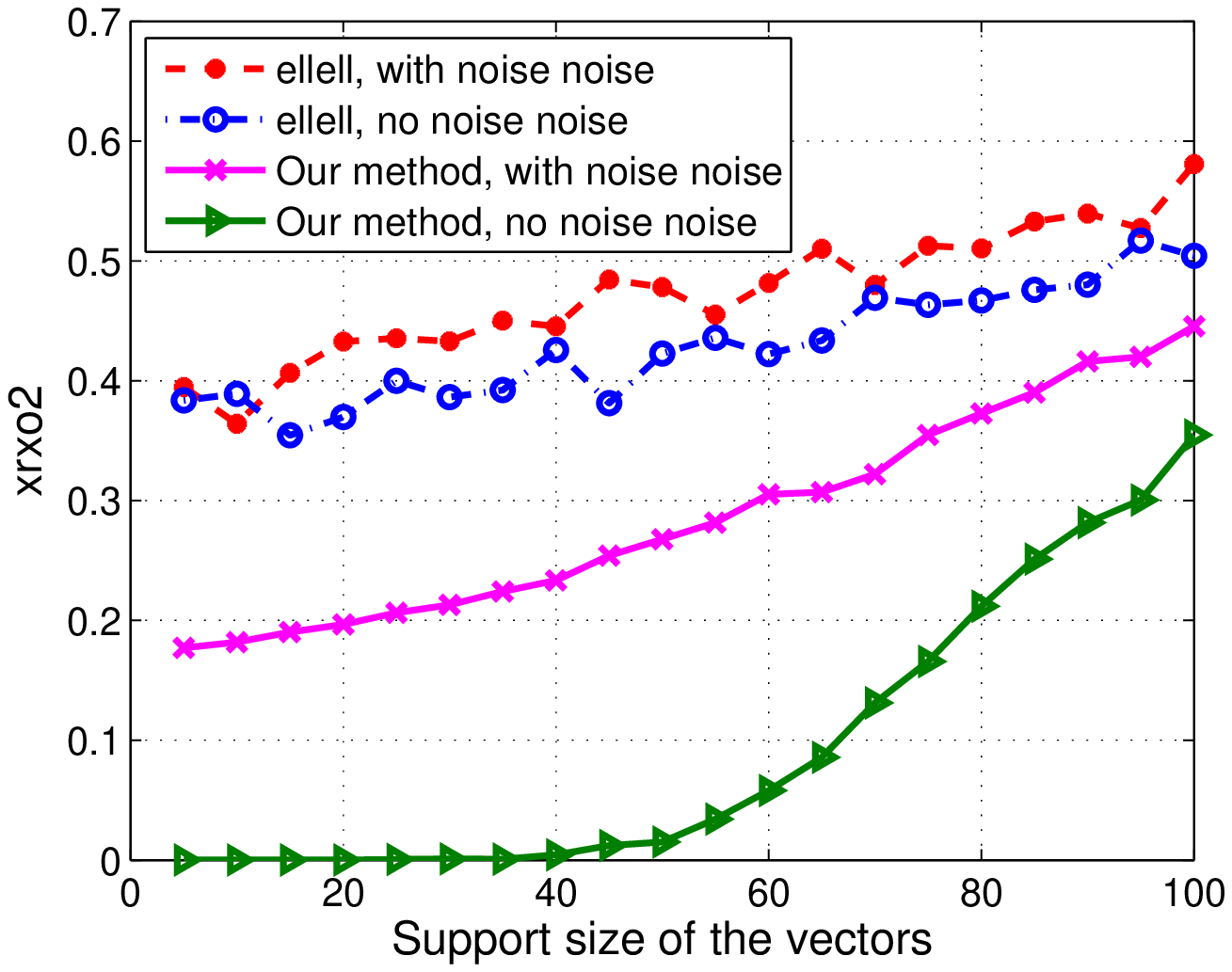}
\end{psfrags} %\\
%\end{figure}
\caption{Recovery performance with hub errors}\label{fig:simunoise}
%\end{center}
%\end{tabular}
\end{minipage}
\end{figure*}

\noindent\textbf{{Experiment 1 (Effectiveness of Algorithm 1):}} Given a graph $G$, we apply Algorithm 1 to  divide the nodes into groups such that each group (except the last one) can be measured   via some hub. % nodes in the same group (except the last group) can be measured freely via some hub that induces a connected subgraph. And
The last group   contains one node and can be measured directly.  %since in complete graphs,
 It is know that $M^C_{1,n}=\lceil \log (n+1) \rceil$, and the corresponding measurement matrix has the binary expansion of integer $i$ as column $i$ \cite{Dorfman43}. Also from (\ref{eqn:MC}) the number of measurements required to recovery $k$-sparse vectors is within a constant times $k M^C_{1,n}$. Therefore, here we design measurements to recover $1$-sparse vectors on $G$ as an example. %$M^C_{k, n} \leq O(k M^C_{1,n})$. % as an example.
%Note that $M^C_{1,n}=\lceil \log (n+1) \rceil$ %the minimum number of measurements needed to recover $1$-sparse vectors associated with a complete graph is $\lceil \log (n+1) \rceil$
%and the corresponding
%Since a matrix with  the binary expansion of interger $i$ as column $i$ of the measurement matrix can identify $1$-sparse vectors in a complete graph \cite{Dorfman43}, we construct such measurements for each group,
%The measurement matrix to recover $1$-sparse vectors in a complete graph has the binary expansion of $i$ as column $i$ \cite{Dorfman43}. Then
The total number of constructed measurements  is $\sum_i^{q-1} \lceil \log (n_i+1)\rceil+q$, where $n_i$ is the number of nodes in group $i$ and $q$ is the total number of groups. %To recover the sparse vectors from measurements, note that for $1$-sparse vectors, the corresponding subvector of measurements of each group equals to either the zero vector or some constant times the vector of the binary expansion of some integer $i$. Thus, one can easily recover the location and the value of the non-zero entry from measurements.

In Fig. \ref{fig:treeplus}, we gradually increase the number of edges in a graph with $n=1000$ nodes.  We start with a uniformly generated random tree, and in each step randomly add $25$ edges to the graph. %that do not already exist. %until the number of edges in the graph is $2n-1$.
All the results are averaged over one hundred realizations. The number of measurements constructed decreases from 73 to 30 when the number of edges increases from $n-1$ to $2n-1$. %Then number of measurements gradually . Note that
%Since here $M^C_{1, n}=\lceil \log (n+1)\rceil =10$, then %the number of measurements needed by a complete graph to recover $1$-sparse vectors is $\lceil \log (n+1)\rceil =10$. Therefore, the number of measurements by Algorithm 1 is always within $O(\log n)$, and when the average node degree is close to 4,
Note that the number of measurements is already within $3 M^C_{1, n}$ when the average node degree is close to 4. The radius $R$ of the graph decreases from 13 to 7, and we also plot the upper bound $R\lceil \log n \rceil+R+1$ %of the number of measurement %$R\lceil\log (n+1)\rceil +R+1$
 provided by Theorem \ref{prop:algo}. One can see that %the upper bound in general is quite loose, and
the number of measurements actually constructed is much less than the upper bound.

In Fig. \ref{fig:BA}, we consider the scale-free network with Barab\'asi-Albert (BA) model \cite{BA99} where the graph initially has $m_0$ connected nodes, and each new node connects to $m$ existing nodes with a probability that is proportional to the degree of the existing nodes. We start with a random tree of 10 nodes and increase the total number of nodes from 64 to 1024. Every result is averaged over one hundred realizations. Since the diameter of BA model is $O(\log n/\log \log n))$ \cite{BR04},  then by Theorem \ref{prop:algo}, the number of our constructed measurements is upper bounded by $O(\log^2 n/\log\log n ))$. %As the mixing time of BA model is $O(\log n)$ \cite{MPS06}, methods in \cite{CKMS12} and \cite{XMT11} require $O(\log^3 n)$ random measurements. %One can see that the number of measurements constructed is proportional to $\log n$, and decreases when $m$ increases.

\vspace{0.1in}

\noindent\textbf{{Experiment 2 (Recovery Performance with Hub Error):}} %We also simulate the sparse recovery performance with our constructed matrix.
%Compressed sensing theory indicates that if $A$ is a random 0-1 matrix, %the measurement matrix $A$ for the complete graph has i.i.d. entries with equal probability in '1' and '0', then
%with overwhelming probability we can recover the sparse vector $\bfx_0$ though $\ell_1$-minimization \cite{CaT06}. %which returns the vector with the least $\ell_1$-norm among all the vectors $\bfx$ such that $A\bfx=A\bfx_0$.
We generate a graph with $n=500$ nodes from BA model. Algorithm 1 divides nodes into four groups with 375, 122, 2 and 1 node respectively. %and finds a hub for each group.
For each of the first two groups with size $n_i$ ($i=1,2$), we generate $\lceil n_i/2\rceil$ random measurements each  measuring a random subset of the group together with its hub. Every node of the group is included in the random subset independently with probability 0.5. We also measure the two hubs directly. %or groups with size less than 150, we measure the nodes one by one. The total number of measurements is . In decoding,
Each of the three nodes in the next two groups is measured directly by one measurement. The generated matrix $A$ is 254 by 500. We generate a sparse vector $\bfx_0$ with i.i.d. zero-mean Gaussian entries on a randomly chosen support, and normalize $\|\bfx_0\|_2$ to 1.

To recover $\bfx_0$ from $\bfy=A\bfx_0$, one can run the widely used $\ell_1$-minimization \cite{CaT06} to recover the subvectors associated with the first two groups, and the last three entries of $\bfx_0$ can be obtained from measurements directly. However, as discussed in Section \ref{sec:huberror}, %note that every measurement for the first two groups passes through its hub, then
an error in a hub measurement degrades the recovery accuracy of subvectors associated with that group.
% every measurement for the group of nodes using this hub.
To address this issue, we   use a modified $\ell_1$-minimization in which the errors in the two hubs are treated as entries of an augmented vector to recover. Specifically, let the augmented vector $\bfz=[\bfx_0^T ,e_1, e_2]^T$ and the augmented matrix $\tilde{A}=[A  \  \bm\beta \  \bm \gamma]$, where $e_1$ (or $e_2$) denotes the error in the measurement of the first (second) hub, and the column vector $\bm \beta$ (or $\bm \gamma$) has  `1' in the row corresponding to the measurement of the first (or second) hub and `0' elsewhere. We then recover $\bfz$ (and thus $\bfx_0$) from $\bfy=\tilde{A}\bfz$ by running $\ell_1$-minimization on each group separately. %and let $\bfx_r$ denote the first $n$ entries of the recovered vector.

Fig. \ref{fig:simunoise} compares the recovery performance of our modified $\ell_1$-minimization and the conventional $\ell_1$-minimization, where the hub errors $e_1$ and $e_2$ are drawn from standard Gaussian distribution %$\mathcal{N}(0,1)$
with zero mean and unit variance. For every support size $k$, we randomly generate two hundred $k$-sparse vectors $\bfx_0$, and let $\bfx_{\textrm{r}}$ denote the recovered vector. Even with the hub errors, the average $\|\bfx_{\textrm{r}}-\bfx_0\|_2/\|\bfx_0\|_2$ is within $10^{-6}$ when $\bfx_0$ is at most 35-sparse by our method, while by $\ell_1$-minimization, the value is at least 0.35. We also consider the case that besides errors in  hub measurements,  every other measurement  has i.i.d.   Gaussian noise with zero mean and variance $0.04^2$. % Let $\bfw$ denote the noise vector and $\|\bfw\|_2$ is normalized to 2.
 The average $\|\bfx_{\textrm{r}}-\bfx_0\|_2/\|\bfx_0\|_2$ here is smaller with our method than that with $\ell_1$-minimization. %in the sense that the average $\|\bfx_r-\bfx_0\|_2$ is smaller in our method.

\section{Conclusion}\label{sec:con}

This paper addresses the sparse recovery problem with graph constraints. %. Instead of random arguments, we
%For any given graph,
We provide explicit measurement constructions for special graphs, and propose measurement design algorithms for general graphs. Our construction for a line network is optimal in terms of the number of measurements needed. The constructions on other graphs also improve over the existing results. We characterize the relationship between the number of measurements for sparse recovery and the graph topology. We also derive upper and lower bounds of the minimum number of measurements needed for sparse recovery on a given graph. It would be interesting to tighten such bounds, especially the lower bounds.
%By providing explicit measurement constructions for different graphs, we derive upper bounds of the minimum number of measurements needed to recover vectors up to certain sparsity. It would be interesting to explore corresponding tight lower bounds. %Further efforts are also needed to empirically evaluate the performance of different recovery themes, especially when the measurements are noisy.

%We need to remark that this paper is only the first step towards network measurement constructions with topological constraints, and several practical concerns have not been taken into account yet. For instance,
We have not considered the effect of the measurement noise. Also, we assume full knowledge of the fixed network topology, and  measurement construction when the topology is time-varying or partially known is an open question. %%We also assume that any number of nodes can be measured together as long as they form a connected
\bibliographystyle{IEEEtranS}
%\bibliography{../IEEEabrv,../ref,../MengWangPub}
%\bibliography{IEEEabrv,ref,MengWangPub}

\begin{IEEEbiography}
%[{\includegraphics[width=1in,height=1.25in,clip,keepaspectratio]{figures/MengWang.jpg}}]
{Meng Wang} (S'06-M'12) received   B.E. (Hon.) and M.S. (Hon.) from Tsinghua University, Beijing, China, and Ph. D. from Cornell University, Ithaca, NY, in
2005, 2007, and 2012, all in
Electrical Engineering. %, respectively.

She is   currently an Assistant Professor in the Department of Electrical, Computer, and Systems Engineering at Rensselaer Polytechnic Institute. Her research interests include communication
networks, signal processing, and nonconvex optimization and its
applications.

Dr. Wang is a recipient of Jacobs Fellowship of Cornell University in 2008 and 2010.
\end{IEEEbiography}

\begin{IEEEbiography}
%[{\includegraphics[width=1in,height=1.25in,clip,keepaspectratio]{figures/WeiyuXu.eps}}]
{Weiyu Xu} received his B.E.  in Information Engineering from
Beijing University of Posts and Telecommunications in 2002, and a
M.S. degree in Electronic Engineering from Tsinghua University in
2005. He received a M.S.  and a Ph.D. degree in Electrical Engineering in 2006 and 2009   from California Institute of
Technology, with a minor in Applied and Computational Mathematics.

He is currently an Assistant Professor at the Department of Electrical
and Computer Engineering at the University of Iowa.  His
research interests are in signal processing, compressive sensing,
communication networks, information and coding theory.

Dr. Xu is a recipient of the Information Science and Technology
Fellowship at Caltech,  and the
recipient of Charles and Ellen Wilts doctoral research award  in 2010.
\end{IEEEbiography}

\begin{IEEEbiography}
%[{\includegraphics[width=1in,height=1.25in,clip,keepaspectratio]{figures/Enrique.png}}]
{Enrique Mallada} (S'09) received his Ingeniero en Telecomunicaciones degree from Universidad ORT, Uruguay, in 2005.  From 2006 to 2008 he was
teaching and research assistant in the Department of Telecommunications at Universidad ORT. He is currently  pursuing the Ph.D. degree in electrical and computer engineering at Cornell University, Ithaca, NY. His research interests include communication networks, control, nonlinear dynamics and optimization.
\end{IEEEbiography}
\begin{IEEEbiography}
%[{\includegraphics[width=1in,height=1.25in,clip,keepaspectratio]{figures/KevinTang.jpg}}]
{Ao Tang} (S'01-M'07-SM'11) received the B.E. (Honors.) in electronics
engineering from Tsinghua University, Beijing, China, and the M.S. and Ph.D.
degrees in electrical engineering with a minor in applied and computational
mathematics from the California Institute of Technology, Pasadena, CA, in 1999, 2002, and 2006, respectively.

He is currently an Assistant Professor in the School of Electrical
and Computer Engineering at Cornell University, where he conducts
research on the control and optimization of engineering networks including
communication networks, power networks and on-chip networks.

Dr. Tang was a recipient of the Cornell Engineering School Michael Tien' 72 Excellence in Teaching Award in 2011, and the Young Investigator Award from the Airforce Office of Scientific Research (AFOSR) in 2012.
\end{IEEEbiography}

\end{document}